\newtheorem{theorem}{Theorem}
\newtheorem{lemma}{Lemma}
\newtheorem{corollary}{Corollary}
\newtheorem*{observation}{Observation}
\title{Difference Rewards Policy Gradients}
\author{Jacopo Castellini\\
Dept. of Computer Science\\
University of Liverpool\\
\texttt{J.Castellini@liverpool.ac.uk}\\
\And
Sam Devlin\\
Microsoft Research Cambridge\\
\texttt{Sam.Devlin@microsoft.com}\\
\And
Frans A. Oliehoek\\
Interactive Intelligence Group\\
Delft University of Technology\\
\texttt{F.A.Oliehoek@tudelft.nl}\\
\And
Rahul Savani\\
Dept. of Computer Science\\
University of Liverpool\\
\texttt{rahul.savani@liverpool.ac.uk}}
\date{}
\begin{document}
\maketitle

\begin{abstract}
Policy gradient methods have become one of the most popular classes of algorithms for multi-agent reinforcement learning. A key challenge, however, that is not addressed by many of these methods is multi-agent credit assignment: assessing an agent's contribution to the overall performance, which is crucial for learning good policies. We propose a novel algorithm called Dr.Reinforce that explicitly tackles this by combining difference rewards with policy gradients to allow for learning decentralized policies when the reward function is known. By differencing the reward function directly, Dr.Reinforce avoids difficulties associated with learning the Q-function as done by Counterfactual Multiagent Policy Gradients (COMA), a state-of-the-art difference rewards method. For applications where the reward function is unknown, we show the effectiveness of a version of Dr.Reinforce that learns an additional reward network that is used to estimate the difference rewards.
\end{abstract}

\keywords{Multi-Agent Reinforcement Learning, Policy Gradients, Difference Rewards, Multi-Agent Credit Assignment, Reward Learning}

\section{Introduction}
Many real-world problems, like air traffic management \citep{airflow}, packet routing in sensor networks \citep{sensors}, and traffic light control \citep{traffic}, can be naturally modelled as \emph{cooperative multi-agent systems} \citep{mas}. Here multiple agents must learn to work together to achieve a common goal. Such problems have commonly been approached with \emph{multi-agent reinforcement learning} (MARL) \citep{marl,planning,cooperative}, including recently with \emph{deep reinforcement learning}. Often in these settings agents have to behave in a \emph{decentralized fashion} \citep{independent}, relying only on local perceptions, due to the prohibitive complexity of a centralized solution or because communication is too expensive \citep{mmdp,decpomdp}.

The paradigm of \emph{centralized training with decentralized execution} (CTDE) \citep{planning,ctde} deals with this: agents use global information during training, but then only rely on local sensing during execution. In such settings, policy gradient methods are amongst the few methods with convergence guarantees \citep{mapg}, and multi-agent policy gradient (MAPG) methods have become one of the most popular approaches for the CTDE paradigm \citep{coma,maddpg}.

However, one key problem that agents face with CTDE that is not directly tackled by many MAPG methods is \emph{multi-agent credit assignment} \citep{local,credit,objective,coin}. With a shared reward signal, an agent cannot readily tell how its own actions affect the overall performance. This can lead to sub-optimal policies even with just a few agents. \emph{Difference rewards} \citep{aristocrat,difference,reward1,reward2} were proposed to tackle this problem: agents learn from a shaped reward that allows them to infer how their actions contributed to the shared reward.

Only one MAPG method has incorporated this idea so far: Counterfactual Multiagent Policy Gradients (COMA) \citep{coma} is a state-of-the-art algorithm that does the differencing with a learned action-value function $Q_{\omega}(s,a)$. However, there are potential disadvantages to this approach: learning the $Q$-function is a difficult problem due to compounding factors of bootstrapping, the moving target problem (as target values used in the update rule change over time), and $Q$'s dependence on the joint actions. This makes the approach difficult to apply with more than a few agents. Moreover, COMA is not exploiting knowledge about the reward function, even though this might be available in many MARL problems.

To overcome these potential difficulties, we take inspiration from \citep{reward2} and incorporate the \emph{differencing of the reward function} into MAPG. Extending the work in \citep{ea} with additional results and analysis, we propose \emph{difference rewards REINFORCE} (Dr.Reinforce), a new MARL algorithm that combines decentralized policies learned with policy gradients with difference rewards that are used to provide gradients with information on each agent's individual contribution to overall performance. Additionally, we provide a version, called Dr.ReinforceR, for settings where the reward function is not known upfront. In contrast to \citep{reward2}, Dr.ReinforceR exploits the CTDE paradigm and learns a centralized reward network to estimate difference rewards. Although the dimensionality of the reward function is the same as the $Q$-function, and similarly depends on joint actions, learning the reward function is a simple regression problem. It does not suffer from the moving target problem, which allows for faster training and improved performance. Our empirical results show that our approaches can significantly outperform other MAPG methods, particularly with more agents.

\section{Background}
Here we introduce some notions about multi-agent systems and policy gradients used to understand the remainder of this work.

\subsection{Multi-Agent Reinforcement Learning}
Our setting can be formalized as a multi-agent Markov decision process (MMDP) \citep{mmdp} $\mathcal{M}=\langle D,S,\{A^i\}_{i=1}^{\vert D\vert },T,R,\gamma\rangle$, where $D=\{1,\ldots,N\}$ is the set of agents; $s\in S$ is the state; $a^i\in A^i$ is the action taken by agent $i$ and $a=\langle a^1,\ldots,a^N\rangle\in\times_{i=1}^{\vert D\vert }A^i=A$ denotes the joint action; $T(s'\vert a,s):S\times A\times S\rightarrow [0,1]$ is the transition function that determines the probability of ending up in state $s'$ from $s$ under joint action $a$; $R(s,a):S\times A\rightarrow \mathbb{R}$ is the shared reward function and $\gamma$ is the discount factor.

Agent $i$ selects actions using a stochastic policy $\pi_{\theta^i}(a^i\vert s):S\times A^i\rightarrow [0,1]$ with parameters $\theta^i$, with $\theta=\langle \theta^1,\ldots,\theta^N\rangle$ and $\pi_{\theta}=\langle\pi_{\theta^1},\ldots,\pi_{\theta^N}\rangle$ denoting the joint parameters and policy respectively. With $r_t$ denoting the reward at time $t$, and expectations taken over sequences of executions, the policy $\pi_{\theta}$ induces the value-functions $V^{\pi_{\theta}}(s_t)=\mathbb{E}_{\pi_{\theta}}\left[\sum_{l=0}^\infty\gamma^lr_{t+l}\vert s_t\right]$ and action-value function $Q^{\pi_{\theta}}(s_t,a_t)=\mathbb{E}_{\pi_{\theta}}\left[\sum_{l=0}^\infty\gamma^lr_{t+l}\vert s_t,a_t\right]$. At each time step $t$, the agents try to maximize the value-function $V^{\pi_{\theta}}(s_t)$.

\subsection{REINFORCE and Actor-Critic}
In single-agent reinforcement learning \citep{rl,survey}, \emph{policy gradient} methods (PG) \citep{pg} aims to maximize the expected value-function $V^{\pi_{\theta}}(s_t)$ by directly optimizing the policy parameters $\theta$. These methods perform gradient ascent in the direction that maximizes the expected parametrized value-function $V(\theta)=\mathbb{E}_{s_0}\left[V^{\pi_{\theta}}(s_0)\right]$. The simplest policy gradient method is REINFORCE \citep{reinforce}, which is a Monte-Carlo algorithm, executing the current policy $\pi_{\theta}$ for an entire episode of $T$ steps and then optimizing it with the following update:

\begin{equation*}
\theta\leftarrow\theta+\alpha\underbrace{\sum_{t=0}^{T-1}\gamma^tG_t\nabla_{\theta}\log\pi_{\theta}(a_t\vert s_t)}_{\hat{g}},
\end{equation*}
where the return $G_t=\sum_{l=0}^{T-t-1}\gamma^lr_{t+l}$ is an unbiased estimate of $V^{\pi_{\theta}}(s_t)$ computed over the episode. This update rule corresponds to performing stochastic gradient ascent \citep{sgd} on $V(\theta)$ because the expectation of the update target is the gradient of the value-function, $\mathbb{E}_{\pi_{\theta}}\left[\hat{g}\right]=\nabla_{\theta}V(\theta)$. Under appropriate choices of step sizes $\alpha$ the method will converge \citep{pg}.

REINFORCE suffers from the high variance of the sampled returns because of the stochasticity of environment and agent policy itself, and thus converges slowly. To reduce such variance, a suitable baseline $b(s)$ can be subtracted from the return $G_t$ \citep{rl}.

Another possibility to overcome such problem are \emph{actor-critic} methods \citep{ac,a3c}, that try to do so by learning an additional component called the critic. The critic is parametrized by $\omega$ and represents either the value or action-value function. It is learned along with the policy $\pi_{\theta}$ to minimize the on-policy \emph{temporal-difference (TD-)error} at each time step $t$, which for a critic that represents the $Q$-function is:

\begin{equation}
\delta_t=r_t+\gamma Q_{\omega}(s_{t+1},a_{t+1})-Q_{\omega}(s_t,a_t).
\label{eq:td}
\end{equation}

The policy is then optimized using the estimates given by the critic:

\begin{equation}
\theta\leftarrow\theta+\alpha\sum_{t=0}^{T-1}Q_{\omega}(s_t,a_t)\nabla_{\theta}\log\pi_{\theta}(a_t\vert s_t).
\label{eq:pg}
\end{equation}

As for REINFORCE, a baseline $b(s)$ can be subtracted from the critic estimate in Equation \eqref{eq:pg} to further reduce variance. If $b(s)=V(s)$, then $A(s,a)=Q_{\omega}(s,a)-V(s)$ is called the \emph{advantage function} and is a used in many actor-critic methods \citep{a3c}.

In cooperative MARL, each agent $i$ can individually learn a decentralized policy by using the \emph{distributed policy gradient} \citep{mapg} update target for $\pi_{\theta^i}$:

\begin{equation}
\theta^i\leftarrow\theta^i+\alpha\underbrace{\sum_{t=0}^{T-1}\gamma^tG_t\nabla_{\theta^i}\log\pi_{\theta^i}(a^i_t\vert s_t)}_{\hat{g}^i},
\label{eq:mapg}
\end{equation}
where $a^i$ is this agent's action and $G_t$ is the return computed with the shared reward and is identical for all agents.

\subsection{Difference Rewards}
In settings where the reward signal is shared, agents cannot easily determine their individual contribution to the reward, a problem known as multi-agent credit assignment. It can be tackled with difference rewards \citep{aristocrat,difference}. Instead of using the shared reward $R(s,a)$, agents compute a shaped reward:

\begin{equation}
\Delta R^i(a^i\vert s,a^{-i})=R(s,a)-R(s,\langle a^{-i},c^i\rangle),
\label{eq:dr}
\end{equation}
where $a^{-i}$ is the joint action all agents except $i$ and $c^i$ is a \emph{default action} for agent $i$ used to replace $a^i$. This way, an agent can assess its own contribution, and therefore each action that improves $\Delta R^i$ also improves the global reward $R(s,a)$ \citep{visualizing}. This however requires access to the complete reward function, or the use of a resettable simulator to estimate $R(s,\langle a^{-i},c^i\rangle)$. Moreover, the choice of the default action can be problematic. The \emph{aristocrat utility} \citep{aristocrat} avoids this choice by marginalizing out an agent by computing its expected contribution to the reward given its current policy $\pi_{\theta^i}$:

\begin{equation}
\Delta R^i(a^i\vert s,a^{-i})=R(s,a)-\mathbb{E}_{b^i\sim\pi_{\theta^i}}\left[R(s,\langle a^{-i},b^i\rangle)\right].
\label{eq:aristocrat}
\end{equation}

The work of \citep{reward2} learns a local approximation of the reward function $R_{\psi^i}(s,a^i)$ for each agent $i$, and uses it to compute the difference rewards of Equation \eqref{eq:dr}, by fixing a default action $c^i$, as:

\begin{equation*}
\Delta R_{\psi^i}^i(a^i\vert s)=R(s,a)-R_{\psi^i}(s,c^i).
\end{equation*}

Counterfactual Multiagent Policy Gradients (COMA) \citep{coma} is a state-of-the-art deep MAPG algorithm that adapts difference rewards and aristocrat utility to use the $Q$-function, approximated by a centralized critic $Q_{\omega}(s,a)$ learned under the CTDE paradigm (as the algorithm is designed for general partially observable multi-agent domains \citep{decpomdp}, where agents cannot access the environment state $s_t$), by providing the policy gradients of the agents with a counterfactual advantage function:

\begin{equation*}
A^i(s,a)=Q_{\omega}(s,a)-\sum_{c^i\in A^i}\pi_{\theta^i}(c^i\vert h^i_t)Q_{\omega}(s,\langle a^{-i},c^i\rangle).
\end{equation*}

\section{Difference Rewards Policy Gradients}
COMA learns a centralized action-value function critic $Q_{\omega}(s,a)$ to do the differencing and drive agents' policy gradients. However, learning such a critic using the TD-error in Equation \eqref{eq:td} presents a series of challenges that may dramatically hinder final performance if they are not carefully tackled. The $Q$-value updates rely on bootstrapping that can lead to inaccurate updates. Moreover, the target values for these updates are constantly changing because the other estimates used to compute them are also updated, leading to a moving target problem. This is exacerbated when function approximation is used, as these estimates can be indirectly modified by the updates of other $Q$-values. Target networks are used to try and tackle this problem \citep{dqn}, but these require careful tuning of additional parameters and may slow down convergence with more agents.

Our proposed algorithm, named Dr.Reinforce, combines the REINFORCE \citep{reinforce} policy gradient method with a difference rewards mechanism to deal with credit assignment in cooperative multi-agent systems, thus avoiding the need of learning a critic.

\subsection{Dr.Reinforce}
If the reward function $R(s,a)$ is known, we can directly use difference rewards with policy gradients. We define the \emph{difference return} $\Delta G^i_t$ for agent $i$ as the discounted sum of the difference rewards $\Delta R^i(a_t^i\vert s_t,a_t^{-i})$ from time step $t$ onward as:

\begin{equation}
\Delta G^i_t(a_{t:T}^i\vert s_{t:T},a_{t:T}^{-i})\triangleq\sum_{l=0}^{T-t-1}\gamma^l\Delta R^i(a_{t+l}^i\vert s_{t+l},a_{t+l}^{-i}),
\label{eq:diff_return}
\end{equation}
where $T$ is the length of the sampled trajectory and $\Delta R^i(a_t^i\vert s_t,a_t^{-i})$ is the difference rewards for agent $i$, computed using the aristocrat utility \citep{aristocrat} as in Equation \eqref{eq:aristocrat}. Please note that the subscript $t:T$ in our notation is a shorthand used to identify the sequence of values of given quantity from time step $t$ up to (but not including) time step $T$.

To learn the decentralized policies $\pi_{\theta}$, we follow a modified version of the distributed policy gradients in Equation \eqref{eq:mapg} that uses our difference return, optimizing each policy by using the update target:

\begin{equation}
\theta^i\leftarrow\theta^i+\alpha\underbrace{\sum_{t=0}^{T-1}\gamma^t\Delta G^i_t(a_{t:T}^i\vert s_{t:T},a_{t:T}^{-i})\nabla_{\theta^i}\log\pi_{\theta^i}(a^i_t\vert s_t)}_{g^{DR,i}},
\label{eq:drpg}
\end{equation}
where $\Delta G^i_t$ is the difference return defined in Equation \eqref{eq:diff_return}. This way, each policy is guided by an update that takes into account its individual contribution to the shared reward, and an agent thus takes into account the real value of its own actions. We expect this signal to drive the policies towards regions in which individual contributions are higher, and thus also the shared reward, since a sequence of actions improving $\Delta G^i_t$ also improves the global return \citep{visualizing}.

\subsection{Online Reward Estimation}
In many settings, complete access to the reward function to compute the difference rewards is not available. Thus, we propose Dr.ReinforceR, which is similar to Dr.Reinforce but additionally learns a \emph{centralized reward network} $R_{\psi}$, with parameters $\psi$, that is used to estimate the value $R(s,\langle a^i,a^{-i}\rangle)$ for every local action $a^i\in A^i$ for agent $i$. Following the CTDE paradigm, this centralized network is only used during training to provide policies with learning signals, and is not needed during execution, when only the decentralized policies are used. The reward network receives as input the environment state $s_t$ and the joint action of the agents $a_t$ at time $t$, and is trained to reproduce the corresponding reward value $r_t\sim R(s_t,a_t)$ by minimizing a standard MSE regression loss:

\begin{equation}
\mathcal{L}_t(\psi)=\frac{1}{2}\left(r_t-R_{\psi}(s_t,a_t)\right)^2.
\label{eq:r}
\end{equation}

Although the dimensionality of the function $R(s,a)$ that we are learning with the reward network is the same as that of $Q(s,a)$ learned by the COMA critic, growing exponentially with the number of agents as both depend of the joint action $a\in A=\times_{i=1}^{\vert D\vert }A^i$, learning $R_{\psi}$ is a regression problem that does not involve bootstrapping or moving targets, thus avoiding many of the problems faced with an action-value function critic. Moreover, alternative representations of the reward function can be used to further improve learning speed and accuracy, e.g., by using factorizations \citep{factored}.

We can now use the learned $R_{\psi}$ to compute the difference rewards $\Delta R_{\psi}^i$ using the aristocrat utility \citep{aristocrat} as:

\begin{equation}
\Delta R_{\psi}^i(a^i_t\vert s_t,a_t^{-i})\triangleq r_t-\sum_{c^i\in A^i}\pi_{\theta^i}(c^i\vert s_t)R_{\psi}(s_t,\langle c^i,a^{-i}_t\rangle).
\label{eq:ar_learned}
\end{equation}

The second term of the r.h.s.\ of Equation \eqref{eq:ar_learned} can be estimated with a number of network evaluations that is linear in the size of the local action set $A^i$, as the actions of the other agents $a^{-i}_t$ remains fixed, avoiding an exponential cost.

We now redefine the difference return $\Delta G^i_t$ from Equation \eqref{eq:diff_return} as the discounted sum of the estimated difference rewards $\Delta R_{\psi}^i(a_{t+l}^i\vert s_{t+l},a_{t+l}^{-i})$:

\begin{equation}
\Delta G^i_t(a_{t:T}^i\vert s_{t:T},a_{t:T}^{-i})\triangleq\sum_{l=0}^{T-t-1}\gamma^l\Delta R_{\psi}^i(a_{t+l}^i\vert s_{t+l},a_{t+l}^{-i}).
\label{eq:diff_return_learned}
\end{equation}

\subsection{Theoretical Results}
\label{sec:theory}
\allowdisplaybreaks
REINFORCE \citep{reinforce} suffers from high variance of gradients estimates because of sample estimation of the return. This can be accentuated in the multi-agent setting. Using an unbiased baseline is crucial to reducing this variance and improving learning \citep{variance,rl}. Here we address these concerns by showing that using difference rewards in policy gradient methods corresponds to subtracting an unbiased baseline from the policy gradient of each individual agent. Since this unbiased baseline does not alter the expected value of the update targets, applying difference rewards policy gradients to a common-reward MARL problem turns out to be same in expectation as using distributed policy gradients update targets. Such gradients' updates have been shown to be equivalent to those of a joint gradient \citep{mapg}, which under some technical conditions is known to converge to a local optimum \citep{pg,ac}.

\begin{lemma}
In a MMDP, using difference return $\Delta G^i_t(a_{t:T}^i\vert s_{t:T},a_{t:T}^{-i})$ as the learning signal for policy gradients in Equation \eqref{eq:drpg} is equivalent to subtracting an unbiased baseline $B^i(s_{t:T},a^{-i}_{t:T})$ from the distributed policy gradients in Equation \eqref{eq:mapg}.
\label{thm:lemma}
\end{lemma}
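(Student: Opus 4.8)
The plan is to prove the two halves of the statement separately: an algebraic rewriting that exposes the baseline, and then a zero-expectation argument for unbiasedness. First I would substitute the aristocrat-utility form (Eq.~\ref{eq:aristocrat}) into the difference return (Eq.~\ref{eq:dgt}). Writing $b^i_{t+l}=\sum_{c\in A^i}\pi_{\theta^i}(c\mid s_{t+l})R(s_{t+l},\langle a^{-i}_{t+l},c\rangle)$ for the marginalized reward at step $t+l$, each difference reward splits as $R(s_{t+l},a_{t+l})-b^i_{t+l}$, so that
\[
\Delta G^i_t = G_t - b^i(s_{t:t+T},a^{-i}_{t:t+T}),\qquad b^i(s_{t:t+T},a^{-i}_{t:t+T})\triangleq\sum_{l=0}^T\gamma^l b^i_{t+l}.
\]
Substituting into the Dr.Reinforce target (Eq.~\ref{eq:drpg}) gives $g^{DR,i}_t=g^i_t-\gamma^t b^i(s_{t:t+T},a^{-i}_{t:t+T})\nabla_{\theta^i}\log\pi_{\theta^i}(a^i_t\mid s_t)$, i.e.\ exactly the distributed gradient of Eq.~\ref{eq:mapg} with the claimed baseline subtracted. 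This half is pure bookkeeping.

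The substantive half is \emph{unbiasedness}: showing $\mathbb{E}_{\pi_\theta}\big[b^i(s_{t:t+T},a^{-i}_{t:t+T})\nabla_{\theta^i}\log\pi_{\theta^i}(a^i_t\mid s_t)\big]=0$, so that the expected update is unchanged. The tools are the score-function identity $\mathbb{E}_{a^i\sim\pi_{\theta^i}}[\nabla_{\theta^i}\log\pi_{\theta^i}(a^i\mid s)]=\nabla_{\theta^i}\sum_{a^i}\pi_{\theta^i}(a^i\mid s)=0$ and the fact that in a multi-agent MDP the agents sample independently given the state, so $a^i_t\perp a^{-i}_t\mid s_t$. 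For the current-step component ($l=0$) I would condition on $(s_t,a^{-i}_t)$: since $b^i_t$ is a function of $(s_t,a^{-i}_t)$ alone and is independent of $a^i_t$ given $s_t$, it factors out of the expectation over $a^i_t$, and the residual $\mathbb{E}_{a^i_t}[\nabla_{\theta^i}\log\pi_{\theta^i}(a^i_t\mid s_t)\mid s_t]$ vanishes. This is the same mechanism that makes COMA's counterfactual baseline unbiased.

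The \emph{main obstacle} is the future components $l\ge1$: the factor $b^i_{t+l}$ depends on $s_{t+l}$ and $a^{-i}_{t+l}$, which lie downstream of the action $a^i_t$ whose log-probability multiplies it, so $b^i_{t+l}$ is in general correlated with $a^i_t$ and cannot simply be pulled out of the $a^i_t$-expectation. To handle this I would condition on the history $\mathcal{H}_t$ through time $t$ and attempt to show $\mathbb{E}[b^i_{t+l}\mid\mathcal{H}_t]$ is independent of $a^i_t$ --- equivalently, that marginalizing agent $i$'s action inside the reward strips any dependence of the expected future baseline on agent $i$'s earlier choice. This is precisely where the lemma's correctness is decided: a naive term-by-term argument does not obviously close, since the cross terms $\mathbb{E}[b^i_{t+l}\nabla_{\theta^i}\log\pi_{\theta^i}(a^i_t\mid s_t)]$ with $l\ge1$ need not vanish in isolation. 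I would therefore seek a causality/reward-to-go reorganization that instead pairs each $b^i_{t+l}$ with the same-step score $\nabla_{\theta^i}\log\pi_{\theta^i}(a^i_{t+l}\mid s_{t+l})$, where the $l=0$ argument applies directly, and I would scrutinize the validity of this reorganization most carefully before concluding that the full baseline is unbiased.
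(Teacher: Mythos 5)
Your first half---substituting the aristocrat utility of Equation \ref{eq:aristocrat} into Equation \ref{eq:dgt} to obtain $\Delta G^i_t = G_t - b^i(s_{t:t+T},a^{-i}_{t:t+T})$ with the baseline defined as the discounted sum of the marginalized rewards---is exactly the paper's decomposition in Equations \ref{eq:diff_ret2} and \ref{eq:baseline2}, and your handling of the $l=0$ component via the score-function identity is the same mechanism the paper invokes. Where you and the paper part ways is on the future components. The paper's derivation (the chain ending at Equation \ref{eq:derivation}) expands the expectation only over $s_t$, $a^{-i}_t$ and $a^i_t$, treats the entire quantity $b^i(s_{t:t+T},a^{-i}_{t:t+T})$ as constant with respect to $a^i_t$, and pulls $\nabla_{\theta^i}$ through $\sum_{a^i}\pi_{\theta^i}(a^i|s_t)$ to conclude the whole baseline integrates to zero; that is, it applies the $l=0$ argument to all $T+1$ terms at once and never confronts the point you single out as the main obstacle.

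That obstacle is real, and flagging it is the most valuable part of your proposal. For $l\ge 1$ the factor $b^i_{t+l}=\sum_{c\in A^i}\pi_{\theta^i}(c|s_{t+l})R(s_{t+l},\langle a^{-i}_{t+l},c\rangle)$ is precisely the conditional expectation of $r_{t+l}$ given the trajectory up to $(s_{t+l},a^{-i}_{t+l})$, because agent $i$ samples $a^i_{t+l}$ from $\pi_{\theta^i}(\cdot|s_{t+l})$ independently of everything else given the state. Since $\nabla_{\theta^i}\log\pi_{\theta^i}(a^i_t|s_t)$ is measurable with respect to that history, the tower property gives $\mathbb{E}_{\pi_{\theta}}\left[\nabla_{\theta^i}\log\pi_{\theta^i}(a^i_t|s_t)\,b^i_{t+l}\right]=\mathbb{E}_{\pi_{\theta}}\left[\nabla_{\theta^i}\log\pi_{\theta^i}(a^i_t|s_t)\,r_{t+l}\right]$, which is in general a nonzero piece of the policy gradient. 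So the cross terms do not vanish, no conditioning on $\mathcal{H}_t$ will make them vanish, and a baseline that depends on states downstream of $a^i_t$ is not unbiased in the usual sense. Your fallback of re-pairing each $b^i_{t+l}$ with the same-step score $\nabla_{\theta^i}\log\pi_{\theta^i}(a^i_{t+l}|s_{t+l})$ establishes unbiasedness of a different estimator (a statement about the reorganized cumulative update $\sum_t g^{DR,i}_t$, not about the per-step target in Equation \ref{eq:drpg}), so it cannot simply be swapped in. To close the argument honestly you must either restrict the baseline to its $l=0$ component, which is a function of $(s_t,a^{-i}_t)$ only and for which your argument is complete, or reformulate the claim at the level of the cumulative update and compute what the resulting expected gradient actually is; as written, neither your proposal nor the paper's own derivation validly disposes of the $l\ge 1$ terms.
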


\begin{proof}
We start by rewriting $\Delta G^i_t(a_{t:T}^i\vert s_{t:T},a_{t:T}^{-i})$ from Equation \eqref{eq:diff_return} as:

\begin{equation}
\Delta G^i_t(a_{t:T}^i\vert s_{t:T},a_{t:T}^{-i})=\sum_{l=0}^{T-t-1}\gamma^lr_{t+l}-\sum_{l=0}^{T-t-1}\gamma^l\sum_{c^i\in A^i}\pi_{\theta^i}(c^i\vert h^i_{t+l})R(s_{t+l},\langle c^i,a^{-i}_{t+l}\rangle).
\label{eq:diff_return_div}
\end{equation}

Note that the first term on the r.h.s. of Equation \eqref{eq:diff_return_div} is the return $G_t$ used in Equation \eqref{eq:mapg}. We then define the second term on the r.h.s. of Equation \eqref{eq:diff_return_div} as the baseline $B^i(s_{t:T},a^{-i}_{t:T})$:

\begin{equation}
B^i(s_{t:T},a^{-i}_{t:T})=\sum_{l=0}^{T-t-1}\gamma^l\sum_{c^i\in A^i}\pi_{\theta^i}(c^i\vert s_{t+l})\cdot R(s_{t+l},\langle c^i,a^{-i}_{t+l}\rangle).
\label{eq:baseline_fo}
\end{equation}

We can thus rewrite the total expected update target for agent $i$ as:

\begin{align}
\mathbb{E}_{\pi_{\theta}}\left[\hat{g}^{DR,i}\right] & =\mathbb{E}_{\pi_{\theta}}\left[\sum_{t=0}^{T-1}\left(\nabla_{\theta^i}\log\pi_{\theta^i}(a^i_t\vert s_t)\right)\Delta G^i_t(a_{t:T}^i\vert s_{t:T},a_{t:T}^{-i})\right] \nonumber\\
& =\mathbb{E}_{\pi_{\theta}}\left[\sum_{t=0}^{T-1}\left(\nabla_{\theta^i}\log\pi_{\theta^i}(a^i_t\vert s_t)\right)\left(G_t-B^i(s_{t:T},a^{-i}_{t:T})\right)\right] \nonumber\\
& \text{(by definition of }\Delta G^i_t) \nonumber\\
& =\mathbb{E}_{\pi_{\theta}}\left[\sum_{t=0}^{T-1}\left(\nabla_{\theta^i}\log\pi_{\theta^i}(a^i_t\vert s_t)\right)G_t\right. \nonumber\\
& \left.-\left(\nabla_{\theta^i}\log\pi_{\theta^i}(a^i_t\vert s_t)\right)B^i(s_{t:T},a^{-i}_{t:T})\right] \nonumber\\
& \text{(distributing the product)} \nonumber\\
& =\mathbb{E}_{\pi_{\theta}}\left[\sum_{t=0}^{T-1}\left(\nabla_{\theta^i}\log\pi_{\theta^i}(a^i_t\vert s_t)\right)G_t\right] \nonumber\\
& -\mathbb{E}_{\pi_{\theta}}\left[\sum_{t=0}^{T-1}\left(\nabla_{\theta^i}\log\pi_{\theta^i}(a^i_t\vert s_t)\right)B^i(s_{t:T},a^{-i}_{t:T})\right] \nonumber\\
& \text{(by linearity of the expectation)} \nonumber\\
& =\mathbb{E}_{\pi_{\theta}}\left[\hat{g}^{PG,i}\right]+\mathbb{E}_{\pi_{\theta}}\left[\hat{g}^{B,i}\right]. \label{eq:grad}
\end{align}

We have to show that the baseline is unbiased, and so the expected value of its update $\mathbb{E}_{\pi_{\theta}}\left[\hat{g}^{B,i}\right]$ with respect to the policy $\pi_{\theta}$ is $0$. Let $P^{\pi_{\theta}}_t(s_t)=\sum_{s_{t-1}\in S}P^{\pi_{\theta}}_{t-1}(s_{t-1})\sum_{a_{t-1}\in A}\pi_{\theta}(a_{t-1}\vert s_{t-1})\; T(s_t\vert a_{t-1},s_{t-1})$ be the probability of the state at time step $t$ to be $s_t$ under the joint policy $\pi_{\theta}$ (with $P^{\pi_{\theta}}_0(s_0)=\rho(s_0)$ and $\rho$ is the initial state distribution), we have:

\begin{align}
\mathbb{E}_{\pi_{\theta}}\left[\hat{g}^{B,i}\right] & \;\triangleq-\mathbb{E}_{\pi_{\theta}}\left[\sum_{t=0}^{T-1}\left(\nabla_{\theta^i}\log\pi_{\theta^i}(a^i_t\vert s_t)\right)B^i(s_{t:T},a^{-i}_{t:T})\right] \nonumber\\
& \;=-\sum_{t=0}^{T-1}\sum_{s_t\in S}P^{\pi_{\theta}}_t(s_t)\sum_{a^{-i}_t\in A^{-i}}\pi_{\theta^{-i}}(a^{-i}_t\vert s_t)\sum_{a^i_t\in A^i}\pi_{\theta^i}(a^i_t\vert s_t) \nonumber\\
& \left(\nabla_{\theta^i}\log\pi_{\theta^i}(a^i_t\vert s_t)\right)\sum_{s_{t+1:T},a_{t+1:T}} \nonumber\\
& \prod_{l=1}^{T-t-1}T(s_{t+l}\vert a_{t+l-1},s_{t+l-1})\cdot\pi_{\theta}(a_{t+l}\vert s_{t+l})\; B^i(s_{t:T},a^{-i}_{t:T}) \nonumber\\
& \text{(by expanding the expectation)} \nonumber\\
& \;=-\sum_{t=0}^{T-1}\sum_{s_t\in S}P^{\pi_{\theta}}_t(s_t)\sum_{a^{-i}_t\in A^{-i}}\pi_{\theta^{-i}}(a^{-i}_t\vert s_t)\sum_{a^i_t\in A^i}\left(\nabla_{\theta^i}\pi_{\theta^i}(a^i_t\vert s_t)\right) \nonumber\\
& \sum_{s_{t+1:T},a_{t+1:T}}\prod_{l=1}^{T-t-1}T(s_{t+l}\vert a_{t+l-1},s_{t+l-1})\cdot\pi_{\theta}(a_{t+l}\vert s_{t+l})\; B^i(s_{t:T},a^{-i}_{t:T}) \nonumber\\
& \text{(by applying the inverse log trick)} \nonumber\\
& \;=-\sum_{t=0}^{T-1}\sum_{s_t\in S}P^{\pi_{\theta}}_t(s_t)\sum_{a^{-i}_t\in A^{-i}}\pi_{\theta^{-i}}(a^{-i}_t\vert s_t)\left(\nabla_{\theta^i}\sum_{a^i_t\in A^i}\pi_{\theta^i}(a^i_t\vert s_t)\right) \nonumber\\
& \sum_{s_{t+1:T},a_{t+1:T}}\prod_{l=1}^{T-t-1}T(s_{t+l}\vert a_{t+l-1},s_{t+l-1})\cdot\pi_{\theta}(a_{t+l}\vert s_{t+l})\; B^i(s_{t:T},a^{-i}_{t:T}) \nonumber\\
& \text{(by moving the gradient outside the policy sum)} \nonumber\\
& \;=-\sum_{t=0}^{T-1}\sum_{s_t\in S}P^{\pi_{\theta}}_t(s_t)\sum_{a^{-i}_t\in A^{-i}}\pi_{\theta^{-i}}(a^{-i}_t\vert s_t)\;\nabla_{\theta^i}1 \nonumber\\
& \sum_{s_{t+1:T},a_{t+1:T}}\prod_{l=1}^{T-t-1}T(s_{t+l}\vert a_{t+l-1},s_{t+l-1})\cdot\pi_{\theta}(a_{t+l}\vert s_{t+l})\; B^i(s_{t:T},a^{-i}_{t:T}) \nonumber\\
& \text{(policy probabilities sum up to 1)} \nonumber\\
&\;=0. \label{eq:derivation} 
\end{align}

Therefore, using the baseline in Equation \eqref{eq:baseline_fo} reduces the variance of the updates \citep{variance} but does not change their expected value, as it is unbiased and its expected update target $\mathbb{E}_{\pi_{\theta}}\left[\hat{g}^{B,i}\right]=0$.
\end{proof}

\begin{corollary}
Using the estimated reward network $R_{\psi}$ to compute the baseline in Equation \eqref{eq:baseline_fo} still results in an unbiased baseline.
\end{corollary}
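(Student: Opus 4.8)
The plan is to show that the key algebraic step in the proof of Lemma~\ref{thm:lemma} goes through unchanged when the true reward $R$ is replaced by the learned network $R_{\psi}$. The crucial observation is that the argument establishing $\mathbb{E}_{\pi_{\theta}}\left[g^{b,i}_t\right]=0$ never used any property of $R$ other than that the baseline $b^i(s_{t:t+T},a^{-i}_{t:t+T})$ does \emph{not} depend on agent $i$'s sampled action $a^i_t$. First I would write out the estimated baseline by substituting Equation~\ref{eq:diff_learn} into the definition of the difference return in Equation~\ref{eq:dg}, giving
\begin{equation*}
b_{\psi}^i(s_{t:t+T},a^{-i}_{t:t+T})=\sum_{l=0}^T\gamma^l\sum_{b^i\in A^i}\pi_{\theta^i}(b^i|s_{t+l})R_{\psi}(s_{t+l},\langle b^i,a^{-i}_{t+l}\rangle).
\end{equation*}
This is exactly Equation~\ref{eq:baseline2} with $R$ replaced by $R_{\psi}$, and it retains the defining structural property: the inner sum marginalizes over all of agent $i$'s actions $b^i$, so the expression is a function of the states $s_{t:t+T}$ and the \emph{other} agents' actions $a^{-i}_{t:t+T}$ only, with no dependence on the realized $a^i_t$.

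Next I would replay the derivation culminating in Equation~\ref{eq:derivation} verbatim. Expanding the expectation over $d^{\pi_{\theta}}(s_t)$, $\pi_{\theta^{-i}}(a^{-i}|s_t)$, and $\pi_{\theta^i}(a^i|s_t)$, then applying the log-derivative (log-trick) identity $\pi_{\theta^i}\nabla_{\theta^i}\log\pi_{\theta^i}=\nabla_{\theta^i}\pi_{\theta^i}$, I can pull the gradient $\nabla_{\theta^i}$ outside the sum over $a^i\in A^i$ precisely because $b_{\psi}^i$ carries no $a^i$ dependence. The inner sum $\sum_{a^i\in A^i}\pi_{\theta^i}(a^i|s_t)=1$ is constant, so its gradient vanishes, yielding $\mathbb{E}_{\pi_{\theta}}\left[g^{b,i}_t\right]=0$. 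The entire chain is identical to the lemma's proof with $R\mapsto R_{\psi}$, so no new analytic work is required.

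The one point I would be careful to flag is that this unbiasedness statement is about the baseline term $g^{b,i}_t$ in isolation, and says nothing about whether the full update $g^{DR,i}_t$ computed with $R_{\psi}$ has the same expectation as the true distributed policy gradient. The decomposition in the proof of Lemma~\ref{thm:lemma} splits $\Delta G^i_t$ into the \emph{sampled} return $G_t$ (which uses the real environment reward $r_t$ and is unaffected by the network) plus the baseline; only the baseline involves $R_{\psi}$. Since the baseline is unbiased regardless of what function $R_{\psi}$ represents, the corollary holds for any $R_{\psi}$, learned or not. The main (and only) obstacle is therefore purely expository: making explicit that the baseline property depends solely on marginalizing out $a^i$ and not on the accuracy of $R_{\psi}$, so that approximation error in the reward network does not introduce bias into this term. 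I would state this as a one-line remark to preempt the natural worry that an imperfect $R_{\psi}$ might bias the gradient.
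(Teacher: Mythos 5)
Your proposal is correct and follows essentially the same route as the paper: both define $b_{\psi}^i$ by substituting $R_{\psi}$ for $R$ in the baseline of Equation \ref{eq:baseline2} and observe that the derivation leading to Equation \ref{eq:derivation} goes through unchanged because it never uses any property of the reward function beyond the baseline's independence from agent $i$'s realized action. Your added remark --- that the sampled return $G_t$ uses the true environment reward $r_t$ and only the baseline term involves $R_{\psi}$, so unbiasedness holds for \emph{any} $R_{\psi}$ --- is a helpful clarification the paper leaves implicit, but it is not a different argument.
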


\begin{proof}
We rewrite $\Delta G^i_t(a_{t:T}^i\vert s_{t:T},a_{t:T}^{-i})$ from Equation \eqref{eq:diff_return_learned} as:

\begin{equation}
\Delta G^i_t(a_{t:T}^i\vert s_{t:T},a_{t:T}^{-i})=\sum_{l=0}^{T-t-1}\gamma^lr_{t+l}-\sum_{l=0}^{T-t-1}\gamma^l\sum_{c^i\in A^i}\pi_{\theta^i}(c^i\vert s_{t+l})R_{\psi}(s_{t+l},\langle c^i,a^{-i}_{t+l}\rangle),
\label{eq:diff_return_learned_div}
\end{equation}
for which we define the second term on the r.h.s. of Equation \eqref{eq:diff_return_learned_div} as the baseline $B_{\psi}^i(s_{t:T},a^{-i}_{t:T})$:

\begin{equation*}
B_{\psi}^i(s_{t:T},a^{-i}_{t:T})=\sum_{l=0}^{T-t-1}\gamma^l\sum_{c^i\in A^i}\pi_{\theta^i}(c^i\vert s_{t+l})\cdot R_{\psi}(s_{t+l},\langle c^i,a^{-i}_{t+l}\rangle).
\end{equation*}

We observe that the derivation of Equation \eqref{eq:derivation} still holds, as it is not altered by the use of the reward network $R_{\psi}$ rather than the true reward function $R(s,a)$. Therefore, the baseline $B_{\psi}^i(s_{t:T},a^{-i}_{t:T})$ is again unbiased and does not alter the expected value of the updates.
\end{proof}

\begin{theorem}
In a MMDP with shared rewards, given the conditions on function approximation detailed in \citep{pg}, using Dr.Reinforce update target as in Equation \eqref{eq:drpg}, the series of parameters $\{\theta_t=\langle\theta^1_t,\ldots,\theta^N_t\rangle\}_{t=0}^k$ converges in the limit such that the corresponding joint policy $\pi_{\theta_t}$ is a local optimum:
$$\lim_{k\rightarrow\infty}\inf_{\{\theta_t\}_{t=0}^k}\vert \vert \hat{g}^{DR}\vert \vert =0\qquad w.p.\;1.$$
\end{theorem}

\begin{proof}
To prove convergence, we have to show that:

\begin{equation*}
\mathbb{E}_{\pi_{\theta_t}}\left[\hat{g}^{DR}\right]=\mathbb{E}_{\pi_{\theta_t}}\left[\sum_{i=0}^N\hat{g}^{DR,i}\right]=\nabla_{\theta_t}V(\theta_t).
\end{equation*}

We can rewrite the total expected update target as:

\begin{equation*}
\mathbb{E}_{\pi_{\theta_t}}\left[\hat{g}^{DR,i}\right]=\mathbb{E}_{\pi_{\theta_t}}\left[\hat{g}^{PG,i}\right]+\mathbb{E}_{\pi_{\theta_t}}\left[\hat{g}^{B,i}\right]
\end{equation*}
as in Equation \eqref{eq:grad}, and by Lemma \ref{thm:lemma} we have that $\mathbb{E}_{\pi_{\theta_t}}\left[\hat{g}^{B,i}\right]=0$. Therefore, the overall expected update $\mathbb{E}_{\pi_{\theta_t}}\left[\hat{g}^{DR,i}\right]$ for agent $i$ reduces to $\mathbb{E}_{\pi_{\theta_t}}\left[\hat{g}^{PG,i}\right]$, that is equal to the distributed policy gradient update target in Equation \eqref{eq:mapg}. These updates for all the agents has been proved to be equal to these of a centralized policy gradients agent $\mathbb{E}_{\pi_{\theta_t}}\left[\hat{g}^{PG}\right]$ by Theorem 1 in \citep{mapg}, and therefore converge to a local optimum of $\nabla_{\theta_t}V(\theta_t)$ by Theorem 3 in \citep{pg}.
\end{proof}

\section{Experiments}
We are interested in investigating the following questions:

\begin{enumerate}
\item How does Dr.Reinforce compare to existing approaches?
\item How does the use of a learned reward network $R_{\psi}$ instead of a known reward function affect performance?
\item Is learning the $Q$-function (as in COMA) more difficult than learning the reward function $R(s,a)$ (as in Dr.ReinforceR)?
\end{enumerate}

To investigate these questions, we tested our methods on two gridworld environments with shared reward: the multi-rover domain, an established multi-agent cooperative domain \citep{reward1}, in which agents have to spread across a series of landmarks, and a variant of the classical predator-prey problem with a randomly moving prey \citep{qlearn}.

\begin{figure}[!t]
\centering
\subfloat[Multi-Rover]{
\includegraphics[width=0.3\textwidth]{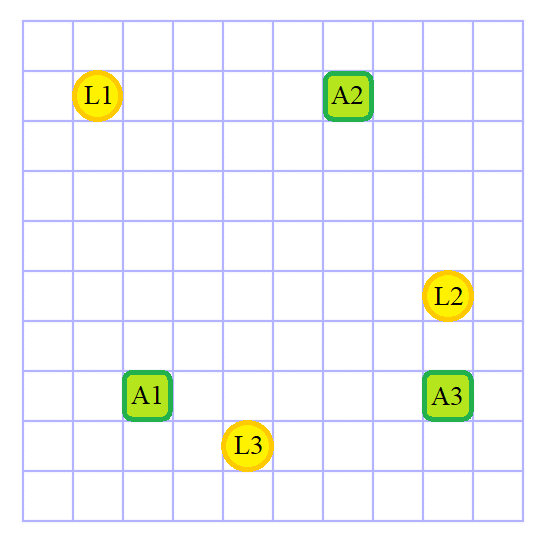}
}
\subfloat[Predator-Prey]{
\includegraphics[width=0.3\textwidth]{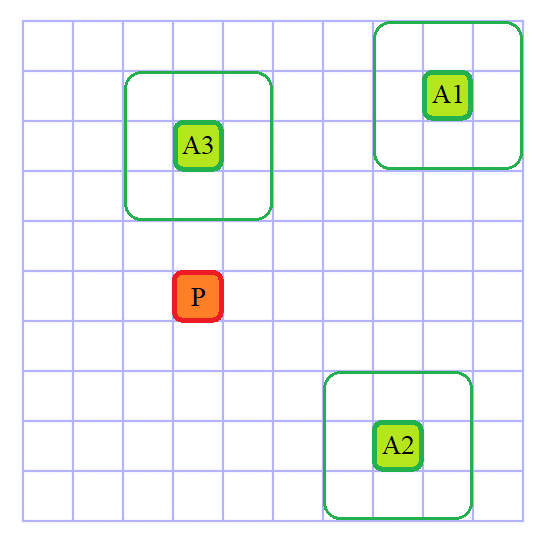}
}
\caption{Schematic representation of the two gridworld domains. Agents are green, landmarks are yellow, and the prey is red.}
\end{figure}

\subsection{Comparison to Baselines}
\label{sec:comparisons}
We compare to a range of other policy gradient methods: independent learners using REINFORCE to assess the benefits of using a difference rewards mechanism, labelled PG. We also compare against a standard actor-critic algorithm \citep{ac} with decentralized actors and a centralized action-value function critic to show that our improvements are not only due to the centralized information provided to the agents during training, denoted as CentralQ here. Our main comparison is with COMA \citep{coma}, a state-of-the-art difference rewards method using the $Q$-function for computing the differences. Finally, we compare against the algorithm proposed in \citep{reward2}, to show the benefit of learning a centralized reward network to estimate the difference rewards in Dr.ReinforceR. Indeed, this algorithm learns an individual approximation of the reward function $R_{\psi^i}(s,a^i)$  for each agent $i$, and uses this in estimating the difference rewards as in Equation \ref{eq:dr} to learn the agents' policies. We adapted this method to use policy gradients instead of evolutionary algorithms to optimise the policies to not conflate the comparisons with the choice of a policy optimizer where possible, and only focus on the effect of using difference rewards during learning. Additionally, the multi-agent A\textsuperscript{*} (MAA\textsuperscript{*}) exact planning algorithm \citep{maastar,decpomdp} has been applied to the smaller instances of the two problems with only $N=3$ agents, as an upper bound for assessing the overall performance of the investigated learning algorithms. Because of the exponentially many joint actions to expand at each state, it has not been possible to apply such an algorithm to larger instances.

\begin{figure}[!t]
\centering
\subfloat[Multi-Rover, $N=3$]{
\includegraphics[width=0.48\textwidth]{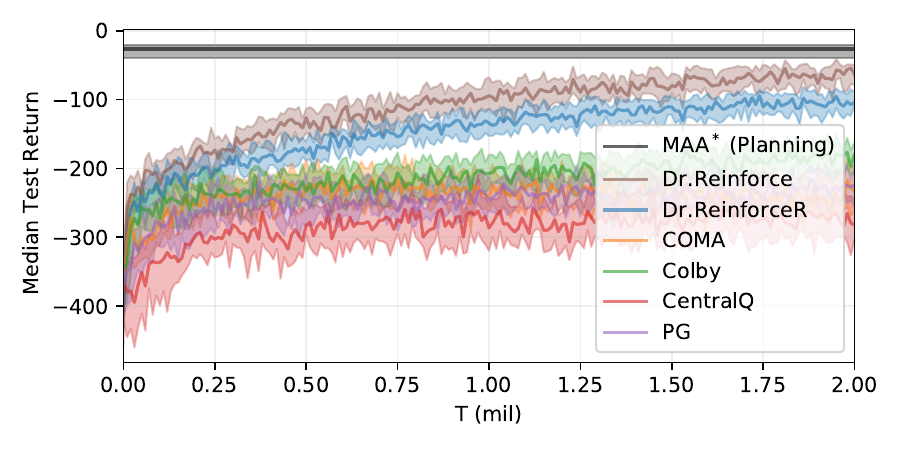}
}
\subfloat[Predator-Prey, $N=3$]{
\includegraphics[width=0.48\textwidth]{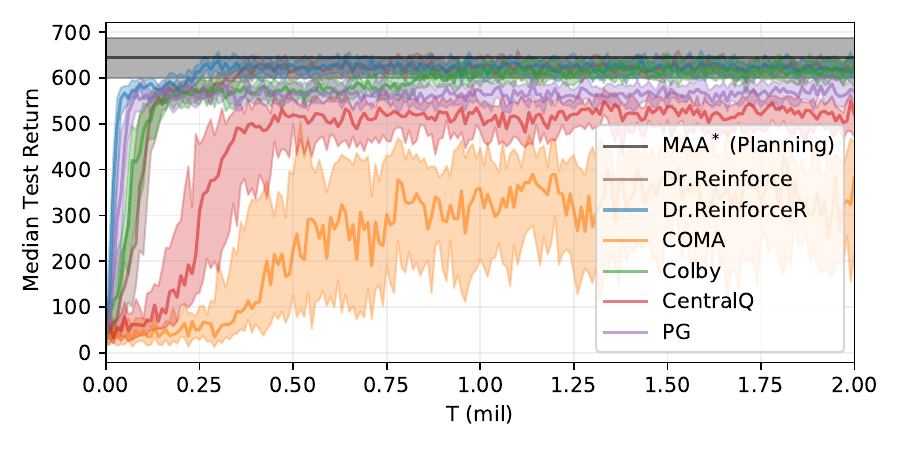}
}

\subfloat[Multi-Rover, $N=5$]{
\includegraphics[width=0.48\textwidth]{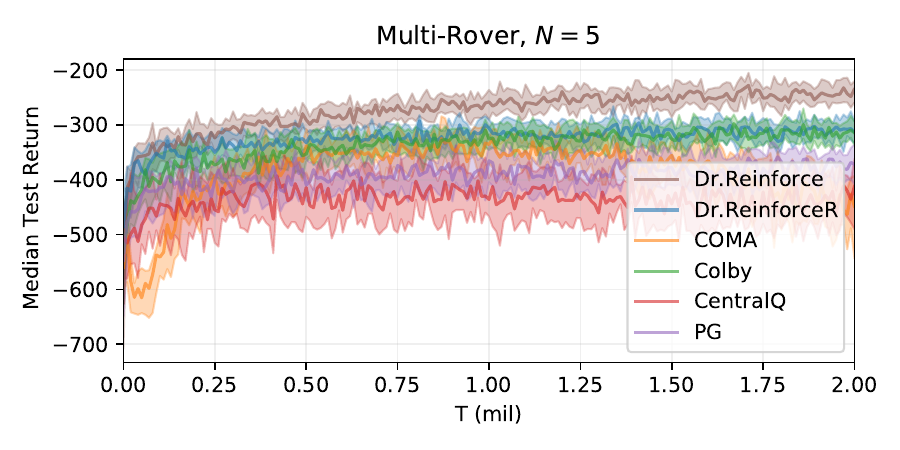}
}
\subfloat[Predator-Prey, $N=5$]{
\includegraphics[width=0.48\textwidth]{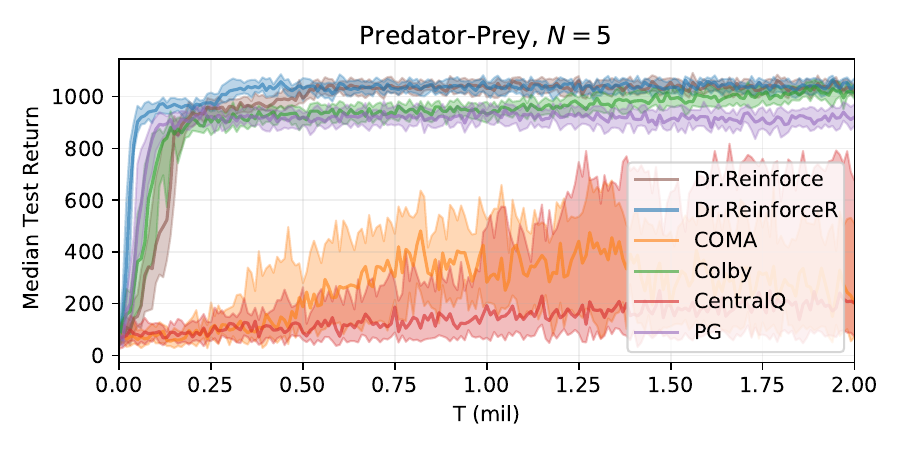}
}

\subfloat[Multi-Rover, $N=8$]{
\includegraphics[width=0.48\textwidth]{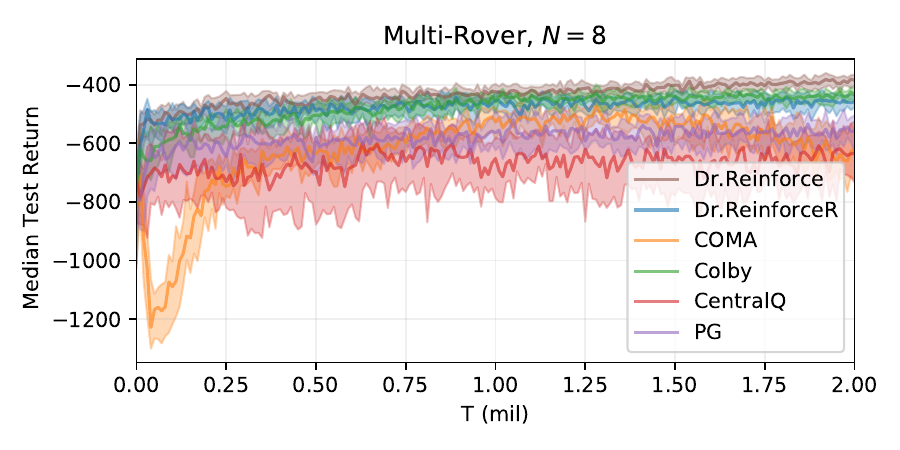}
}
\subfloat[Predator-Prey, $N=8$]{
\includegraphics[width=0.48\textwidth]{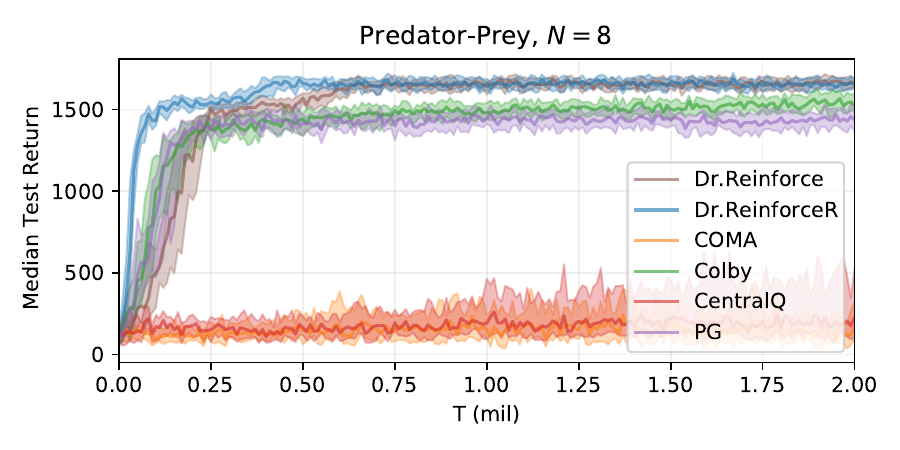}
}
\caption{Training curves on the multi-rover domain (left) and the predator-prey problem (right), showing the median reward and $25-75\%$ percentiles across seeds.}
\label{fig:gridworld}
\end{figure}

\subsubsection{Multi-Rover Domain}
In this domain, a team of $N$ agents is placed into a $10\times 10$ gridworld with a set of $N$ landmarks. The aim of the team is to spread over all the landmarks and cover them (which agent covers which landmark is not important): the reward received by the team depends on the distance of each landmark to its closest agent, and a penalty if two agents collide (reach the same cell simultaneously) during their movements is also applied. Each agent observes its relative position with respect to all the other agents and landmarks, and can move in the four cardinal directions or stand still. Figure \ref{fig:gridworld} (left) reports the median performance and $25-75\%$ percentiles (shaded area in the plots) across $30$ independent runs obtained by the compared methods on a team of increasing size, to investigate scaling to larger multi-agent systems.

It can be observed that both Dr.Reinforce and Dr.ReinforceR are always outperforming all of the other compared baselines on this domain. Also, Dr.ReinforceR is generally matching the upper bound given by Dr.Reinforce (that represents a limit case when the centralized reward network $R_{\psi}$ has perfectly converged to the true reward function). However, the wide gap between these two algorithms and the other baselines when $N=3$ reduces when more agents are introduced in the system, possibly pointing out that also these methods start to struggle in achieving optimal and coordinated behaviours on larger instances of this domain. When more agents are present, the gridworld becomes quite crowded: an explanation for this loss in performance is that the difference rewards signal pushes each agent towards the landmark that is furthest from all of the agents, and thus contributing the most to the negative reward value, in an attempt to mitigate this problem, but letting another landmark increase its negative contribution in turn. Coordination is key to efficiently solve this domain, and achieving such coordination may be difficult in larger settings.

Moreover, even if the reward network learns a good representation, the synergy between this and the agents' policies has to be carefully considered: the reward network has to converge properly before the policies got stuck into a local optimum, or it could be the case that these will not be able to escape it even if the gradients signals are then accurate enough. However, the simpler learning problem used to provide signals to the agents' policies, as opposed to the very complex learning of the action-value function critic used by COMA, proves effective in speeding up learning and achieve higher returns, even in difficult settings with many agents where all the other policy gradient methods seem to fail as well. Computing the difference rewards requires very accurate reward estimates, so if the reward network do not exhibit appropriate generalization capabilities it may end up overfitting on the reward values encountered during training but not being able to give correct predictions beyond those. It is true however that also difference rewards methods using the action-value function like COMA have the same requirements.

\subsubsection{Predator-Prey}
In this version of the classical predator-prey problem, a team of $N$ predators has to pursue a single prey for as long as possible in a $10\times 10$ gridworld. Each predator has got a range of sight of one cell in each direction from its current position: if the prey is into this range, the whole team receives a positive reward bonus, otherwise they do not receive any reward. Each agent observes its relative position with respect to the other agents and the prey itself and can move in the four cardinal directions or stand still. The prey selects actions uniformly at random from the same set of actions available to the agents. Figures \ref{fig:gridworld} (right) shows median results and $25-75\%$ percentiles across $30$ independent runs with teams comprising an increasing number of predators.

Also in this environment, Dr.ReinforceR is outperforming all the other compared methods, achieving performance that is equal or close to these of the Dr.Reinforce upper bound (of which the former is an approximated version). On one hand, some of the other baselines are also performing well: PG and Colby are almost performing on-par with the two above algorithms, even on larger instances of the problem. This is probably due to the less strict coordination requirements of the predator-prey problem compared to the previous multi-rover domain: each agent is independently contributing towards the common goal, and thus simply needs to optimize its own behaviour by learning how to reach and stay on the prey in order to improve global performances. 

On the other hand, COMA is performing extremely poorly, being outperformed even by the simple CentralQ (that has slowly learned something in the simpler case with $N=3$). This points out how accurately learning an optimal $Q$-function may be problematic in many settings, even more so on a sparse setting such as this, in which the agents are only perceiving rewards if some of them are effectively on the prey. If the $Q$-function converges to a sub-optimal solution and keeps pushing the agents towards a local optimum, the policies may struggle to escape from it afterwards and in turn push the action-value function towards a worst approximation. Moreover, to compute the counterfactual baseline in COMA, estimates of $Q$-values need to be accurate even on state-action pairs that the policies do not visit often, further exacerbating this problem. From this side, learning the reward function to compute the difference rewards is an easier learning problem, cast as a regression task and not involving bootstrapped estimates or a moving target, and thus can improve policy gradient performance providing them with better learning signals in achieving high return behaviours with no further drawback.

\subsection{Analysis}
The results of the proposed experiments show the benefits of learning the reward function over the more complex $Q$-function, leading to faster policy training and improved final performances, but also that this is not always an easy task and it can present issues on its own that can hinder the learning of an optimal joint policy. Indeed, although not suffering from the moving target problem and no bootstrapping is involved, learning the reward function online together with the policies of the agents can lead to biases of the learned function due to the agents behaviours. These biases could push the training samples towards a specific region of the true reward function, hindering the generalization capacity of the learned reward network and in turn leading to worst learning signal for the policies themselves, that can get stuck into a sub-optimal region. Similarly, this problem can appear when a centralized action-value critic is used to drive the policy gradients.

To investigate the claimed benefits of learning the reward function rather the $Q$-function, let now analyse the accuracy of the learned representations on the two proposed gridworld domains by sampling a set of different trajectories from the execution of the corresponding policies and comparing the predicted values from the reward network $R_{\psi}(s,a)$ of Dr.ReinforceR and the $Q_{\omega}(s,a)$ critic from COMA to the real ground-truth values of the reward function and the $Q$-function respectively. This has been called the \emph{on-policy dataset}, representing how correctly can the reward network and the critic represent the values of state-action pairs encountered during their training phase. Moreover, both Dr.ReinforceR and COMA rely on a difference rewards mechanism and thus need to estimate values for state-action pairs that are only encountered infrequently (or not at all) in order to compute correct values to drive the policy gradients. To investigate the generalization performances of the learned representations, let also analyse the prediction error on a \emph{off-policy dataset}, by sampling uniformly across the entire action-state space $S\times A$ and again comparing the predicted values from the learned reward function $R_{\psi}(s,a)$ of Dr.ReinforceR and the $Q_{\omega}(s,a)$ critic from COMA to their corresponding ground-truth values. Please note that, not knowing the true $Q$-function for the proposed problems to compare against, these have been approximated that via $100$ rollouts sampled starting from the current state-action sample and following the corresponding learned policies afterwards. Figure \ref{fig:stats} shows the mean and standard deviation of the prediction error (PE) distribution of these networks. All the prediction errors have been normalized by the value of $r_{max}-r_{min}$ (respectively $q_{max}-q_{min}$ for COMA critic) for each environment and number of agents individually, so that the resulting values are comparable across the two different methodologies and across different settings. It is to note that, although normalized, the errors may be higher than the normalization range itself, and thus exceed the value of $1$ (as it is the case with the errors of COMA critic on the multi-rover domain).

\begin{figure}[t]
\centering
\includegraphics[width=0.7\textwidth,clip]{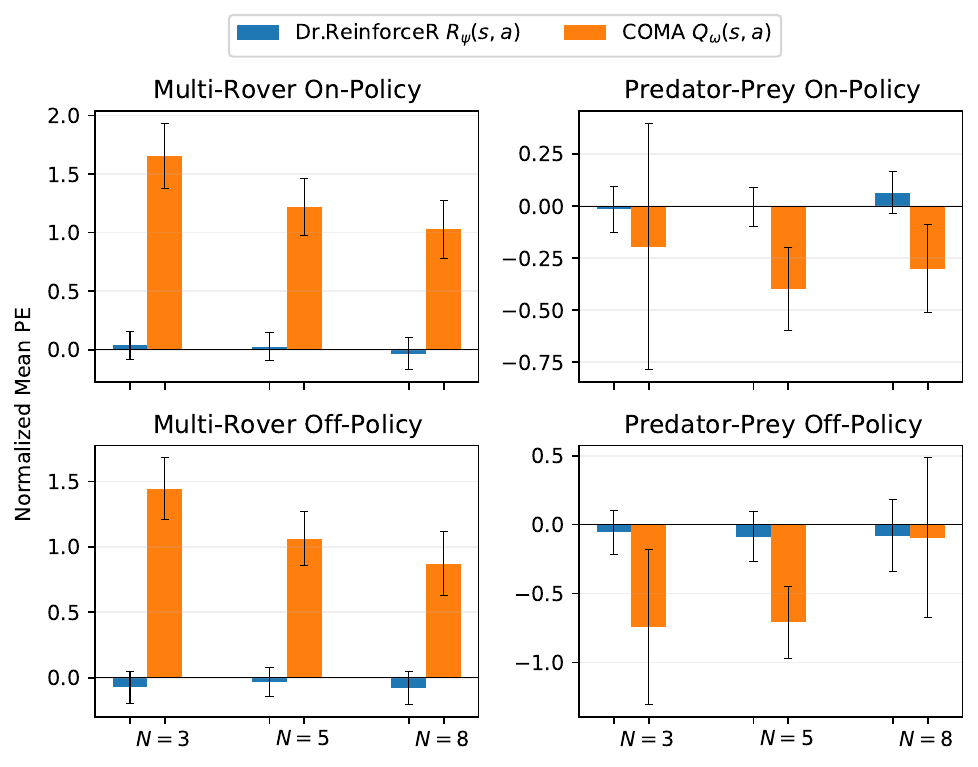}
\caption{Normalized mean prediction error and standard deviation for Dr.ReinforceR reward network $R_{\psi}$ and COMA critic $Q_{\omega}$ on the on-policy dataset (first row) and the off-policy dataset (second row), for the two environments.}
\label{fig:stats}
\end{figure}

These plots give us some insights on the performance reported in Section \ref{sec:comparisons}. Dr.ReinforceR is in general achieving improved performances with respect to the compared baselines, and the low prediction error of its reward network on the two problems may be an explanation for this: with correct value estimates, the learning signals provided to the policy gradients are better in turn, and thus lead to higher-return behaviours. Also the variance is low, meaning that most of the sampled values are consistently predicted correctly and the network exhibits good generalization performances across the increasing number of agents on both datasets. This generalization capacity of the learned approximation also explains why Dr.ReinforceR is in general matching the Dr.Reinforce upper bound: the difference rewards mechanism requires multiple predictions to compute the agents' signals and, if these are not accurate enough, the resulting values may be completely wrong and push the agents towards sub-optimal policies in turn.

The prediction errors for COMA action-value critic instead are higher, especially on the multi-rover domain, where the errors do not scale so gracefully in the number of agents even on the on-policy dataset. It can be observed that the critic network is biased towards overestimating most of the samples for the multi-rover domain, while instead underestimates them for predator-prey (especially more so on the off-policy dataset, where non-encountered state-action pairs may be sampled), thus resulting in bad estimations of the counterfactual baseline. On the predator-prey environment, it seems that COMA critic quickly overfits to the $Q$-function of a sub-optimal joint policy, resulting in a very low prediction error on the off-policy dataset when the number of agents increases (and most of the samples indeed lead to no rewards trajectories), that does not seem able to give good signals to the agents' policies and leads them to get stuck into this poor local optimum in turn. These results can also explain why COMA is performing worse than CentralQ on this domain: if the critic is not accurate or is representing the value of a poor policy (as it can be hypothesized for the above results), COMA requirement of more estimations from it in order to compute the counterfactual baseline only exacerbates this problem and further hinders the final performance.

Finally, the effect of noise on computation of the difference rewards are investigated. Generally, an accurate reward value for every agent's action is needed to compute correct difference rewards. The reward network $R_{\psi}$ is an approximation of the true reward function $R(s,a)$ and can therefore give noisy estimates that could dramatically affect the resulting computation. To investigate this, noise sampled from different processes is added to the reward values of the agent's different actions that are obtained from the environment. These are used to compute the baseline (the second term of the r.h.s. in Equation \ref{eq:aristocrat}, as this is the only term for which $R_{\psi}$ is used in Equation \ref{eq:ar_learned}), and the resulting difference rewards are compared with the true ones for a generic agent $i$ under a uniform policy $\pi_{\theta^i}(a^i\vert s)=\frac{1}{\vert A^i\vert }$. Figure \ref{fig:noise} reports the mean value and variance over $1000$ noise samples of a set of sampled state-action (SA) pairs from the reward function of the two investigated domains with $N=3$ agents.

\begin{figure}[t]
\centering
\includegraphics[width=0.7\textwidth,clip]{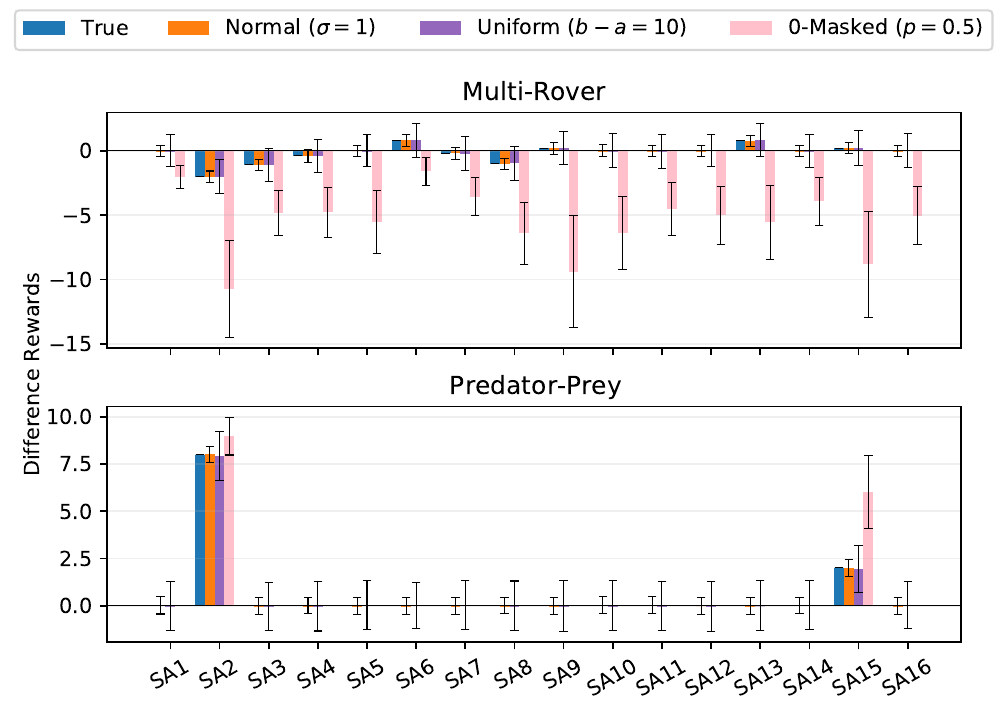}
\caption{Mean and variance of difference rewards for a set of samples under different noise profiles.}
\label{fig:noise}
\end{figure}

It can be observed how different noise processes differently affect the resulting difference rewards. For example, in both environments, the difference rewards mechanism is quite resistant against noise from a normal or a uniform distribution. This is probably due to the symmetricity of these noises, that tends to cancel out with each other. However, a masking kind of noise, under which some of the reward values are replaced with a value of $0$ with a certain probability, seems to be more detrimental for difference rewards evaluation: cancelling out some of the reward values definitely changes the computation and gives wrong estimates. This is worse in the multi-rover domain, in which the reward function is dense, while for the predator-prey environment and its sparse reward function it seems to be less harming.

These two observations together help explain why Dr.ReinforceR outperforms COMA on the two proposed environments: learning the reward function $R(s,a)$ is easier than learning the $Q$-function and, although function approximation introduces noise, the difference rewards mechanism is resistant against common types of noise and still provides useful signals to policy gradients. Therefore, if one is able to learn a good approximation of the reward, the proposed algorithm learns better and more reliable policies than other policy gradient algorithms, without the difficulties of learning the $Q$-function.

\section{Partial Observability}
Full observability of the environment as in MMDPs is a desirable property, but in many real-world situations \citep{sensors,traffic,smac} such a strong assumption is often unrealistic. The complexity of the environment itself or the limited sensing or communication capabilities available are usually transforming such problems into a partially observable ones from the perspective of the agents. In these, the agents cannot directly observe the state of the environment, but instead are provided with a local and possibly noisy observation that represents only a limited amount of information about the underlying environment state itself.
 
Formally, such settings can be modelled as a decentralized partially observable Markov decision process (Dec-POMDP) \citep{decpomdp} $\mathcal{M}=\langle D,S,\{A^i\}_{i=1}^{\vert D\vert },T,R,\{O^i\}_{i=1}^{\vert D\vert },Z,\gamma\rangle$, where $D,S,A^i,T,R$ and $\gamma$ are the same as in a MMDP. As mentioned above, agents are provided with a local observation $o^i\in O^i$, such that $o=\langle o^1,\ldots,o^N\rangle\in\times_{i=1}^{\vert D\vert }O^i=O$ is called a joint observation and $o\sim Z(s)$, where $Z:S\rightarrow O$ is the observation function. With such limitations, each agent has to keep track of its own action-observation history $h^i_t=(o^1_0,a^i_0,o^i_1,a^i_1,\ldots,o^i_{t-1},a^i_{t-1},o^i_t)\in (O^i\times A^i)^*\times O^i=\mathcal{H}^i$ up to the current time step $t$ to try and assess the underlying state of the environment, and use this to condition its policy and draw its decisions. A joint history at time step $t$ can also be defined as $h_t=(o_0,a_0,o_1,a_1,\ldots,o_{t-1},a_{t-1},o_t)\in (O\times A)^*\times O=\mathcal{H}$.

Policy gradients algorithms can easily be adapted to work under partial observability by simply replacing the environment state $s$ used by the agents policies $\pi_{\theta^i}$ with the corresponding agent's local action-observation history $h^i_t$. The distributed policy gradients in Equation \eqref{eq:mapg} thus becomes:
\begin{equation}
\theta^i\leftarrow\theta^i+\alpha\underbrace{\sum_{t=0}^{T-1}\gamma^tG_t\nabla_{\theta^i}\log\pi_{\theta^i}(a^i_t\vert h^i_t)}_{\hat{g}^i}.
\end{equation}

\subsection{Method}
Similarly, it is straightforward to also adapt Dr.Reinforce to work in Dec-POMDPs by simply adjusting the policy terms that appear in Equation \eqref{eq:diff_return} and Equation \eqref{eq:drpg} to condition on the agents' local action-observation histories $h^i_t$. The difference return $\Delta G^i_t$ in thus defined as:

\begin{equation}
\Delta G^i_t(a_{t:T}^i\vert s_{t:T},a_{t:T}^{-i},h^i_{t:T})\triangleq\sum_{l=0}^{T-t-1}\gamma^l\Delta R^i(a_{t+l}^i\vert s_{t+l},a_{t+l}^{-i},h^i_{t+l}),
\label{eq:diff_return_po}
\end{equation}
while the decentralized policies are learned by using the update target:
\begin{equation}
\theta^i\leftarrow\theta^i+\alpha\underbrace{\sum_{t=0}^{T-1}\gamma^t\Delta G^i_t(a_{t:T}^i\vert s_{t:T},a_{t:T}^{-i},h^i_{t:T})\nabla_{\theta^i}\log\pi_{\theta^i}(a^i_t\vert h^i_t)}_{g^{DR,i}}.
\label{eq:drpg_po}
\end{equation}

When complete access to the reward function is not available, a modified version of Dr.ReinforceR can be applied. The centralized reward network $R_{\psi}$, by following the CTDE paradigm, can still be learned in the same way as in Equation \eqref{eq:r} and condition on the environment state $s\in S$, as it is not required during execution. It is enough to adapt Equation \eqref{eq:ar_learned} as done before, thus obtaining:

\begin{equation}
\Delta R_{\psi}^i(a^i_t\vert s_t,a_t^{-i},h^i_t)\triangleq r_t-\sum_{c^i\in A^i}\pi_{\theta^i}(c^i\vert h^i_t)R_{\psi}(s_t,\langle c^i,a^{-i}_t\rangle),
\end{equation}
and consequently adjust Equation \eqref{eq:diff_return_learned} as:

\begin{equation}
\Delta G^i_t(a_{t:T}^i\vert s_{t:T},a_{t:T}^{-i},h^i_{t:T})\triangleq\sum_{l=0}^{T-t-1}\gamma^l\Delta R_{\psi}^i(a_{t+l}^i\vert s_{t+l},a_{t+l}^{-i},h^i_t).
\label{eq:dr_learned}
\end{equation}

\subsection{Theoretical Results}
Above, we adapted Dr.Reinforce, which intuitively can improve learning by providing individual agents with a better learning signal, to partially observable settings. In these, using difference rewards as the agents' learning signals induces a partially observable stochastic game \citep{posg,decpomdp} $\hat{\mathcal{P}}=\langle D,S,\{A^i\}_{i=1}^{\vert D\vert },T,$ $\{\Delta R^i\}_{i=1}^{\vert D\vert },\{O^i\}_{i=1}^{\vert D\vert },Z\rangle$ in which the cooperating agents do not receive the same reward after each time step. Even though difference rewards are aligned with the true reward values \citep{visualizing,invariance}, for these games convergence to an optimal solution is not immediate.

When agents are required to base their decisions on their local action-observation history $h^i_t$, the same result on an unbiased baseline derived in Section \ref{sec:theory} for the fully observable case does not hold anymore. Generally speaking, this is due to the Monte-Carlo nature of the difference return $\Delta G^i_t$, that requires future quantities in order to compute the value of the baseline. The local histories for the episode time steps (used to compute the aristocrat utility values in the r.h.s. of Equation \eqref{eq:diff_return_po}) are now strictly depending on the actions selected at the previous time steps, and thus break this independence of the baseline from the current action selection.

\begin{observation}
In a Dec-POMDP setting, using difference return $\Delta G^i_t(a_{t:T}^i\vert s_{t:T},a_{t:T}^{-i},h^i_{t:T})$ as the learning signal for policy gradients in Equation \eqref{eq:drpg_po} is in general not equivalent to subtracting an unbiased baseline $B^i(s_{t:T},a^{-i}_{t:T},h^i_{t:T})$ from the distributed policy gradients in Equation \eqref{eq:mapg}.
\label{thm:wrong}
\end{observation}

\begin{proof}
We start by rewriting $\Delta G^i_t(a_{t:T}^i\vert s_{t:T},a_{t:T}^{-i},h^i_{t:T})$ from Equation \eqref{eq:diff_return_po} as:

\begin{align}
\Delta G^i_t(a_{t:T}^i\vert s_{t:T},a_{t:T}^{-i},h^i_{t:T}) & =\sum_{l=0}^{T-t-1}\gamma^lr_{t+l} \nonumber\\
& -\sum_{l=0}^{T-t-1}\gamma^l\sum_{c^i\in A^i}\pi_{\theta^i}(c^i\vert h^i_{t+l})R(s_{t+l},\langle c^i,a^{-i}_{t+l}\rangle).
\label{eq:diff_return_po_div}
\end{align}

Note that the first term on the r.h.s. of Equation \eqref{eq:diff_return_po_div} is the return $G_t$ used in Equation \eqref{eq:mapg}. We then define the second term on the r.h.s. of Equation \eqref{eq:diff_return_po_div} as the baseline $B^i(s_{t:T},a^{-i}_{t:T},h^i_{t:T})$:

\begin{equation}
B^i(s_{t:T},a^{-i}_{t:T},h^i_{t:T})=\sum_{l=0}^{T-t-1}\gamma^l\sum_{c^i\in A^i}\pi_{\theta^i}(c^i\vert h^i_{t+l})\cdot R(s_{t+l},\langle c^i,a^{-i}_{t+l}\rangle).
\label{eq:baseline_po}
\end{equation}

We can thus rewrite the total expected update target for agent $i$ as:

\begin{align}
\mathbb{E}_{\pi_{\theta}}\left[\hat{g}^{DR,i}\right] & =\mathbb{E}_{\pi_{\theta}}\left[\sum_{t=0}^{T-1}\left(\nabla_{\theta^i}\log\pi_{\theta^i}(a^i_t\vert h^i_t)\right)\Delta G^i_t(a_{t:T}^i\vert s_{t:T},a_{t:T}^{-i},h^i_{t:T})\right] \nonumber\\
& =\mathbb{E}_{\pi_{\theta}}\left[\sum_{t=0}^{T-1}\left(\nabla_{\theta^i}\log\pi_{\theta^i}(a^i_t\vert h^i_t)\right)\left(G_t-B^i(s_{t:T},a^{-i}_{t:T},h^i_{t:T})\right)\right] \nonumber\\
& \text{(by definition of }\Delta G^i_t) \nonumber\\
& =\mathbb{E}_{\pi_{\theta}}\left[\sum_{t=0}^{T-1}\left(\nabla_{\theta^i}\log\pi_{\theta^i}(a^i_t\vert h^i_t)\right)G_t\right. \nonumber\\
& \left.-\left(\nabla_{\theta^i}\log\pi_{\theta^i}(a^i_t\vert h^i_t)\right)B^i(s_{t:T},a^{-i}_{t:T},h^i_{t:T})\right] \nonumber\\
& \text{(distributing the product)} \nonumber\\
& =\mathbb{E}_{\pi_{\theta}}\left[\sum_{t=0}^{T-1}\left(\nabla_{\theta^i}\log\pi_{\theta^i}(a^i_t\vert h^i_t)\right)G_t\right] \nonumber\\
& -\mathbb{E}_{\pi_{\theta}}\left[\sum_{t=0}^{T-1}\left(\nabla_{\theta^i}\log\pi_{\theta^i}(a^i_t\vert h^i_t)\right)B^i(s_{t:T},a^{-i}_{t:T},h^i_{t:T})\right] \nonumber\\
& \text{(by linearity of the expectation)} \nonumber\\
& =\mathbb{E}_{\pi_{\theta}}\left[\hat{g}^{PG,i}\right]+\mathbb{E}_{\pi_{\theta}}\left[\hat{g}^{B,i}\right]. \nonumber
\end{align}

In order to show that the baseline is unbiased the expected value of its update $\mathbb{E}_{\pi_{\theta}}\left[\hat{g}^{B,i}\right]$ with respect to the policy $\pi_{\theta}$ should be $0$. Let $P^{\pi_{\theta}}(h_t)=P^{\pi_{\theta}}(h_{t-1})\cdot\pi_{\theta}(a_{t-1}\vert h_{t-1})\sum_{s_t\in S}P^{\pi_{\theta}}_t(s_t)\cdot Z(o_t,s_t)$ (with $P^{\pi_{\theta}}(h_0)=\sum_{s_0\in S}Z(o_0\vert s_0)\rho(s_0)$ and $\rho(s_0)$ the initial state distribution) be the joint action-observation history distribution. Let also define the \emph{complete system history} $\hat{h}_t=\langle h_t,a_t,s_{0:t}\rangle\in\hat{\mathcal{H}}_t$, so that $P^{\pi_{\theta}}(\hat{h}_t)=P^{\pi_{\theta}}(h_t)\cdot \pi_{\theta}(a_t\vert h_t)\cdot\prod_{l=0}^tP^{\pi_{\theta}}_l(s_l)$, we have:

\begin{align}
\mathbb{E}_{\pi_{\theta}}\left[\hat{g}^{B,i}\right] & \;\triangleq-\mathbb{E}_{\pi_{\theta}}\left[\sum_{t=0}^{T-1}\left(\nabla_{\theta^i}\log\pi_{\theta^i}(a^i_t\vert h^i_t)\right)B^i(s_{t:T},a^{-i}_{t:T},h^i_{t:T})\right] \nonumber\\
& \;=-\sum_{t=0}^{T-1}\sum_{\hat{h}_t\in\hat{\mathcal{H}}_t}P^{\pi_{\theta}}(\hat{h}_t)\left(\nabla_{\theta^i}\log\pi_{\theta^i}(a^i_t\vert h^i_t)\right) \nonumber\\
& \sum_{\hat{h}_T\in\hat{\mathcal{H}}_T}P^{\pi_{\theta}}(\hat{h}_T\vert\hat{h}_t)\; B^i(s_{t:T},a^{-i}_{t:T},h^i_{t:T}) \nonumber\\
& \text{(by expanding the expectation)} \nonumber\\
& \;=-\sum_{t=0}^{T-1}\sum_{h_t\in\mathcal{H}_t}P^{\pi_{\theta}}(h_t)\sum_{a^{-i}_t\in A^{-i}}\pi_{\theta^{-i}}(a^{-i}_t\vert h^{-i}_t) \nonumber\\
& \sum_{a^i_t\in A^i}\left(\nabla_{\theta^i}\pi_{\theta^i}(a^i_t\vert h^i_t)\right)\sum_{\hat{h}_T\in\hat{\mathcal{H}}_T}P^{\pi_{\theta}}(\hat{h}_T\vert\hat{h}_t)\; B^i(s_{t:T},a^{-i}_{t:T},h^i_{t:T}) \nonumber\\
& \text{(by applying the inverse log trick)} \nonumber\\
& \;\neq-\sum_{t=0}^{T-1}\sum_{h_t\in\mathcal{H}_t}P^{\pi_{\theta}}(h_t)\sum_{a^{-i}_t\in A^{-i}}\pi_{\theta^{-i}}(a^{-i}_t\vert h^{-i}_t) \nonumber\\
& \left(\nabla_{\theta^i}\sum_{a^i_t\in A^i}\pi_{\theta^i}(a^i_t\vert h^i_t)\right)\sum_{\hat{h}_T\in\hat{\mathcal{H}}_T}P^{\pi_{\theta}}(\hat{h}_T\vert\hat{h}_t)\; B^i(s_{t:T},a^{-i}_{t:T},h^i_{t:T})) \nonumber\\
& \text{(by moving the gradient outside the policy sum)} \nonumber
\end{align}

We cannot move the gradient outside of the sum now (as done in Equation \eqref{eq:derivation}), because of the baseline $B^i$ depending on the policy parameters via the agent action $a^i_t$ included in the histories $h^i_{t+1:T}$. The sum over the policy term is therefore a weighted summation over different baseline values, and these in general do not sum up to $0$, and thus the baseline is in general not unbiased (although problems for which the summation is $0$ in any case may exist, and in these special cases the baseline is still unbiased).
\end{proof}

The result in the above Lemma shows that using the baseline in Equation \eqref{eq:baseline_po} alter the expected value of the overall gradient, as the baseline $B^i(s_{t:T},a^{-i}_{t:T},h^i_{t:T})$ is not unbiased, and thus the policy gradients are not guaranteed to converge to the same solutions of the distributed policy gradients \citep{mapg}.

\subsection{StarCraftII Experiments}
\begin{figure}[htbp]
\vspace{-2\baselineskip}
\centering
\subfloat{
\includegraphics[width=0.42\textwidth]{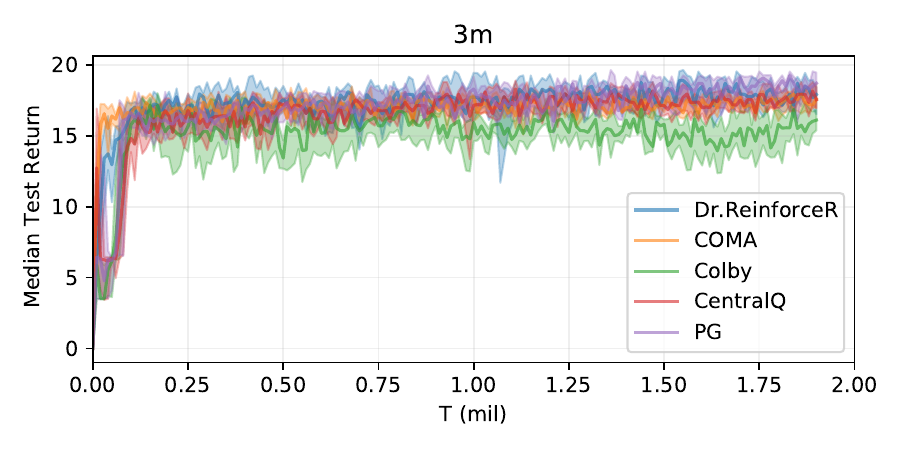}
}
\subfloat{
\includegraphics[width=0.42\textwidth]{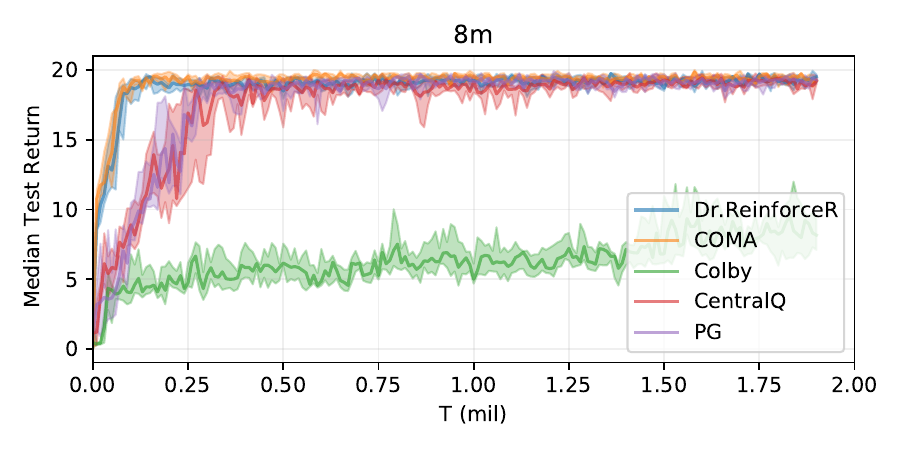}
}

\subfloat{
\includegraphics[width=0.42\textwidth]{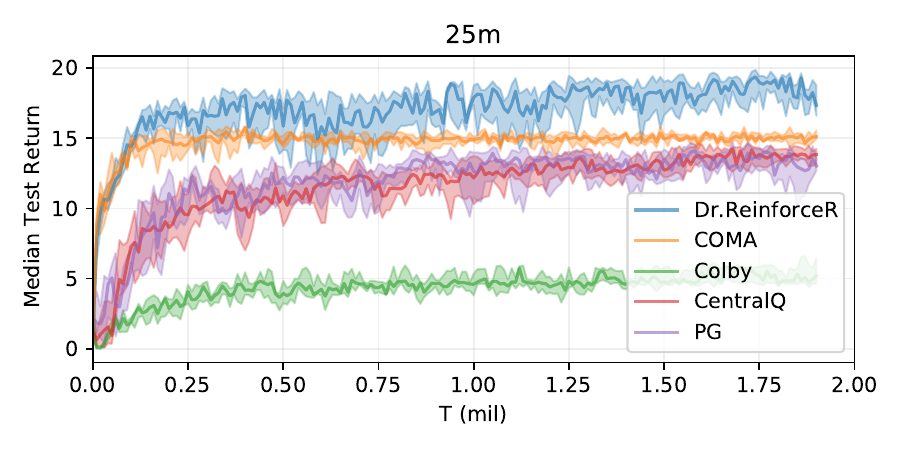}
}
\subfloat{
\includegraphics[width=0.42\textwidth]{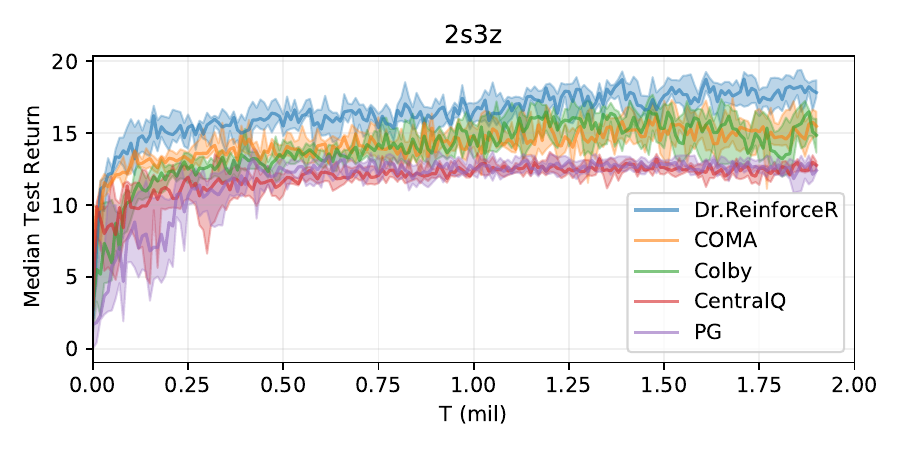}
}

\subfloat{
\includegraphics[width=0.42\textwidth]{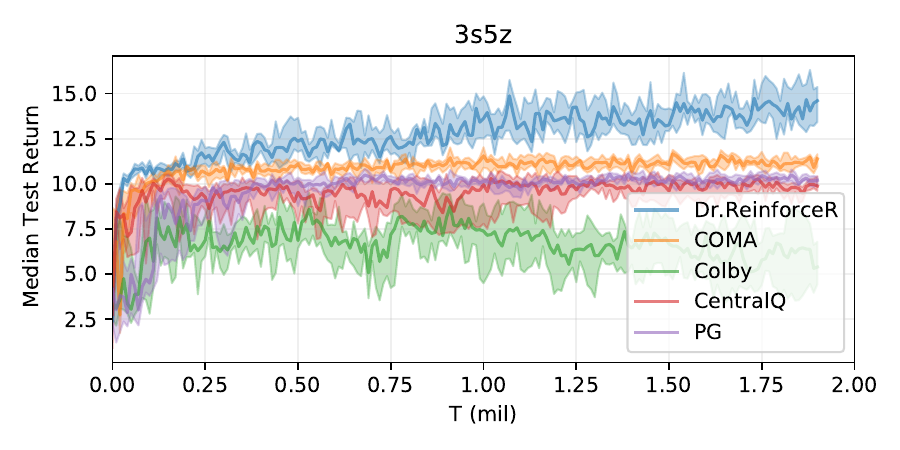}
}
\subfloat{
\includegraphics[width=0.42\textwidth]{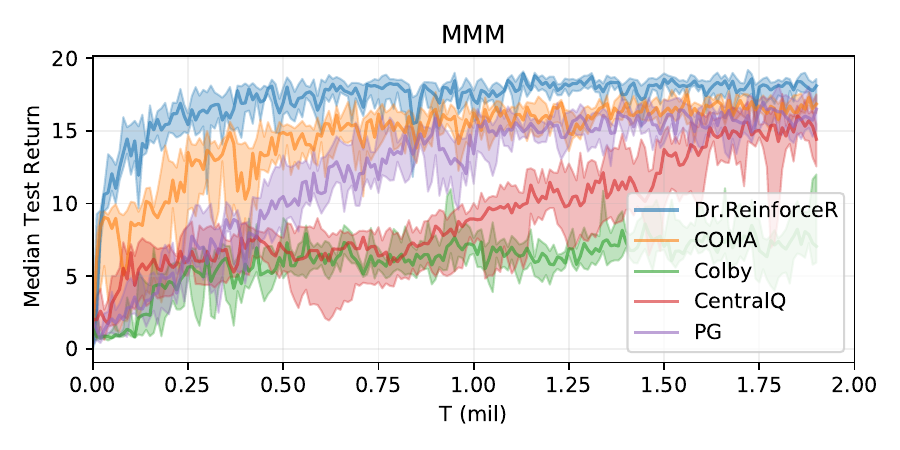}
}

\subfloat{
\includegraphics[width=0.42\textwidth]{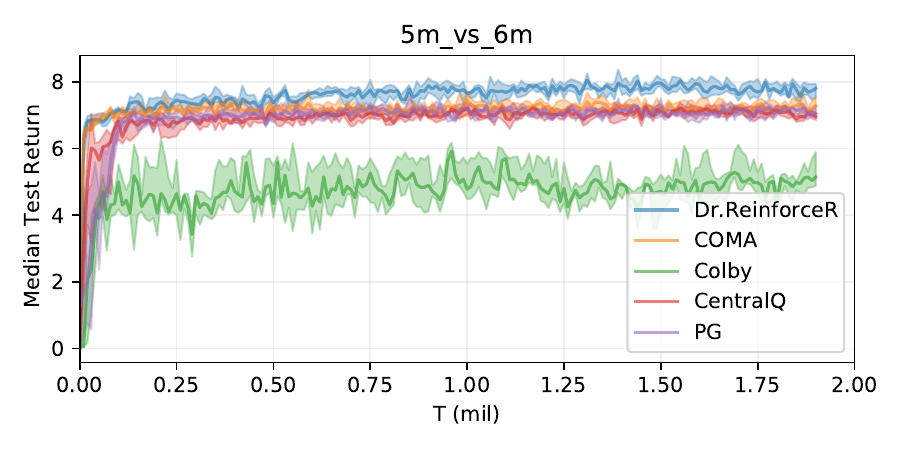}
}
\subfloat{
\includegraphics[width=0.42\textwidth]{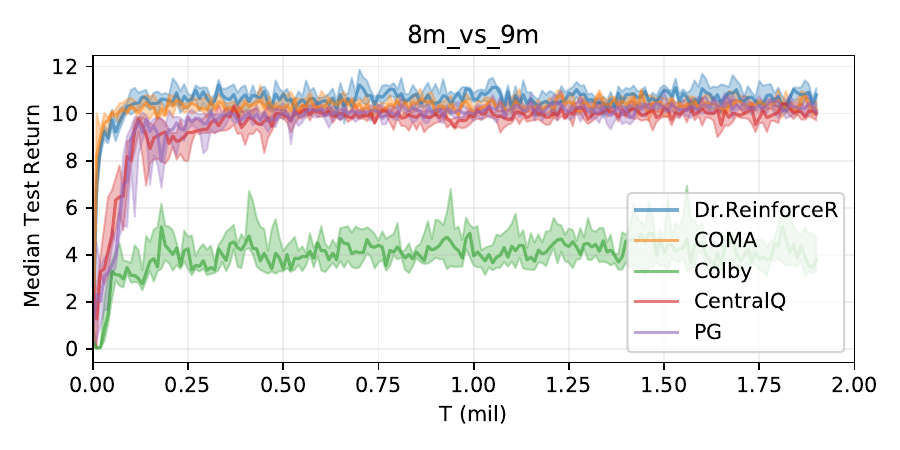}
}

\subfloat{
\includegraphics[width=0.42\textwidth]{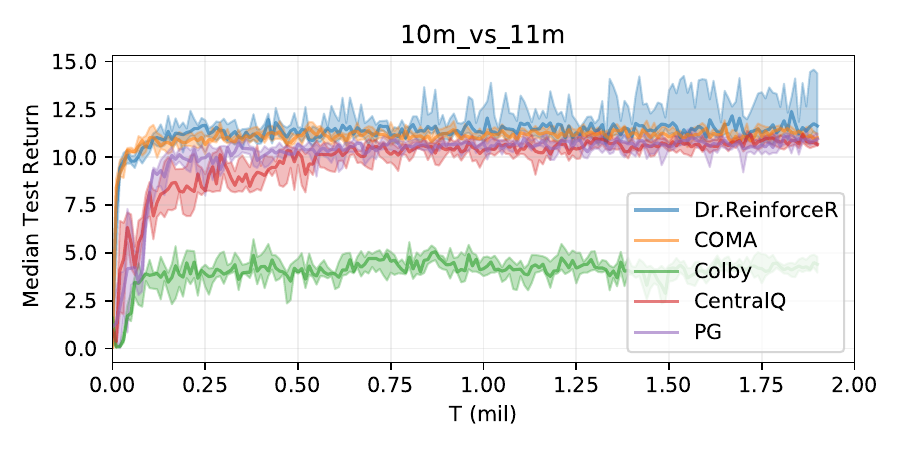}
}
\subfloat{
\includegraphics[width=0.42\textwidth]{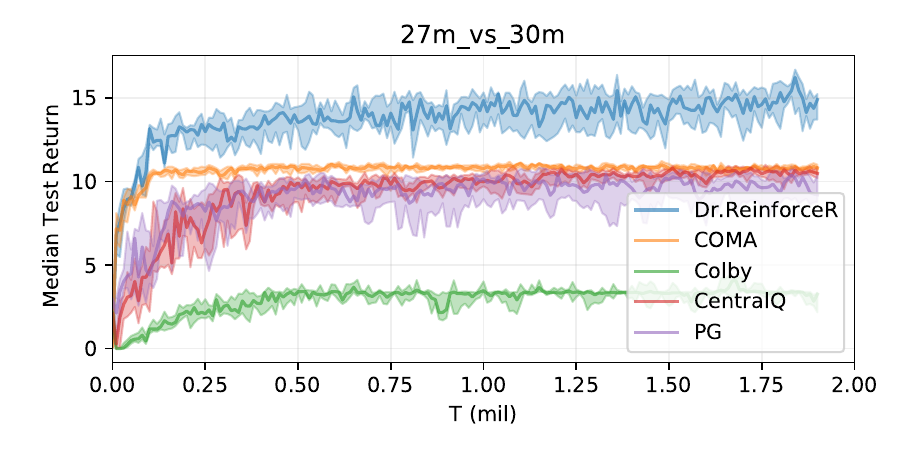}
}

\subfloat{
\includegraphics[width=0.42\textwidth]{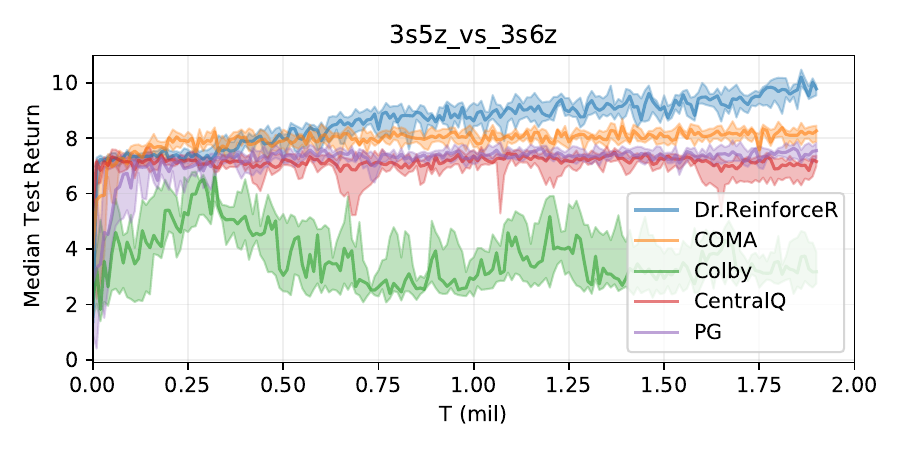}
}
\subfloat{
\includegraphics[width=0.42\textwidth]{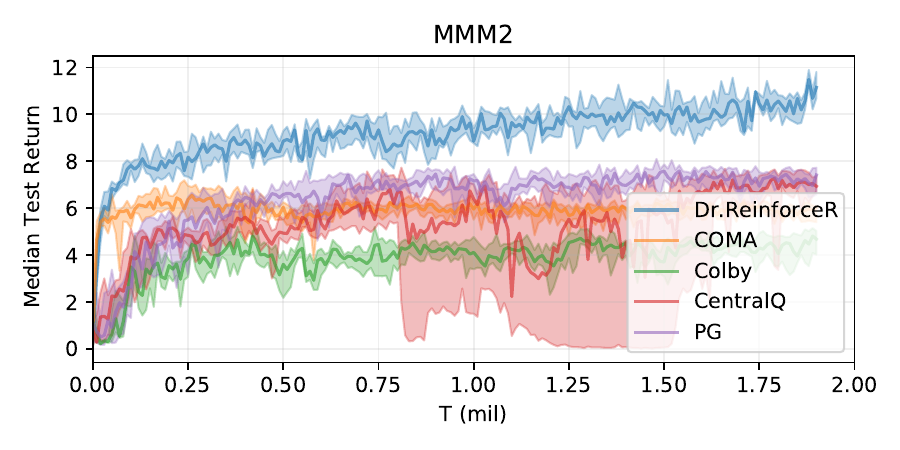}
}
\end{figure}

\begin{figure}[htbp]
\vspace{-2\baselineskip}
\centering
\subfloat{
\includegraphics[width=0.42\textwidth]{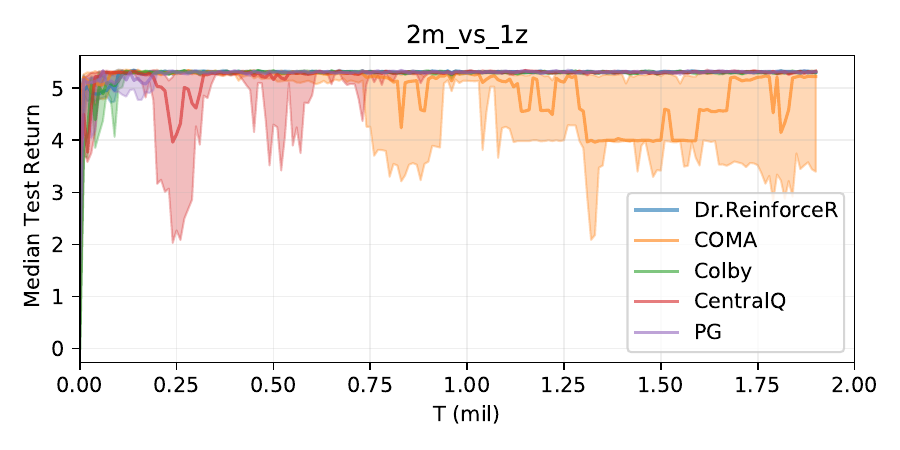}
}
\subfloat{
\includegraphics[width=0.42\textwidth]{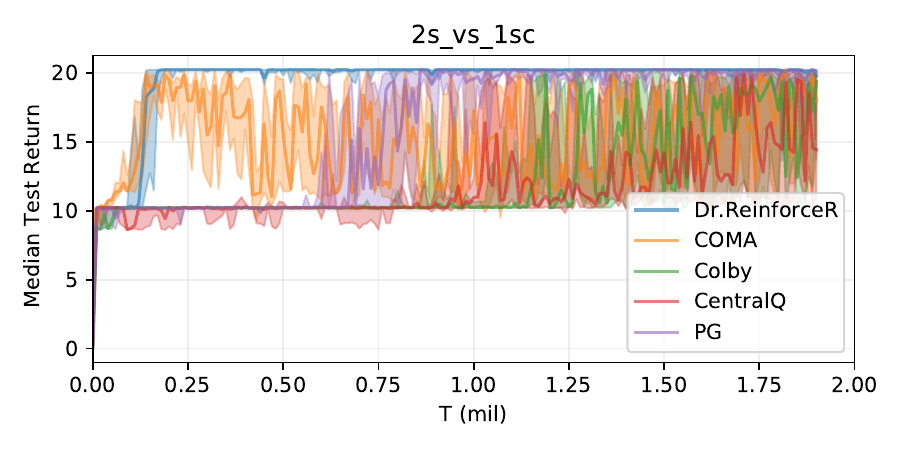}
}

\subfloat{
\includegraphics[width=0.42\textwidth]{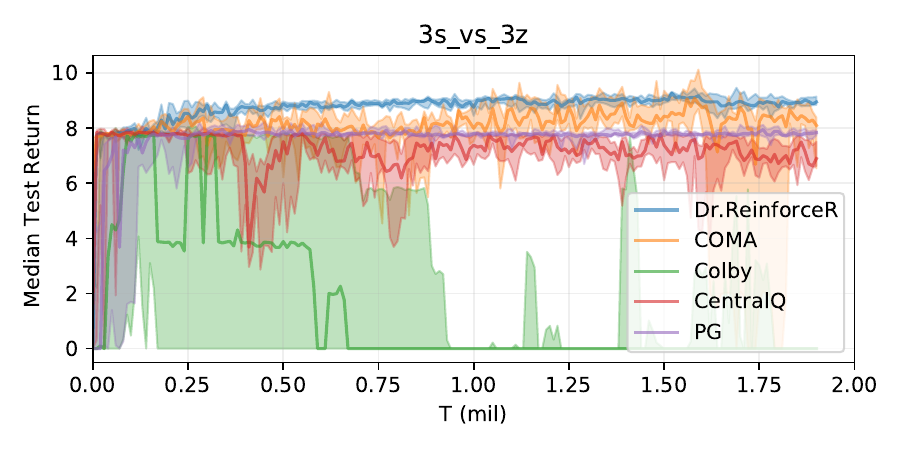}
}
\subfloat{
\includegraphics[width=0.42\textwidth]{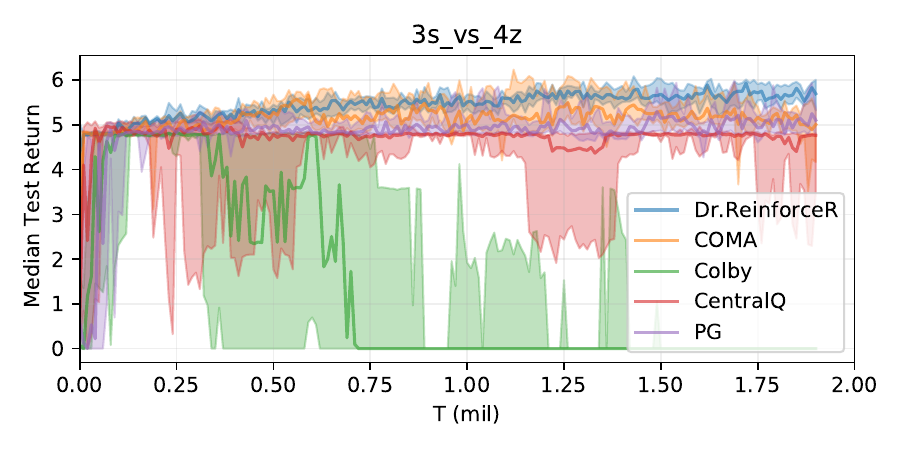}
}

\subfloat{
\includegraphics[width=0.42\textwidth]{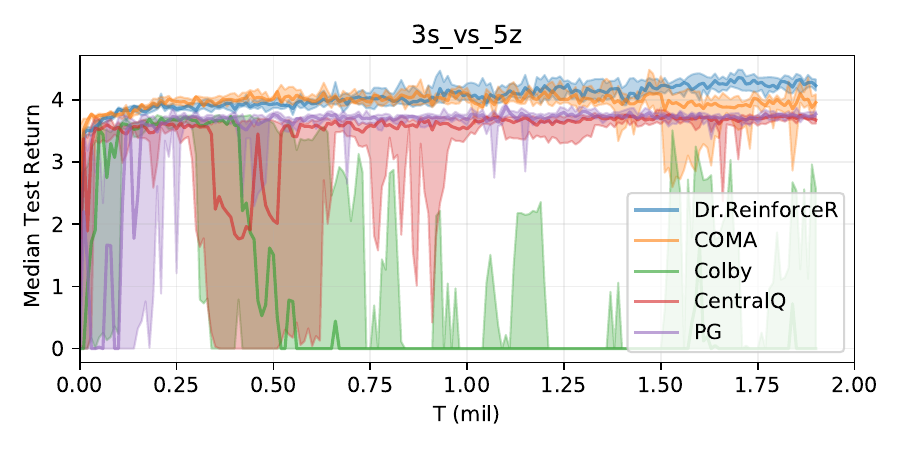}
}
\subfloat{
\includegraphics[width=0.42\textwidth]{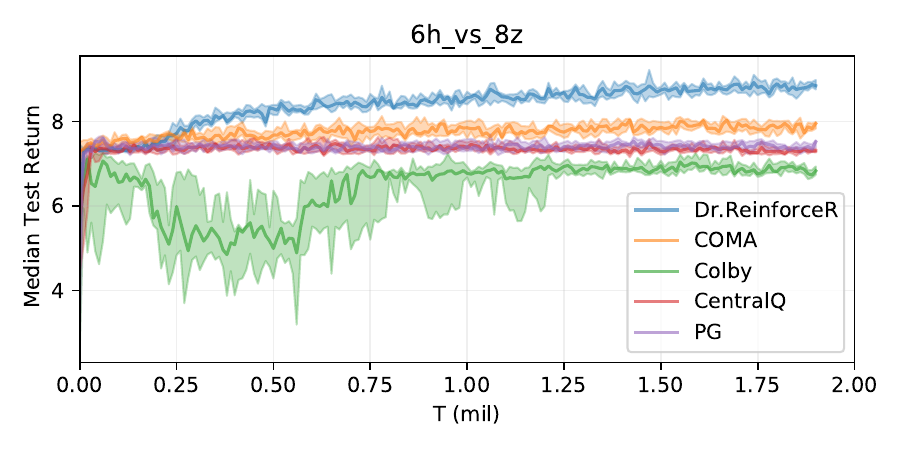}
}

\subfloat{
\includegraphics[width=0.42\textwidth]{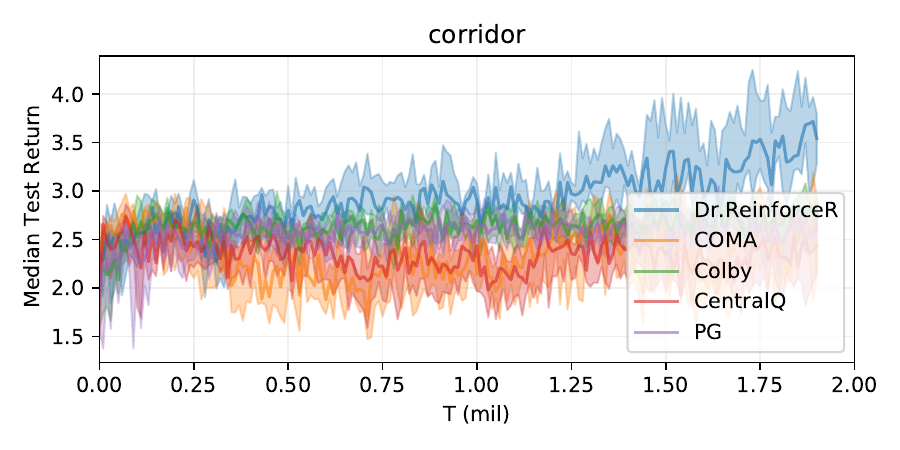}
}
\subfloat{
\includegraphics[width=0.42\textwidth]{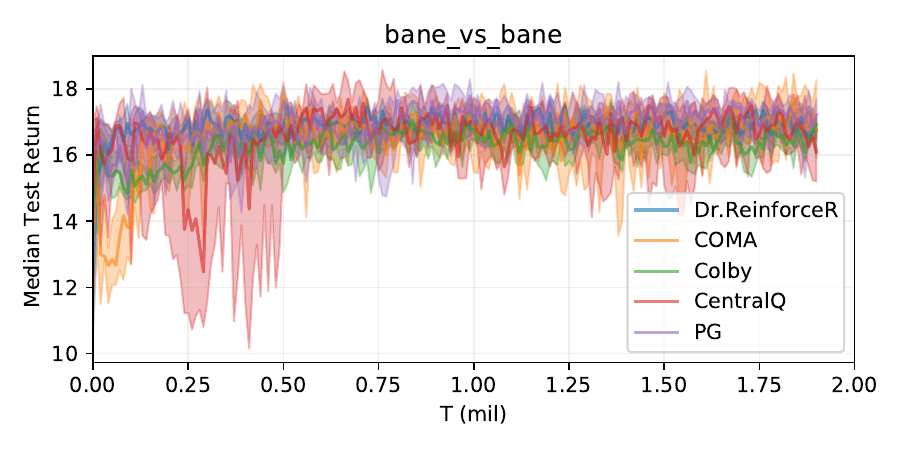}
}

\subfloat{
\includegraphics[width=0.42\textwidth]{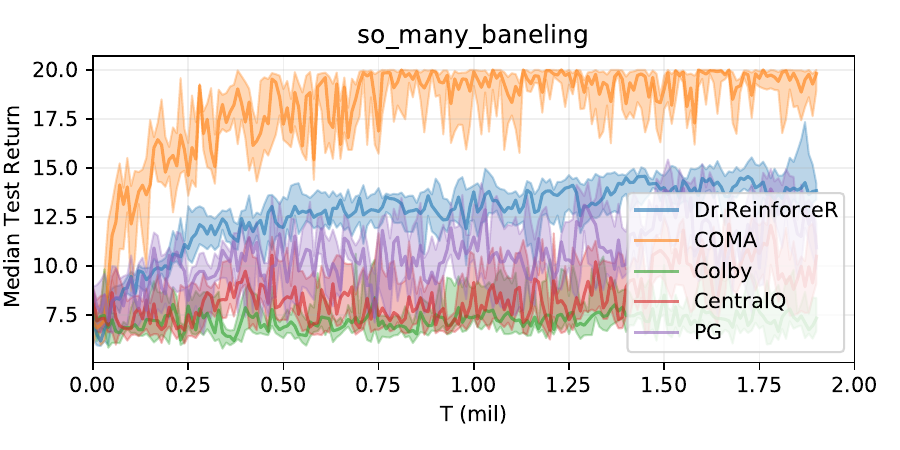}
}
\subfloat{
\includegraphics[width=0.42\textwidth]{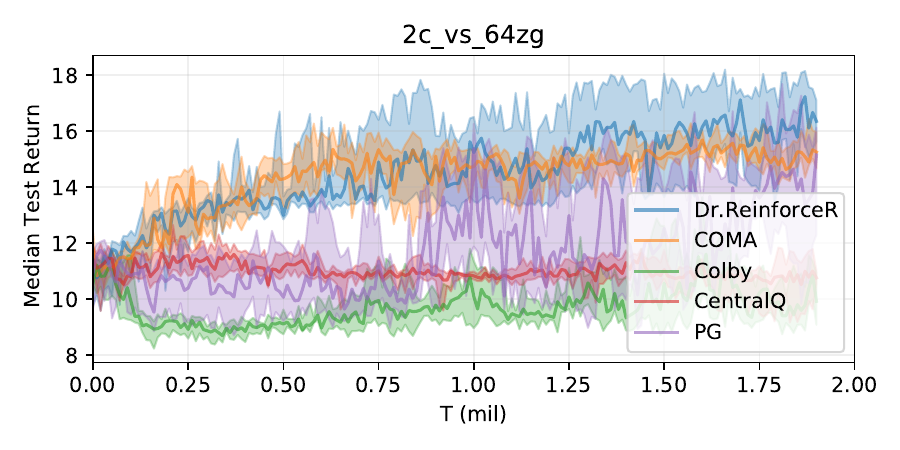}
}

\subfloat{
\includegraphics[width=0.42\textwidth]{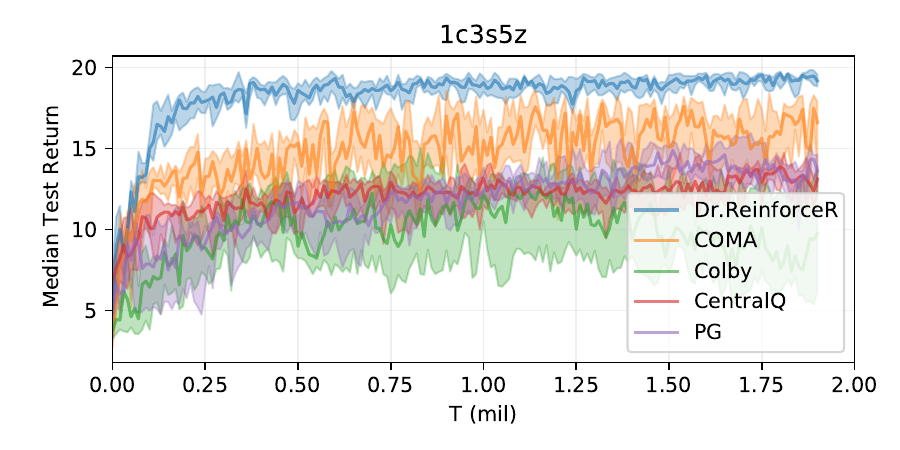}
}
\caption{Training curves on the entire set of SMAC maps, showing the median return and $25-75\%$ percentiles across seeds.}
\label{fig:smac}
\end{figure}

Although there is no theoretical guarantee on the convergence of our proposed method under partial observability, it might still work well in practice. Therefore, we investigate the application of our method on the StarCraftII multi-agent challenge (SMAC) \citep{smac}, a very complex, partially observable environment that provides a wide set of different maps, each with a different number and different types of units that has to fight against an opposing team controlled by the game AI, to show good empirical performances. As with the current game back-end \citep{sc2} it has not been possible to obtain all the reward values for the possible agents actions, we have not been able to apply Dr.Reinforce here. Figures \ref{fig:smac} shows median return and $25-75\%$ percentiles across $10$ independent runs on the whole set of available maps, with the difficulty level of the opponent team set to Very Hard.

In this setting, Dr.ReinforceR is almost never underperforming with respect to all the other baselines, with significant improvements over COMA on heterogeneous maps like \texttt{3s5z}, \texttt{1c3s5z} or \texttt{MMM}. This shows how learning the $Q$-function may be difficult in complex setting, while the reward network is easier to learn and in turn produces better policies. Also, it is worth mentioning that the severe partial observability of this setting is well addressed in practice by our use of the CTDE paradigm, with the reward network conditioned on the true state $s$: these results show how advantageous it is to resort to centralized training of the reward network over a local approximation as in the algorithm from \citep{reward2}. In particular, the good performance on the \texttt{25m} map, involving a large number of agents, shows again the better scalability of the proposed centralized reward network with respect to a centralized $Q$-function critic, where the effects of bootstrapping and the moving target problem become even more severe when the number of agents grows larger.

A noticeable exception is represented by the \texttt{so\textunderscore many\textunderscore baneling} map, where COMA is achieving good results, where neither Dr.ReinforceR and all the other baselines are outperformed. An hypothesis for this is that the difference return $\Delta G^i_t$ is driving each agent into performing the more rewarding actions at each step (for example, hit an opponent if possible), but in the long run this strategy is not a winning one on this particular map, and thus the agents never experience the high reward for winning and are thus never able to change their learned behaviours. Reasoning on the more complex $Q$-function here could be helpful to drive the policies towards a winning situation at the cost of performing actions that seem suboptimal at the current step. In Appendix \ref{sec:smac} we also report the median win rate obtained by the investigated algorithms. From these, we can observe that, even when Dr.ReinforceR is capable of learning high return policies, these may not be sufficient to also achieve a significant win rate in some scenarios (for\FloatBarrier example, on more challenging maps with asymmetrical teams, like \texttt{6h\textunderscore vs\textunderscore 6h} or \texttt{MMM2}, although the gap in achieved median returns with respect to all the other baselines is very significant).

\section{Related Work}
Application of reinforcement learning techniques to multi-agent systems has a long and fruitful history \citep{marl}. Fundamental works like \citep{qlearn} were the first to investigate the applicability of these algorithms in the form of independent learners to cooperative settings, while \citep{dynamics} further analyses the dynamics of their learning process depending on their consideration of the others. Specific algorithms to improve performance by learning the value of cooperation and coordination has been proposed, like in \citep{coordinated}. Also policy gradients has been widely applied to cooperative settings: \citep{mapg} first proved convergence of distributed policy gradients to the same solution obtained by a centralized agent. Closer to our approach are recent works of policy gradients with deep reinforcement learning: for example, \citep{coma} presents COMA, that efficiently estimates a counterfactual baseline for a team of cooperating homogeneous agents using a centralized critic for discrete problems. \citep{po} takes inspiration from game theory and regret minimization to design a family of algorithms based on counterfactual regret minimization for partially observable domains. \citep{consensus} combines actor-critic with a consensus mechanism to solve cooperative problems when communication is available, and also provide convergence proof under certain conditions, while \citep{dop} combines value-decomposition with a counterfactual baseline in the actor-critic framework. All the above algorithms use the action-value function in order to compute the counterfactuals, that can be difficult to learn because of bootstrapping target problems. Our method on the other hand learns the reward function to approximate the difference rewards, that do not suffer from these problems. For a more extensive review on recent deep reinforcement learning algorithms for cooperative multi-agent systems see \citep{ctde} and \citep{critique}.

Another important line of work for us is that on difference rewards \citep{aristocrat}, that already served as a basis for some existing algorithms like COMA. \citep{airflow} uses difference rewards in learning to control a fleet of air vehicles that has to coordinate on traffic routes. \citep{credit} proposes two difference rewards-based value-functions to improve multi-agent actor-critic in the $\mathbb{C}$Dec-POMDP setting, while \citep{potential} combines difference rewards and dynamic potential-based reward shaping \citep{theoretical,shaping} to improve performance and convergence speed. Also, \citep{objective} applies difference rewards to multi-objective problems, speeding up learning and improving performance. Finally, some works try to improve the standard definition of difference rewards: \citep{difference} proposes to approximate difference rewards using tabular linear functions when it is not possible to access the value of the reward for the default action through a simulator, while \citep{reward1} and \citep{reward2} both propose to approximate the difference rewards by using only local information. With the exception of the latter, the aforementioned works all uses value-based algorithms to learn, while our method resorts to a policy gradients algorithm, that recently showed great premise in multi-agent learning contexts.

Finally, the idea of learning the reward function has also received some attention, especially in the single-agent setting. \citep{reward} learns an additional state-reward network to reduce variance when updating the value-function in noisy environments, \citep{local} uses Kalman filters in problems with noise coming from different sources to explicitly learn about the reward function and the noise term, while \citep{auxiliary} proposes UNREAL, that additionally learn to predict rewards as an auxiliary task to improve deep reinforcement learning agent performance. Finally, \citep{factored} learns a factored reward representation for multi-agent cooperative one-shot games. While these works learn the reward function, these are mainly limited to the single-agent setting (with the exceptions of \citep{local} and \citep{factored}, which analyse different aspects from our and can be considered orthogonal and used in conjunction with our work) and do not use it to approximate the difference rewards.

\section{Discussion and Future Work}
Despite the good empirical results obtained by Dr.ReinforceR in the experiments detailed above, Lemma \ref{thm:wrong} clearly shows that the combination of difference rewards and policy gradients in a partially observable setting has in general no theoretical guarantees of convergence, as the baseline that is subtracted from the distributed policy gradients is not unbiased. This means that experimental performance could be unstable or arbitrarily bad.

Here we try and identify possible alternatives to our investigated formulation that are capable of restoring the theoretical convergence guarantees. This could be ensured by replacing the current baseline $B^i(s_{t:T},a^{-i}_{t:T},h^i_{t:T})$ in Equation \eqref{eq:baseline_po} with a new $\tilde{B}^i(s_{t:T},a^{-i}_{t:T},h^i_t)$ that does not depend on the currently selected action $a^i_t$ via the local histories $h^i_{t+1:T}$. We identified a couple of possible solutions, that are not however investigated in the current paper:

\begin{enumerate}
\item Replace the current agent policy $\pi_{\theta^i}(a^i_t\vert h^i_t)$ with a fixed policy $\mu(a^i_t)$ (a type of difference rewards also proposed in \citep{aristocrat}):
$$\tilde{B}^i(s_{t:T},a^{-i}_{t:T})=\sum_{l=0}^{T-t-1}\gamma^l\sum_{c^i\in A^i}\mu(c^i)\cdot R(s_{t+l},\langle c^i,a^{-i}_{t+l}\rangle).$$
This idea however would require to fix beforehand a policy $\mu(a^i_t)$ to use, a choice similar to that of the default action \citep{aristocrat,difference} in Equation \eqref{eq:dr}.
\item Use the current agent policy $\pi_{\theta^i}(a^i_t\vert h^i_t)$, but do not condition on the local histories for the episode time steps $h^i_{t+1:T}$, but only on the current local history $h^i_t$:
$$\tilde{B}^i(s_{t:T},a^{-i}_{t:T},h^i_t)=\sum_{l=0}^{T-t-1}\gamma^l\sum_{c^i\in A^i}\pi_{\theta^i}(c^i\vert h^i_t)\cdot R(s_{t+l},\langle c^i,a^{-i}_{t+l}\rangle).$$
\item Use a potential-based reward shaping mechanism. These are known to retain policy invariance in single-agent reinforcement learning, both under full observability \citep{invariance} as well as partial one \citep{pomdp}, while in multi-agent systems converge to the same set of Nash Equilibria of the policies learned with the shared reward alone \citep{theoretical,shaping}, while improve learning performance. In general, a potential-based reward shaping mechanism provides the agents with a shaped reward $\hat{r}$:

\begin{equation*}
\hat{r}\triangleq r_t+\underbrace{F(s_t,s_{t+1})}_{\tilde{B}^i},
\end{equation*}
where $F(s_t,s_{t+1})=\gamma\phi(s_{t+1})-\phi(s_t)$, and $\phi(s)$ is a suitable function that provides additional information on the state $s$, so that $F(s_t,s_{t+1})$ is unbiased in expectation with respect to the policy gradients, and thus keep the convergence guarantees.

A particular form of potential-based reward shaping, that combines its benefit with those of difference rewards, is Counterfactual as Potential \citep{potential}, in which the potential-based reward shaping function is:

\begin{equation*}
\phi(s)=R(s^{-i}),
\end{equation*}
and $R(s^{-i})$ is a reward term that marginalizes out the presence of agent $i$. It is to note that, while in general such term needs to be provided by the environment itself via the use of a simulator (as with difference rewards), with our learned reward network that issue could be overcome.
\end{enumerate}

Another crucial aspect of Dr.ReinforceR is that it resorts to the CTDE framework \citep{planning,ctde} to learn its centralized reward network. Although CTDE is a widely used and accepted methodology \citep{coma,maddpg}, it indeed restricts the training procedure to be carried out offline and in a separate step from the agents execution. There are settings however in which being able to retain decentralized execution while being able to learn during real interactions with the environment may be required. In such cases, it may be appropriate to replace the centralized reward network $R_{\psi}$ with a set of individual reward networks $R_{\psi^i}(s,a^i)$ (or $R_{\psi^i}(h^i_t,a^i)$ when learning in a Dec-POMDP), one for each agent $i$, to approximate the difference rewards computation. These local networks are learning the expected value of the reward for each agent when performing a certain action in a given situation, independently of what the others are doing

\begin{equation*}
R_{\psi^i}(s,a^i)\approx\mathbb{E}_{\pi_{\theta^{-i}}}\left[R_{\psi}(s,\langle a^i,a^{-i}\rangle)\right].
\end{equation*}

This additional approximation is suitable to break the dependence from the CTDE paradigm, although it may introduce approximation error in the local reward terms via the expectation over the other agents policies (while the centralized reward network $R_{\psi}$ is in principle capable of perfectly approximate the reward function $R(s,a)$ and thus provide the policy gradients with perfect difference rewards values).

\section{Conclusions}
In cooperative multi-agent systems agents face the problem of figuring out how they are contributing to the overall performance of the team in which only a shared reward signal is available. Previous methods like COMA, a state-of-the-art difference rewards algorithm, used the action-value function to compute an individual signal for each agent to drive policy gradients. However, learning a centralized $Q$-function is problematic due to inherent factors like bootstrapping or the dependence on the joint action.

We proposed Dr.Reinforce, a novel algorithm that tackles multi-agent credit assignment by combining policy gradients and differencing of the reward function. When the true reward function is known, our method outperforms all compared baselines on two benchmark multi-agent cooperative environments with a shared reward signal, and scales much better with the number of agents, a crucial capability for real cooperative multi-agent scenarios.

Additionally, for settings in which such reward function is not known, we additionally proposed Dr.ReinforceR, that learns a centralized reward network used for estimating the difference rewards. Although the reward function has got the same dimensionality of the $Q$-function used by COMA, its learning is easier as no bootstrapping or moving target is involved. Although learning a reward network capable of appropriately generalizing across the state-action space may be challenging and have pitfalls, we showed how Dr.ReinforceR is able to outperform COMA, a state-of-the-art difference rewards algorithm, and achieve higher performance.

Therefore, exploring how to improve the representational capabilities of the reward network to allow it to better generalize to unseen situations and to be applicable to more complex scenarios is an interesting future direction that could further push the performance of these methods.

\section*{Acknowledgements}
\begin{wrapfigure}{r}{0.2\columnwidth}
\vspace{-17pt}
\includegraphics[width=0.2\columnwidth]{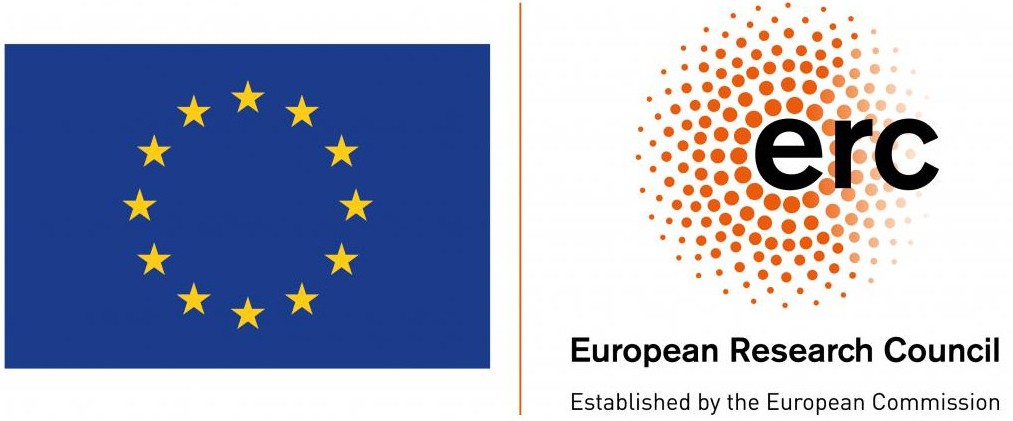}
\end{wrapfigure}
This work was supported by an Azure for Research computing grant. F.A.O.\ is funded by EPSRC First Grant EP/R001227/1.
This project received funding from the European Research Council (ERC) under the European Union's Horizon 2020 research and innovation programme (grant agreement No.~758824 \textemdash INFLUENCE).

\bibliographystyle{plainnat}
\bibliography{Bibliography}

\begin{thebibliography}{57}
\providecommand{\natexlab}[1]{#1}
\providecommand{\url}[1]{\texttt{#1}}
\expandafter\ifx\csname urlstyle\endcsname\relax
  \providecommand{\doi}[1]{doi: #1}\else
  \providecommand{\doi}{doi: \begingroup \urlstyle{rm}\Url}\fi

\bibitem[Agogino and Tumer(2008)]{visualizing}
Adrian~K. Agogino and Kagan Tumer.
\newblock Analyzing and visualizing multiagent rewards in dynamic and
  stochastic domains.
\newblock \emph{Autonomous Agents and Multi-Agent Systems}, 17:\penalty0
  320--338, 2008.

\bibitem[Bonferroni(1936)]{bonferroni}
Carlo~E. Bonferroni.
\newblock Teoria statistica delle classi e calcolo delle probabilit{\`a}.
\newblock \emph{Pubblicazioni del R Istituto Superiore di Scienze Economiche e
  Commerciali di Firenze}, 8:\penalty0 3--62, 1936.

\bibitem[Bottou(1998)]{sgd}
Léon Bottou.
\newblock Online learning and stochastic approximations, 1998.

\bibitem[Boutilier(1996)]{mmdp}
Craig Boutilier.
\newblock Planning, learning and coordination in multiagent decision processes.
\newblock In \emph{Proceedings of the 6th Conference on Theoretical Aspects of
  Rationality and Knowledge}, TARK ’96, pages 195–--210. Morgan Kaufmann
  Publishers Inc., 1996.

\bibitem[Busoniu et~al.(2008)Busoniu, Babuska, and De~Schutter]{marl}
Lucian Busoniu, Robert Babuska, and Bart De~Schutter.
\newblock A comprehensive survey of multiagent reinforcement learning.
\newblock \emph{IEEE Transactions on Systems, Man, and Cybernetics, Part C
  (Applications and Reviews)}, 38:\penalty0 156--172, 2008.

\bibitem[Cao et~al.(2013)Cao, Yu, Ren, and Chen]{mas}
Yongcan Cao, Wenwu Yu, Wei Ren, and Guanrong Chen.
\newblock An overview of recent progress in the study of distributed
  multi-agent coordination.
\newblock \emph{IEEE Transactions on Industrial Informatics}, 9\penalty0
  (1):\penalty0 427--438, 2013.

\bibitem[Castellini et~al.(2019)Castellini, Oliehoek, Savani, and
  Whiteson]{factored}
Jacopo Castellini, Frans~A. Oliehoek, Rahul Savani, and Shimon Whiteson.
\newblock The representational capacity of action-value networks for
  multi-agent reinforcement learning.
\newblock In \emph{Proceedings of the 18th International Conference on
  Autonomous Agents and Multiagent Systems}, AAMAS'19, pages 1862--1864.
  International Foundation for Autonomous Agents and Multiagent Systems, 2019.

\bibitem[Castellini et~al.(2021)Castellini, Devlin, Oliehoek, and Savani]{ea}
Jacopo Castellini, Sam Devlin, Frans~A. Oliehoek, and Rahul Savani.
\newblock Difference rewards policy gradients.
\newblock In \emph{Proceedings of the 20th International Conference on
  Autonomous Agents and MultiAgent Systems}, AAMAS'21, pages 1475--1477.
  International Foundation for Autonomous Agents and Multiagent Systems, 2021.

\bibitem[Chang et~al.(2003)Chang, Ho, and Kaelbling]{local}
Yu-Han Chang, Tracey Ho, and Leslie~P. Kaelbling.
\newblock All learning is local: Multi-agent learning in global reward games.
\newblock In \emph{Advances in Neural Information Processing Systems 16},
  NIPS'03, pages 807--814. MIT Press, 2003.

\bibitem[Chung et~al.(2014)Chung, Gulcehre, Cho, and Bengio]{gru}
Junyoung Chung, Caglar Gulcehre, Kyung~Hyun Cho, and Yoshua Bengio.
\newblock Empirical evaluation of gated recurrent neural networks on sequence
  modeling.
\newblock In \emph{NIPS'14 Workshop on Deep Learning and Representation
  Learning}, NIPS'14. 2014.

\bibitem[Claus and Boutilier(1998)]{dynamics}
Caroline Claus and Craig Boutilier.
\newblock The dynamics of reinforcement learning in cooperative multiagent
  systems.
\newblock In \emph{Proceedings of the 15th/10th AAAI Conference on Artificial
  Intelligence/Innovative Applications of Artificial Intelligence},
  AAAI'98/IAAI'98, pages 746--752. American Association for Artificial
  Intelligence, 1998.

\bibitem[Colby et~al.(2014)Colby, Curran, Rebhuhn, and Tumer]{reward2}
Mitchell~K. Colby, William Curran, Carrie Rebhuhn, and Kagan Tumer.
\newblock Approximating difference evaluations with local knowledge.
\newblock In \emph{Proceedings of the 13th International Conference on
  Autonomous Agents and Multiagent Systems}, AAMAS'14, pages 1577--1578.
  International Foundation for Autonomous Agents and Multiagent Systems, 2014.

\bibitem[Colby et~al.(2015)Colby, Curran, and Tumer]{reward1}
Mitchell~K. Colby, William Curran, and Kagan Tumer.
\newblock Approximating difference evaluations with local information.
\newblock In \emph{Proceedings of the 14th International Conference on
  Autonomous Agents and Multiagent Systems}, AAMAS'15, pages 1659--1660.
  International Foundation for Autonomous Agents and Multiagent Systems, 2015.

\bibitem[Devlin and Kudenko(2011)]{theoretical}
Sam Devlin and Daniel Kudenko.
\newblock Theoretical considerations of potential-based reward shaping for
  multi-agent systems.
\newblock In \emph{{AAMAS}}, pages 225--232. International Foundation for
  Autonomous Agents and Multiagent Systems, 2011.

\bibitem[Devlin and Kudenko(2012)]{shaping}
Sam Devlin and Daniel Kudenko.
\newblock Dynamic potential-based reward shaping.
\newblock In \emph{Proceedings of the 11th International Conference on
  Autonomous Agents and Multiagent Systems}, AAMAS'12, pages 433--440.
  International Foundation for Autonomous Agents and Multiagent Systems, 2012.

\bibitem[Devlin et~al.(2014)Devlin, Yliniemi, Kudenko, and Tumer]{potential}
Sam Devlin, Logan Yliniemi, Daniel Kudenko, and Kagan Tumer.
\newblock Potential-based difference rewards for multiagent reinforcement
  learning.
\newblock In \emph{Proceedings of the 13th International Conference on
  Autonomous Agents and Multiagent Systems}, AAMAS'14, pages 165--172.
  International Foundation for Autonomous Agents and Multiagent Systems, 2014.

\bibitem[Eck et~al.(2015)Eck, Soh, Devlin, and Kudenko]{pomdp}
Adam Eck, Leen-Kiat Soh, Sam Devlin, and Daniel Kudenko.
\newblock Potential-based reward shaping for finite horizon online pomdp
  planning.
\newblock \emph{Autonomous Agents and Multi-Agent Systems}, 30:\penalty0
  403--445, 2015.

\bibitem[Foerster et~al.(2018)Foerster, Farquhar, Afouras, Nardelli, and
  Whiteson]{coma}
Jakob~N. Foerster, Gregory Farquhar, Triantafyllos Afouras, Nantas Nardelli,
  and Shimon Whiteson.
\newblock Counterfactual multi-agent policy gradients.
\newblock In \emph{Proceedings of the 32th AAAI Conference on Artificial
  Intelligence}, AAAI'18, pages 2974--2982. AAAI Press, 2018.

\bibitem[Fujimoto et~al.(2018)Fujimoto, van Hoof, and Meger]{lr}
Scott Fujimoto, Herke van Hoof, and David Meger.
\newblock Addressing function approximation error in actor-critic methods.
\newblock In \emph{Proceedings of the 36th International Conference on Machine
  Learning}, ICML'18, pages 1587--1596. PMLR, 2018.

\bibitem[Greensmith et~al.(2004)Greensmith, Bartlett, and Baxter]{variance}
Evan Greensmith, Peter~L. Bartlett, and Jonathan Baxter.
\newblock Variance reduction techniques for gradient estimates in reinforcement
  learning.
\newblock \emph{Journal of Machine Learning Research}, 5:\penalty0
  1471--–1530, 2004.

\bibitem[Guestrin et~al.(2002)Guestrin, Lagoudakis, and Parr]{coordinated}
Carlos Guestrin, Michail~G. Lagoudakis, and Ronald Parr.
\newblock Coordinated reinforcement learning.
\newblock In \emph{Proceedings of the 19th International Conference on Machine
  Learning}, ICML'02, pages 227--234. Morgan Kaufmann Publishers Inc., 2002.

\bibitem[Gupta et~al.(2017)Gupta, Egorov, and Kochenderfer]{cooperative}
Jayesh~K. Gupta, Maxim Egorov, and Mykel~J. Kochenderfer.
\newblock Cooperative multi-agent control using deep reinforcement learning.
\newblock \emph{Autonomous Agents and Multi-Agent Systems}, pages 66--83, 2017.

\bibitem[Hansen et~al.(2004)Hansen, Bernstein, and Zilberstein]{posg}
Eric~A. Hansen, Daniel~S. Bernstein, and Shlomo Zilberstein.
\newblock Dynamic programming for partially observable stochastic games.
\newblock In \emph{Proceedings of the 19th AAAI Conference on Artifical
  Intelligence}, AAAI'04, pages 709--715. AAAI Press, 2004.

\bibitem[Hernandez-Leal et~al.(2019)Hernandez-Leal, Kartal, and
  Taylor]{critique}
Pablo Hernandez-Leal, Bilal Kartal, and Matthew~E. Taylor.
\newblock A survey and critique of multiagent deep reinforcement learning.
\newblock \emph{Autonomous Agents and Multi-Agent Systems}, 33:\penalty0
  750--797, 2019.

\bibitem[Jaderberg et~al.(2016)Jaderberg, Mnih, Czarnecki, Schaul, Leibo,
  Silver, and Kavukcuoglu]{auxiliary}
Max Jaderberg, Volodymyr Mnih, Wojciech~M. Czarnecki, Tom Schaul, Joel~Z.
  Leibo, David Silver, and Koray Kavukcuoglu.
\newblock Reinforcement learning with unsupervised auxiliary tasks.
\newblock \emph{arXiv}, abs/1611.05397, 2016.

\bibitem[Kaelbling et~al.(1996)Kaelbling, Littman, and Moore]{survey}
Leslie~P. Kaelbling, Michael~L. Littman, and Andrew~W. Moore.
\newblock Reinforcement learning: A survey.
\newblock \emph{Journal of Artificial Intelligence Research}, 4\penalty0
  (1):\penalty0 237--285, 1996.

\bibitem[Konda and Tsitsiklis(2003)]{ac}
Vijay~R. Konda and John~N. Tsitsiklis.
\newblock On actor-critic algorithms.
\newblock \emph{SIAM Journal of Control and Optimization}, 42\penalty0
  (4):\penalty0 1143--1166, 2003.

\bibitem[Kraemer and Banerjee(2016)]{planning}
Landon Kraemer and Bikramjit Banerjee.
\newblock Multi-agent reinforcement learning as a rehearsal for decentralized
  planning.
\newblock \emph{Neurocomputing}, 190:\penalty0 82--94, 2016.

\bibitem[Lowe et~al.(2017)Lowe, Wu, Tamar, Harb, Abbeel, and Mordatch]{maddpg}
Ryan Lowe, Yi~Wu, Aviv Tamar, Jean Harb, Pieter Abbeel, and Igor Mordatch.
\newblock Multi-agent actor-critic for mixed cooperative-competitive
  environments.
\newblock In \emph{Advances in Neural Information Processing Systems 30},
  NIPS'17, pages 6379--6390. Curran Associates, Inc., 2017.

\bibitem[Matignon et~al.(2012)Matignon, Laurent, and Le~Fort-Piat]{independent}
Laetitia Matignon, Guillaume~J. Laurent, and Nadine Le~Fort-Piat.
\newblock Independent reinforcement learners in cooperative markov games: a
  survey regarding coordination problems.
\newblock \emph{Knowledge Engineering Review}, 27\penalty0 (1):\penalty0 1--31,
  2012.

\bibitem[Mnih et~al.(2015)Mnih, Kavukcuoglu, Silver, Rusu, Veness, Bellemare,
  Graves, Riedmiller, Fidjeland, Ostrovski, Petersen, Beattie, Sadik,
  Antonoglou, King, Kumaran, Wierstra, Legg, and Hassabis]{dqn}
Volodymyr Mnih, Koray Kavukcuoglu, David Silver, Andrei~A. Rusu, Joel Veness,
  Marc~G. Bellemare, Alex Graves, Martin Riedmiller, Andreas~K. Fidjeland,
  Georg Ostrovski, Stig Petersen, Charles Beattie, Amir Sadik, Ioannis
  Antonoglou, Helen King, Dharshan Kumaran, Daan Wierstra, Shane Legg, and
  Demis Hassabis.
\newblock Human-level control through deep reinforcement learning.
\newblock \emph{Nature}, 518\penalty0 (7540):\penalty0 529--533, 2015.

\bibitem[Mnih et~al.(2016)Mnih, Badia, Mirza, Graves, Lillicrap, Harley,
  Silver, and Kavukcuoglu]{a3c}
Volodymyr Mnih, Adrià~P. Badia, Mehdi Mirza, Alex Graves, Timothy Lillicrap,
  Tim Harley, David Silver, and Koray Kavukcuoglu.
\newblock Asynchronous methods for deep reinforcement learning.
\newblock In \emph{Proceedings 33rd International Conference on Machine
  Learning}, ICML'16, pages 1928--1937. PMLR, 2016.

\bibitem[Nair and Hinton(2010)]{relu}
Vinod Nair and Geoffrey~E. Hinton.
\newblock Rectified linear units improve restricted boltzmann machines.
\newblock In \emph{Proceedings of the 27th International Conference on
  International Conference on Machine Learning}, ICML'10, pages 807--814.
  Omnipress, 2010.

\bibitem[Ng et~al.(1999)Ng, Harada, and Russell]{invariance}
Andrew~Y. Ng, Daishi Harada, and Stuart Russell.
\newblock Policy invariance under reward transformations: Theory and
  application to reward shaping.
\newblock In \emph{Proceedings of the 16th International Conference on Machine
  Learning}, ICML'99, pages 278--287. Morgan Kaufmann, 1999.

\bibitem[Nguyen et~al.(2018)Nguyen, Kumar, and Lau]{credit}
Duc~T. Nguyen, Akshat Kumar, and Hoong~C. Lau.
\newblock Credit assignment for collective multiagent rl with global rewards.
\newblock In \emph{Advances in Neural Information Processing Systems 32},
  NIPS'18, pages 8113--8124. Curran Associates, Inc., 2018.

\bibitem[Nissim and Brafman(2012)]{maastar}
Raz Nissim and Ronen~I. Brafman.
\newblock Multi-agent a* for parallel and distributed systems.
\newblock In \emph{Proceedings of the 11th International Conference on
  Autonomous Agents and Multiagent Systems}, AAMAS'12, pages 1265--1266.
  International Foundation for Autonomous Agents and Multiagent Systems, 2012.

\bibitem[Oliehoek and Amato(2016)]{decpomdp}
Frans~A. Oliehoek and Christoper Amato.
\newblock \emph{A Concise Introduction to Decentralized POMDPs}.
\newblock Springer Publishing Company, Incorporated, 1st edition, 2016.

\bibitem[Papoudakis et~al.(2019)Papoudakis, Christianos, Rahman, and
  Albrecht]{ctde}
Georgios Papoudakis, Filippos Christianos, Arrasy Rahman, and Stefano~V.
  Albrecht.
\newblock Dealing with non-stationarity in multi-agent deep reinforcement
  learning.
\newblock \emph{arXiv}, abs/1906.04737, 2019.

\bibitem[Peshkin et~al.(2000)Peshkin, Kim, Meuleau, and Kaelbling]{mapg}
Leonid Peshkin, Kee-Eung Kim, Nicolas Meuleau, and Leslie~Pack Kaelbling.
\newblock Learning to cooperate via policy search.
\newblock In \emph{Proceedings of the 16th Conference on Uncertainty in
  Artificial Intelligence}, UAI’00, pages 489--–496. Morgan Kaufmann
  Publishers Inc., 2000.

\bibitem[Proper and Tumer(2012)]{difference}
Scott Proper and Kagan Tumer.
\newblock Modeling difference rewards for multiagent learning.
\newblock In \emph{Proceedings of the 11th International Conference on
  Autonomous Agents and Multiagent Systems}, AAMAS'12, pages 1397--1398.
  International Foundation for Autonomous Agents and Multiagent Systems, 2012.

\bibitem[Romoff et~al.(2018)Romoff, Henderson, Piche, Francois-Lavet, and
  Pineau]{reward}
Joshua Romoff, Peter Henderson, Alexandre Piche, Vincent Francois-Lavet, and
  Joelle Pineau.
\newblock Reward estimation for variance reduction in deep reinforcement
  learning.
\newblock In \emph{Proceedings of the 6th International Conference on Learning
  Representations}, ICLR'18, 2018.

\bibitem[Samvelyan et~al.(2019)Samvelyan, Rashid, Schr{\"{o}}eder~de Witt,
  Farquhar, Nardelli, Rudner, Hung, Torr, Foerster, and Whiteson]{smac}
Mikayel Samvelyan, Tabish Rashid, Christian Schr{\"{o}}eder~de Witt, Gregory
  Farquhar, Nantas Nardelli, Tim G.~J. Rudner, Chia-Man Hung, Philiph H.~S.
  Torr, Jakob Foerster, and Shimon Whiteson.
\newblock The starcraft multi-agent challenge.
\newblock \emph{arXiv}, abs/1902.04043, 2019.

\bibitem[Srinivasan et~al.(2018)Srinivasan, Lanctot, Zambaldi, P{\'e}rolat,
  Tuyls, Munos, and Bowling]{po}
Sriram Srinivasan, Marc Lanctot, Vinicius Zambaldi, Julien P{\'e}rolat, Karl
  Tuyls, Remi Munos, and Michael Bowling.
\newblock Actor-critic policy optimization in partially observable multiagent
  environments.
\newblock In \emph{Advances in Neural Information Processing Systems 32},
  NIPS'18, pages 3426--3439. Curran Associates Inc., 2018.

\bibitem[Sutton(1988)]{td}
Richard~S. Sutton.
\newblock Learning to predict by the methods of temporal differences.
\newblock \emph{Machine Learning}, 3\penalty0 (1):\penalty0 9--44, 1988.

\bibitem[Sutton and Barto(1998)]{rl}
Richard~S. Sutton and Andrew~G. Barto.
\newblock \emph{Introduction to Reinforcement Learning}.
\newblock MIT Press, 1st edition, 1998.

\bibitem[Sutton et~al.(2000)Sutton, McAllester, Singh, and Mansour]{pg}
Richard~S. Sutton, David~A. McAllester, Satinder~P. Singh, and Yishay Mansour.
\newblock Policy gradient methods for reinforcement learning with function
  approximation.
\newblock In \emph{Advances in Neural Information Processing Systems 12},
  NIPS'00, pages 1057--1063. MIT Press, 2000.

\bibitem[Tan(1993)]{qlearn}
Ming Tan.
\newblock Multi-agent reinforcement learning: Independent vs. cooperative
  agents.
\newblock In \emph{Proceedings of the 10th International Conference on Machine
  Learning}, ICML'93, pages 330--337. Morgan Kaufmann Publishers Inc., 1993.

\bibitem[Tumer and Agogino(2007)]{airflow}
Kagan Tumer and Adrian Agogino.
\newblock Distributed agent-based air traffic flow management.
\newblock In \emph{Proceedings of the 6th International Conference on
  Autonomous Agents and Multiagent Systems}, AAMAS'07. Association for
  Computing Machinery, 2007.

\bibitem[Van~der Pol and Oliehoek(2016)]{traffic}
Elise Van~der Pol and Frans~A. Oliehoek.
\newblock Coordinated deep reinforcement learners for traffic light control.
\newblock In \emph{NIPS'16 Workshop on Learning, Inference and Control of
  Multi-Agent Systems}, NIPS'16. 2016.

\bibitem[Vinyals et~al.(2017)Vinyals, Ewalds, Bartunov, Georgiev, Vezhnevets,
  Yeo, Makhzani, K{\"{u}}ttler, Agapiou, Schrittwieser, Quan, Gaffney,
  Petersen, Simonyan, Schaul, van Hasselt, Silver, Lillicrap, Calderone, Keet,
  Brunasso, Lawrence, Ekermo, Repp, and Tsing]{sc2}
Oriol Vinyals, Timo Ewalds, Sergey Bartunov, Petko Georgiev, Alexander~(Sasha)
  Vezhnevets, Michelle Yeo, Alireza Makhzani, Heinrich K{\"{u}}ttler, John~P.
  Agapiou, Julian Schrittwieser, John Quan, Stephen Gaffney, Stig Petersen,
  Karen Simonyan, Tom Schaul, Hado van Hasselt, David Silver, Timothy~P.
  Lillicrap, Kevin Calderone, Paul Keet, Anthony Brunasso, David Lawrence,
  Anders Ekermo, Jacob Repp, and Rodney Tsing.
\newblock {StarCraft II}: A new challenge for reinforcement learning.
\newblock \emph{arXiv}, abs/1708.04782, 2017.

\bibitem[Wang et~al.(2020)Wang, Han, Wang, Dong, and Zhang]{dop}
Yihan Wang, Beining Han, Tonghan Wang, Heng Dong, and Chongjie Zhang.
\newblock Off-policy multi-agent decomposed policy gradients.
\newblock \emph{arXiv}, abs/2007.12322, 2020.

\bibitem[Williams(1992)]{reinforce}
Ronald~J. Williams.
\newblock Simple statistical gradient-gollowing algorithms for connectionist
  reinforcement learning.
\newblock \emph{Machine Learning}, 8\penalty0 (3), 1992.

\bibitem[Wolpert and Tumer(1999)]{coin}
David~H. Wolpert and Kagan Tumer.
\newblock An introduction to collective intelligence.
\newblock Technical report, NASA-ARC-IC-99-63, Nasa Ames Research Center, 1999.

\bibitem[Wolpert and Tumer(2001)]{aristocrat}
David~H. Wolpert and Kagan Tumer.
\newblock Optimal payoff functions for members of collectives.
\newblock \emph{Advances in Complex Systems}, 4:\penalty0 265--280, 2001.

\bibitem[Ye et~al.(2015)Ye, Zhang, and Yang]{sensors}
Dayon Ye, Minji Zhang, and Yu~Yang.
\newblock A multi-agent framework for packet routing in wireless sensor
  networks.
\newblock \emph{Sensors}, 15\penalty0 (5):\penalty0 10026--10047, 2015.

\bibitem[Yliniemi and Tumer(2014)]{objective}
Logan Yliniemi and Kagan Tumer.
\newblock Multi-objective multiagent credit assignment through difference
  rewards in reinforcement learning.
\newblock In \emph{Asia-Pacific Conference on Simulated Evolution and
  Learning}, pages 407--418. Springer International Publishing, 2014.

\bibitem[Zhang and Zavlanos(2019)]{consensus}
Yan Zhang and Michael~M. Zavlanos.
\newblock Distributed off-policy actor-critic reinforcement learning with
  policy consensus.
\newblock \emph{arXiv}, abs/1903.09255, 2019.

\end{thebibliography}

\clearpage
\begin{appendices}
\section{Hyperparameters and Training}
For our implementation, we relied on and expanded the \texttt{pymarl} \citep{smac} framework, as already providing many useful tools and the official implementation of COMA to compare against. The policy networks are either feedforward networks for the two gridworld problems or GRU \citep{gru} to deal with partial observability on SMAC, and both use parameter sharing across agents \citep{cooperative} to reduce training time, while the critics and reward networks use feedforward networks instead.

On each of the three problems independently, the optimal values for policy learning rate $\alpha_{\theta}$ and the critic or reward network one $\alpha_{\omega/\psi}$ \citep{lr} have been found for each method through a gridsearch over a common set of standard values. We used the setting with $N=3$ agents for the two gridworld environments and the map \texttt{2s3z} on SMAC, and the values obtained this way have been subsequently used for the other instances of the same problem respectively. Table \ref{tab:lr} reports the value of the used learning rates $\alpha_{\theta}$ and $\alpha_{\omega/\psi}$ for each compared method on each problem.

\begin{table}[htbp]
\begin{minipage}{\textwidth}
\begin{center}
\caption{Value of the learning rates for each method.}
\label{tab:lr}
\begin{tabular}{lrrrrrr}
\toprule
 & \multicolumn{2}{c}{Multi-Rover} & \multicolumn{2}{c}{Predator-Prey} & \multicolumn{2}{c}{SMAC} \\
Method & $\alpha_{\theta}$ & $\alpha_{\omega/\psi}$ & $\alpha_{\theta}$ & $\alpha_{\omega/\psi}$ & $\alpha_{\theta}$ & $\alpha_{\omega/\psi}$ \\
\midrule
Dr.Reinforce & $25\cdot 10^{-4}$ & N.A. & $25\cdot 10^{-4}$ & N.A. & N.A. & N.A. \\
Dr.ReinforceR & $25\cdot 10^{-4}$ & $25\cdot 10^{-4}$ & $5\cdot 10^{-4}$ & $25\cdot 10^{-4}$ & $25\cdot 10^{-4}$ & $25\cdot 10^{-4}$ \\
COMA & $1\cdot 10^{-2}$ & $5\cdot 10^{-4}$ & $1\cdot 10^{-2}$ & $5\cdot 10^{-4}$ & $25\cdot 10^{-4}$ & $5\cdot 10^{-4}$ \\
\citep{reward2} & $5\cdot 10^{-3}$ & $25\cdot 10^{-4}$ & $5\cdot 10^{-4}$ & $1\cdot 10^{-2}$ & $5\cdot 10^{-4}$ & $5\cdot 10^{-4}$ \\
CentralQ & $5\cdot 10^{-4}$ & $25\cdot 10^{-4}$ & $5\cdot 10^{-4}$ & $5\cdot 10^{-3}$ & $5\cdot 10^{-4}$ & $5\cdot 10^{-4}$ \\
PG & $5\cdot 10^{-4}$ & N.A. & $5\cdot 10^{-4}$ & N.A. & $5\cdot 10^{-4}$ & N.A. \\
\bottomrule
\end{tabular}
\end{center}
\end{minipage}
\end{table}

COMA \citep{coma} and CentralQ critics have been trained using the TD($\lambda$) \citep{td,rl} variant presented in \citep{coma}. For these, the optimal value for the parameter $\lambda$ with the learning rates already found by the gridsearches has also been assessed following the same procedure detailed above, resulting in the values in Table \ref{tab:lambda}:

\begin{table}[htbp]
\begin{minipage}{\textwidth}
\begin{center}
\caption{Value of $\lambda$ for each method.}
\label{tab:lambda}
\begin{tabular}{lrrrrrr}
\toprule
Method & Multi-Rover & Predator-Prey & SMAC \\
\midrule
COMA & $0.4$ & $0.8$ & $0.8$ \\
CentralQ & $0.2$ & $0.8$ & $0.8$ \\
\bottomrule
\end{tabular}
\end{center}
\end{minipage}
\end{table}

All the methods have been trained for the same amount of steps and all their other hyperparameters are set to the corresponding default values provided by the \texttt{pymarl} framework, without being optimized: the reward network $R_{\psi}$ and the critic network $Q_{\omega}$ for CentralQ and COMA all have the same structure, that is a two-layer feedforward neural network with $128$ hidden units using the ReLU activation function \citep{relu} before the final linear layer, as the size of the functions these have to represent is analogous. Every experiment has been repeated $10$ times with different random seeds to assess variance across multiple runs, and in each episode the initial configuration has been randomly reset to avoid the policies to overfit.

\clearpage
\section{Statistical Significance Tests}
To assess the statistical significance of the proposed results, we computed a t-test on each algorithms' pair. The tested null hypothesis is that the samples (the return obtained by the different methods) are taken from the same distribution, meaning that any difference in the corresponding plotted lines are solely due to statistical noise rather than on the different capabilities of the algorithms. The test has been corrected with the Bonferroni correction term \citep{bonferroni} to account for the possible errors across the different pairings. Test results are reported in Figure \ref{fig:tests}, where a $\mathbf{+}$ symbol on a given cell means that the test value for a given algorithms' pair $p>0.05$. It is to note that results on the diagonal are obtained pairing a method with itself, and thus are clearly statistically correlated.

\begin{figure}[htbp]
\centering
\subfloat[Multi-Rover, $N=3$]{
\includegraphics[width=0.35\textwidth]{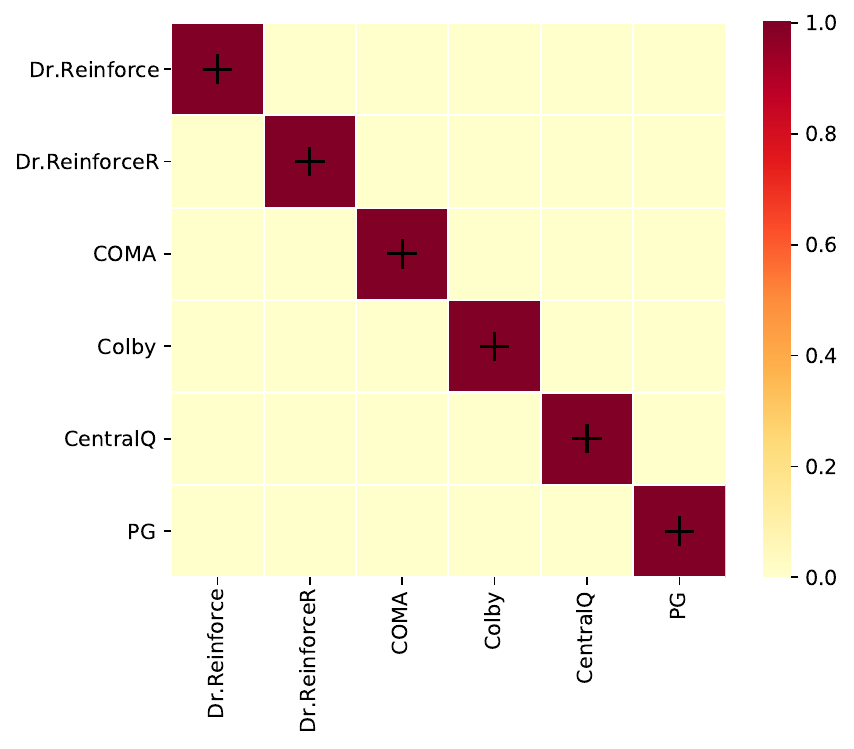}
}
\subfloat[Predator-Prey, $N=3$]{
\includegraphics[width=0.35\textwidth]{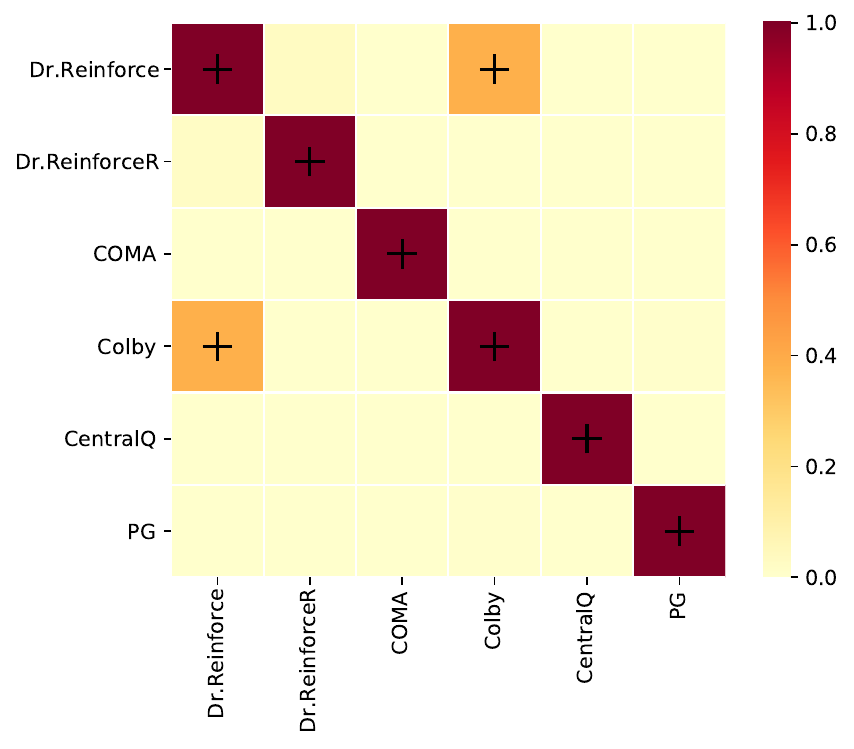}
}

\subfloat[Multi-Rover, $N=5$]{
\includegraphics[width=0.35\textwidth]{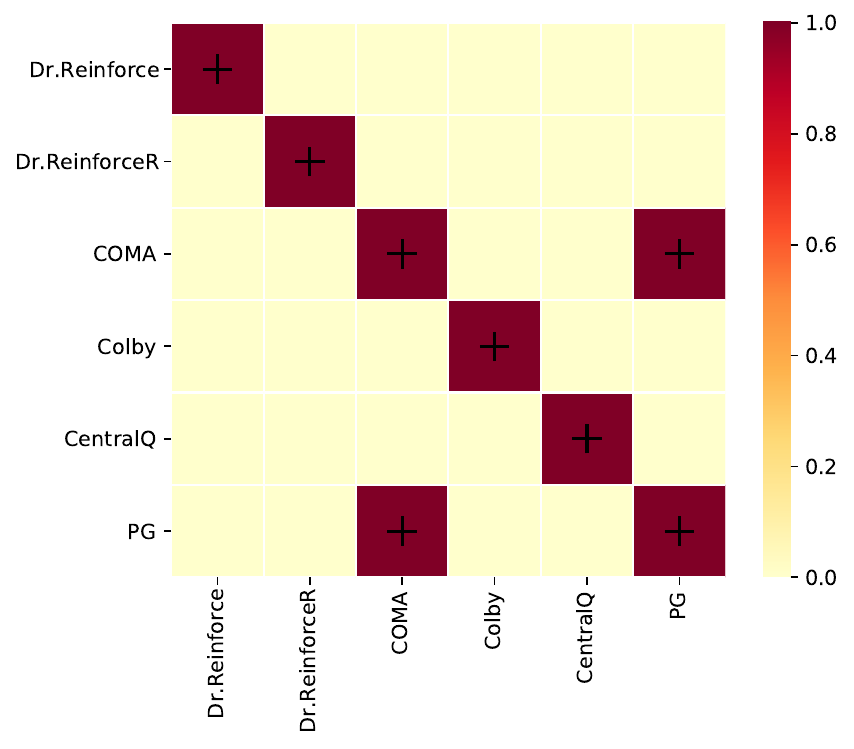}
}
\subfloat[Predator-Prey, $N=5$]{
\includegraphics[width=0.35\textwidth]{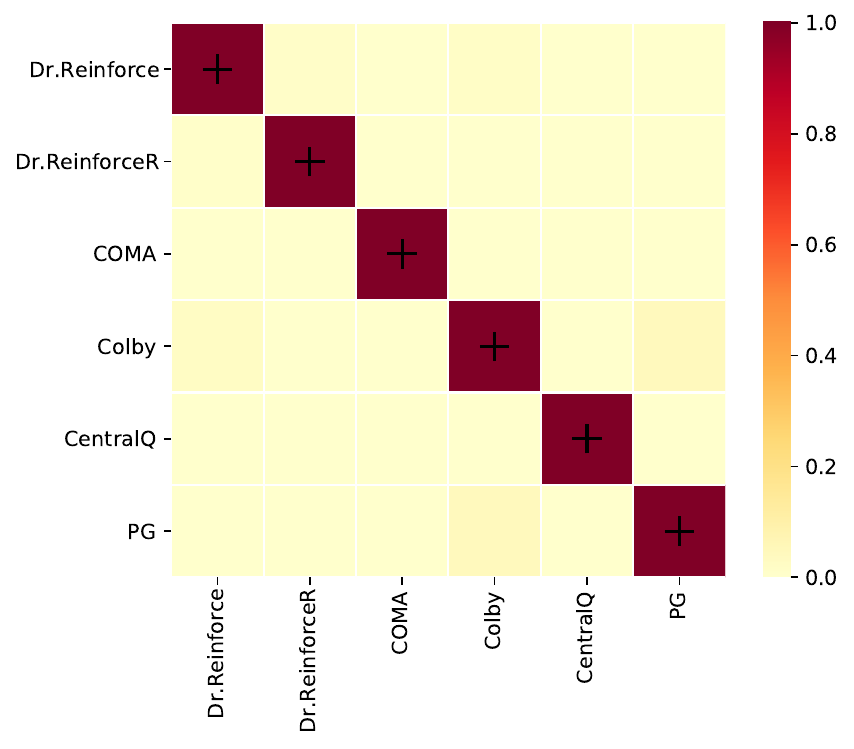}
}

\subfloat[Multi-Rover, $N=8$]{
\includegraphics[width=0.35\textwidth]{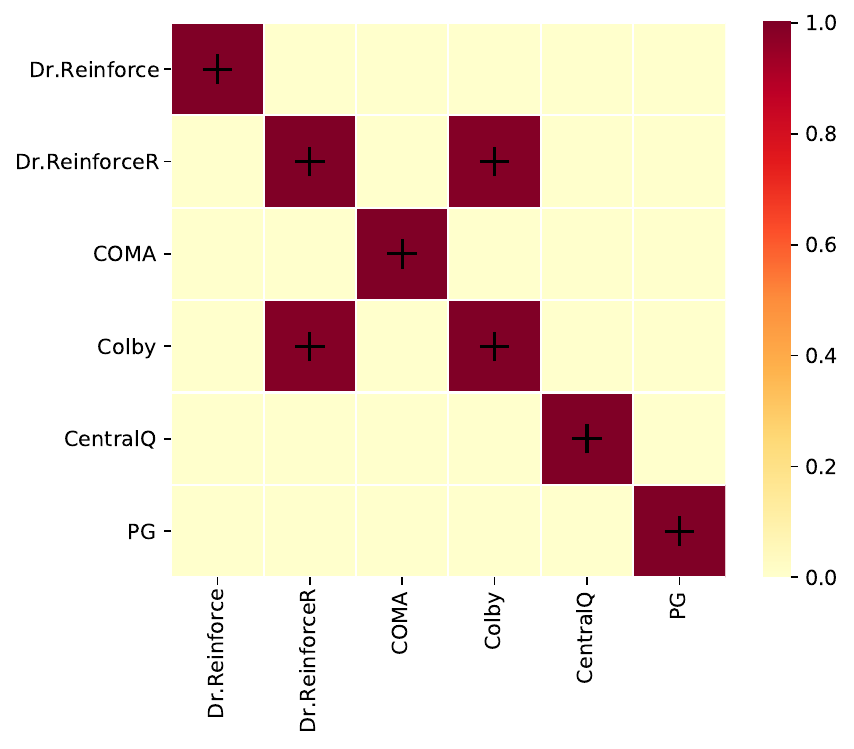}
}
\subfloat[Predator-Prey, $N=8$]{
\includegraphics[width=0.35\textwidth]{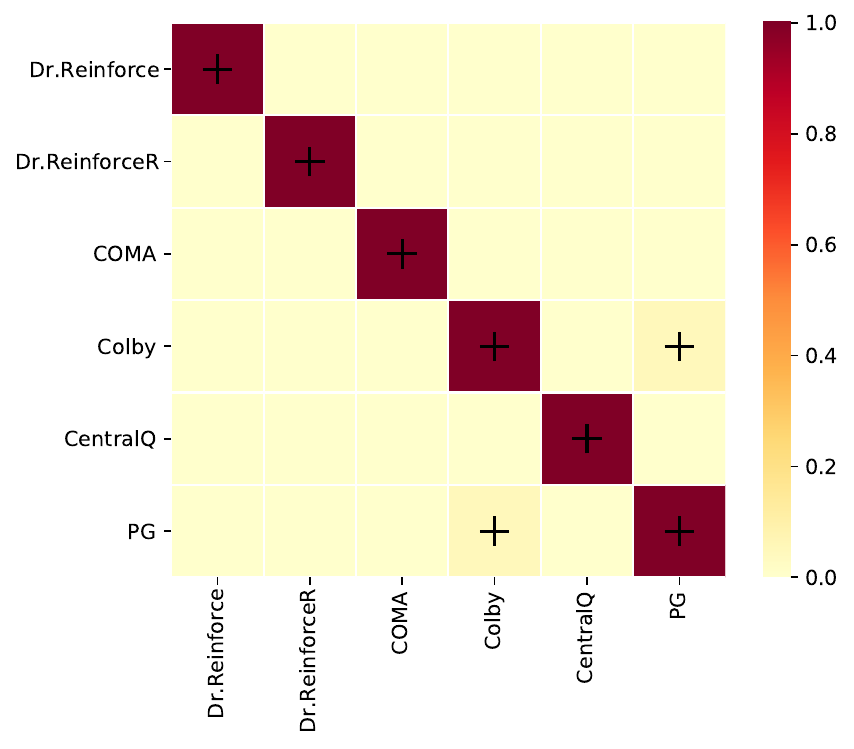}
}
\caption{Results of the t-test for different methods' pairs, corrected using the Bonferroni correction term,  on each problem instance.}
\label{fig:tests}
\end{figure}

\clearpage
\section{Additional Analysis Plots}
\begin{figure}[htbp]
\centering
\subfloat[Multi-Rover]{
\includegraphics[width=0.38\textwidth]{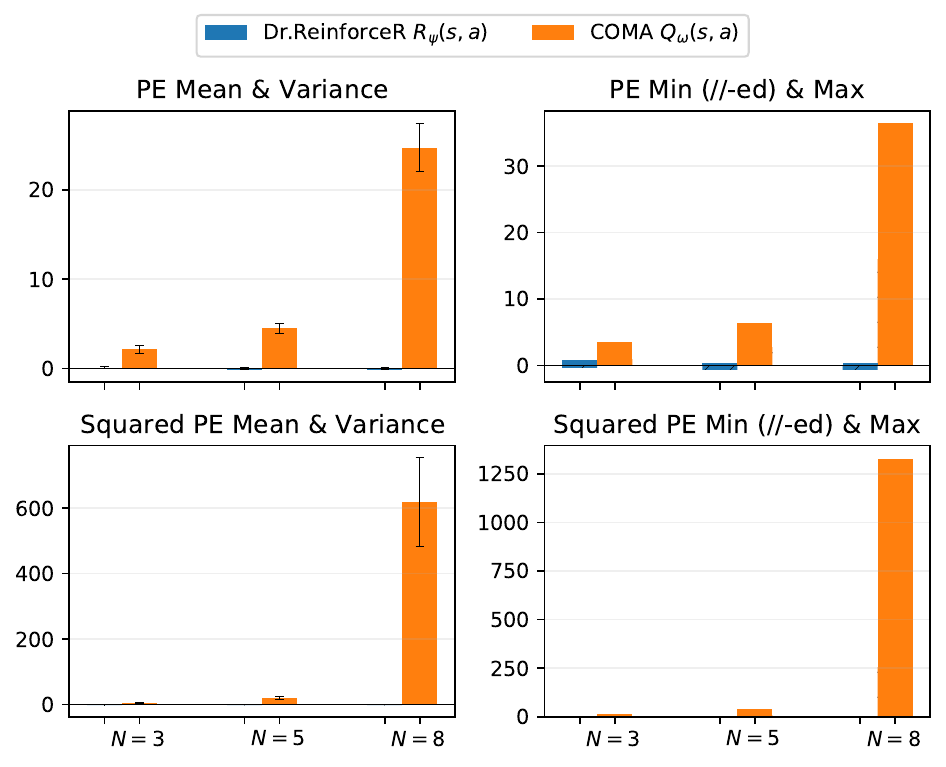}
}
\subfloat[Predator-Prey]{
\includegraphics[width=0.38\textwidth]{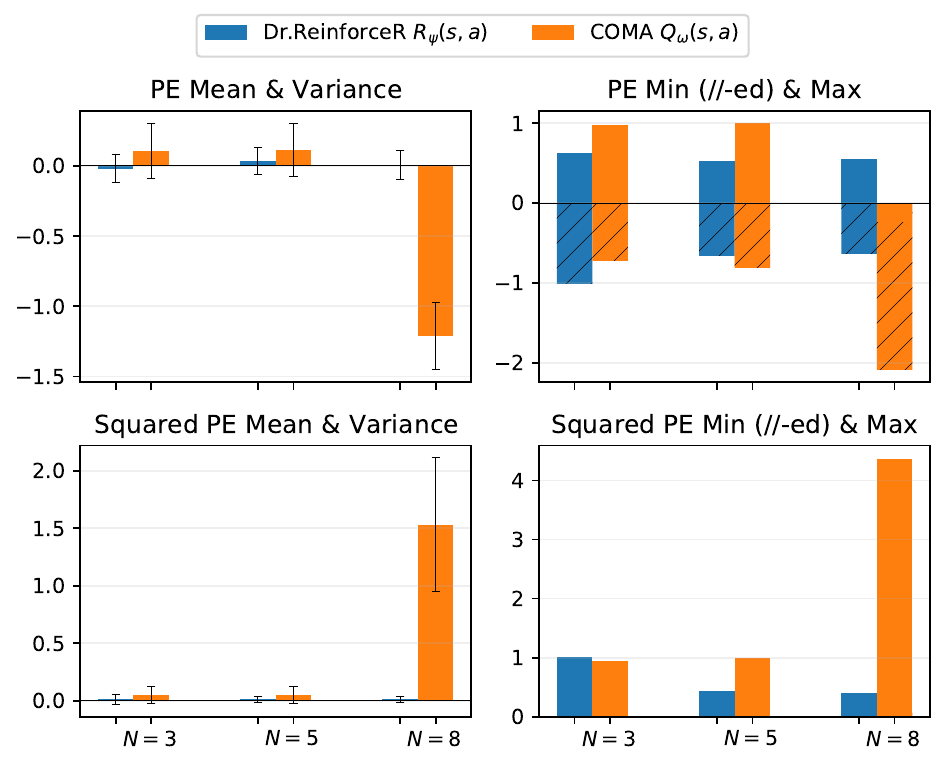}
}
\caption{Distribution statistics for Dr.ReinforceR reward network $R_{\psi}$ and COMA critic $Q_{\omega}$ on the on-policy dataset, normalized by the value of $r_{max}-r_{min}$ (respectively $q_{max}-q_{min}$ for COMA critic), for the two environments.}
\end{figure}

\begin{figure}[htbp]
\centering
\subfloat[Multi-Rover]{
\includegraphics[width=0.38\textwidth]{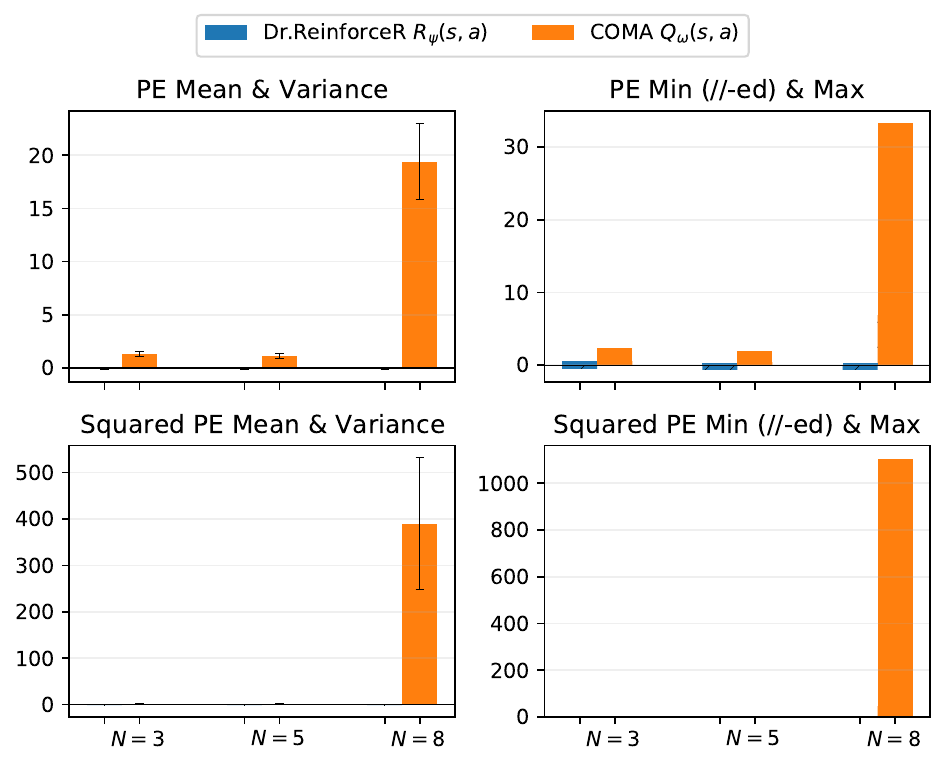}
}
\subfloat[Predator-Prey]{
\includegraphics[width=0.38\textwidth]{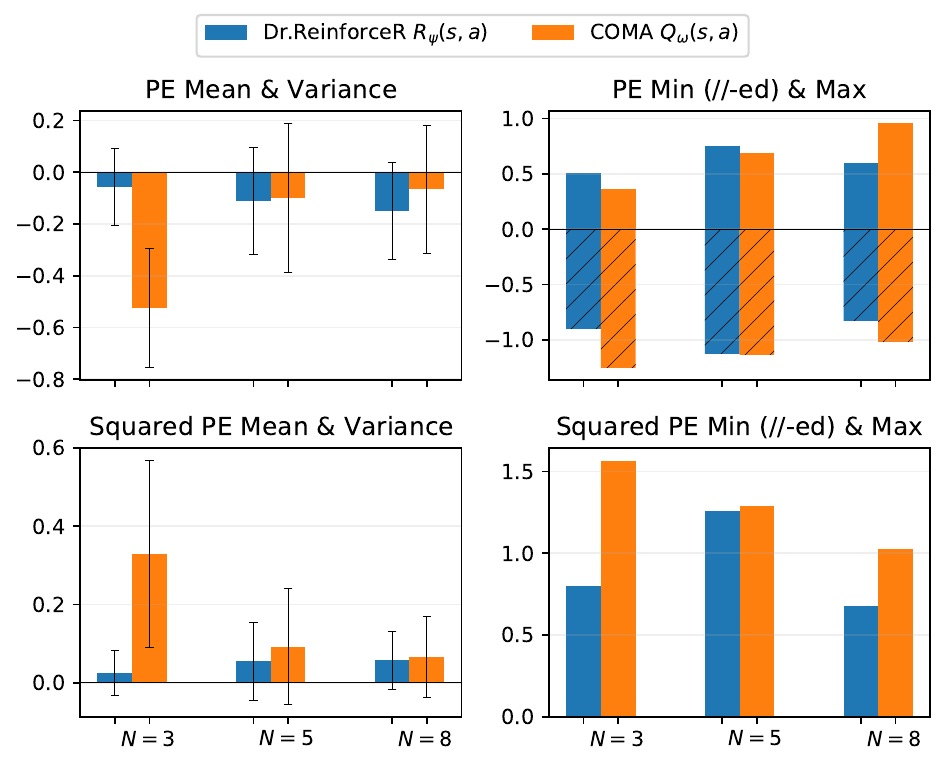}
}
\caption{Distribution statistics for Dr.ReinforceR reward network $R_{\psi}$ and COMA critic $Q_{\omega}$ on the off-policy dataset, normalized by the value of $r_{max}-r_{min}$ (respectively $q_{max}-q_{min}$ for COMA critic), for the two environments.}
\end{figure}

\begin{figure}[H]
\centering
\includegraphics[width=0.52\textwidth]{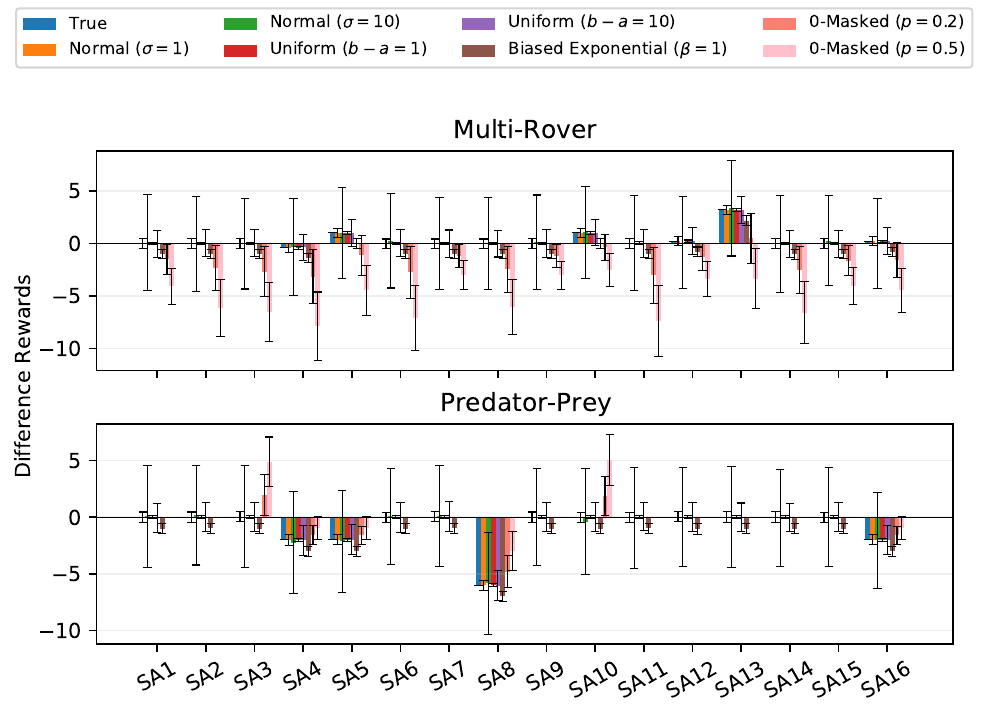}
\caption{Mean and variance of difference rewards for a set of samples under different noise profiles.}
\end{figure}

\clearpage
\section{Additional SMAC Plots}
\label{sec:smac}
\begin{figure}[H]
\vspace{-2\baselineskip}
\centering
\subfloat{
\includegraphics[width=0.4\textwidth]{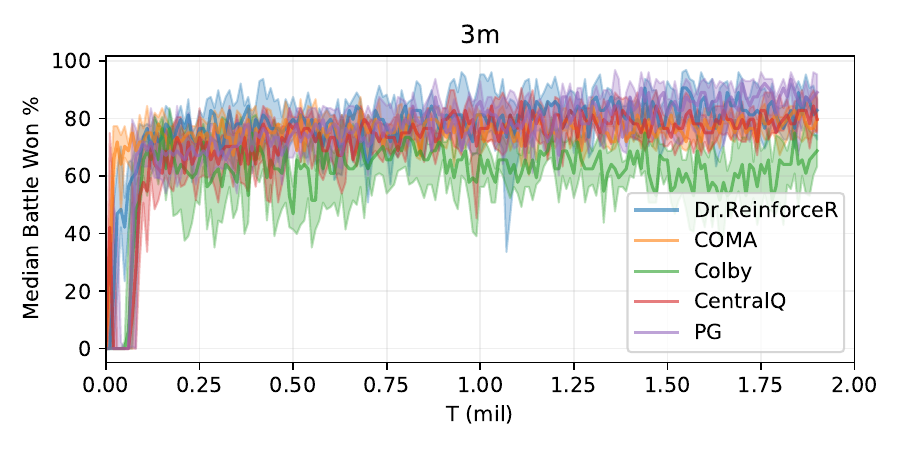}
}
\subfloat{
\includegraphics[width=0.4\textwidth]{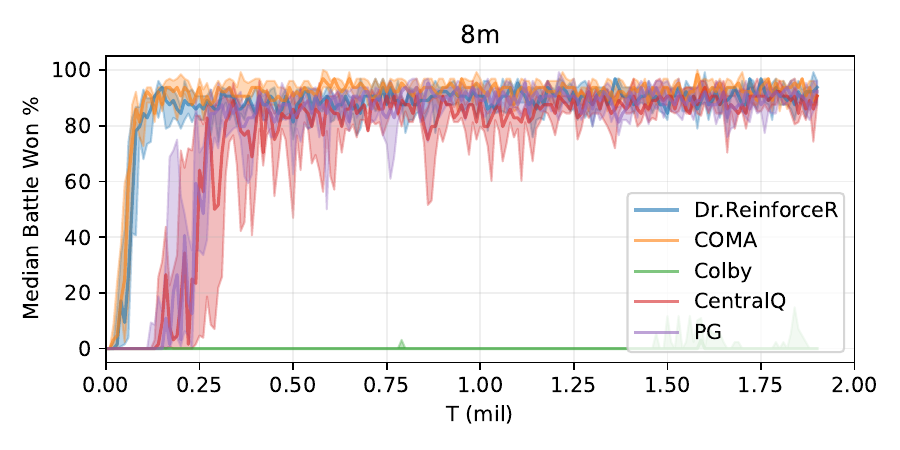}
}

\subfloat{
\includegraphics[width=0.4\textwidth]{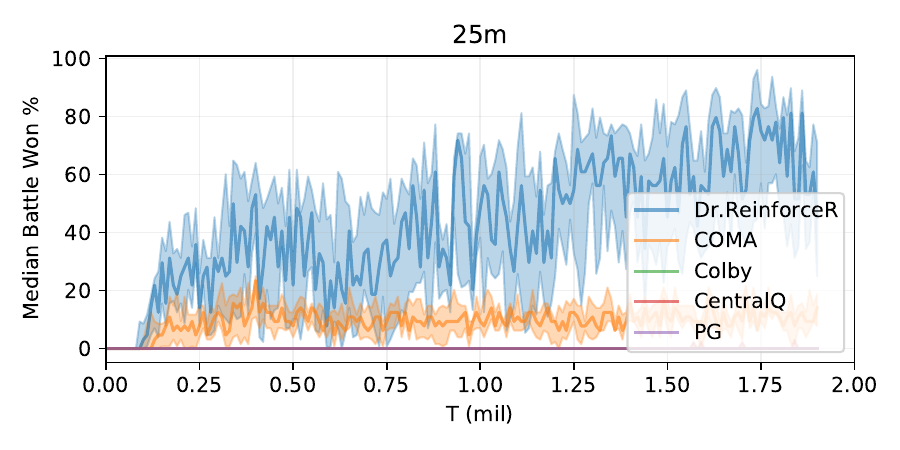}
}
\subfloat{
\includegraphics[width=0.4\textwidth]{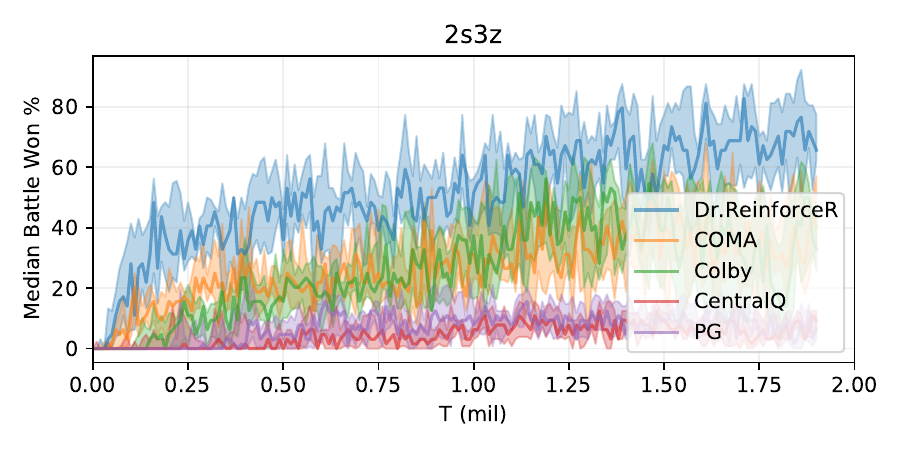}
}

\subfloat{
\includegraphics[width=0.4\textwidth]{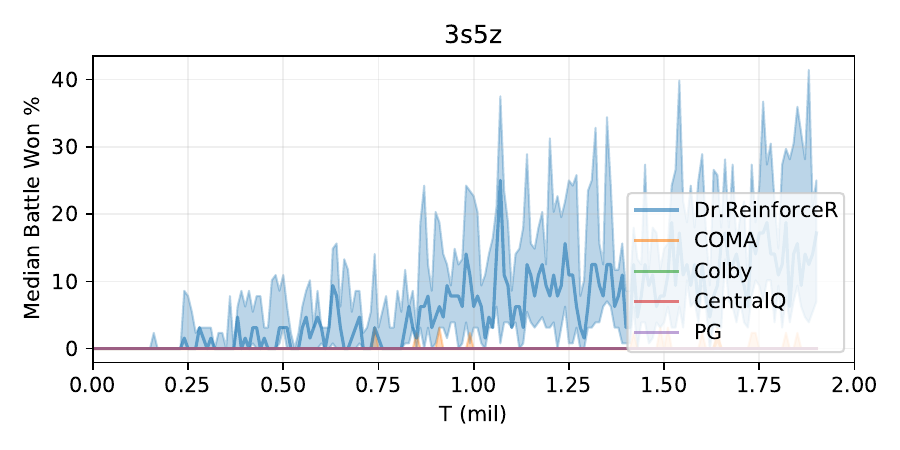}
}
\subfloat{
\includegraphics[width=0.4\textwidth]{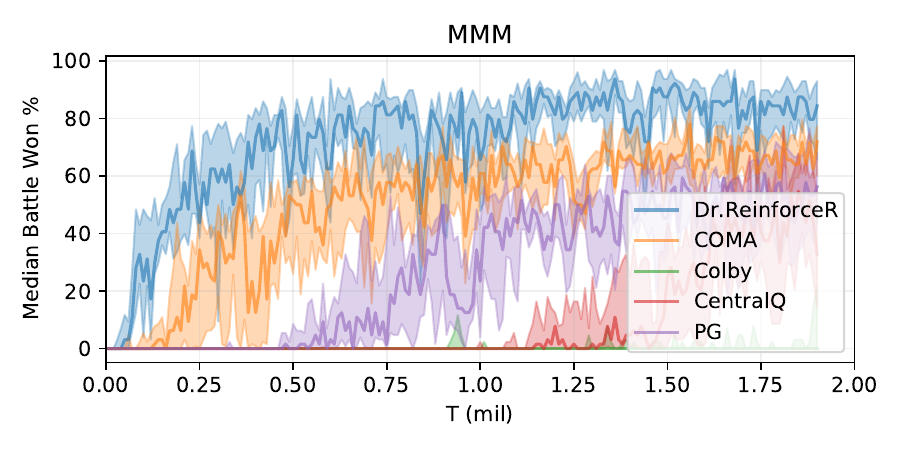}
}

\subfloat{
\includegraphics[width=0.4\textwidth]{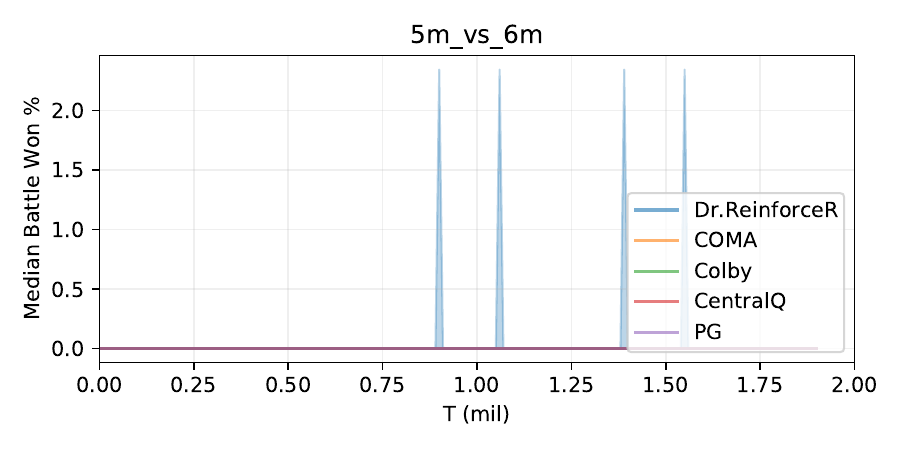}
}
\subfloat{
\includegraphics[width=0.4\textwidth]{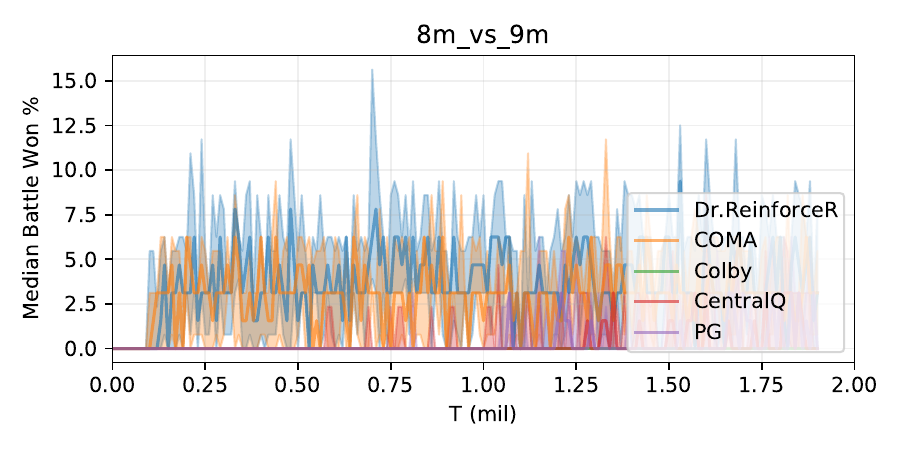}
}

\subfloat{
\includegraphics[width=0.4\textwidth]{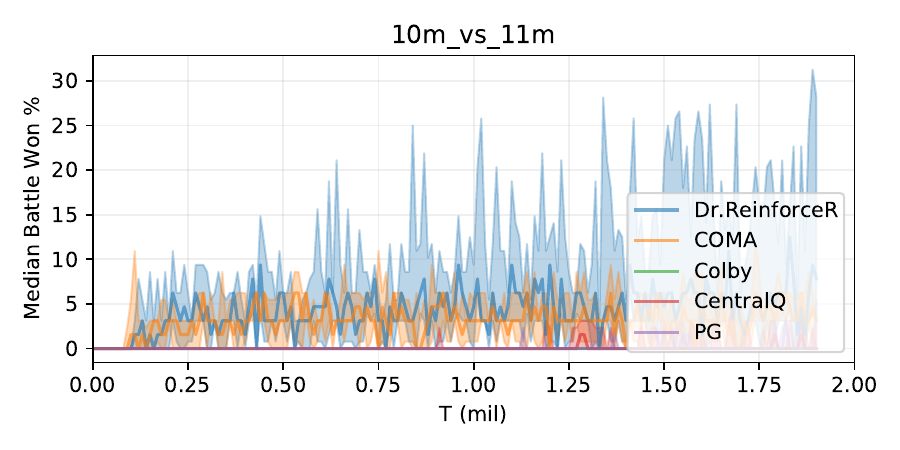}
}
\subfloat{
\includegraphics[width=0.4\textwidth]{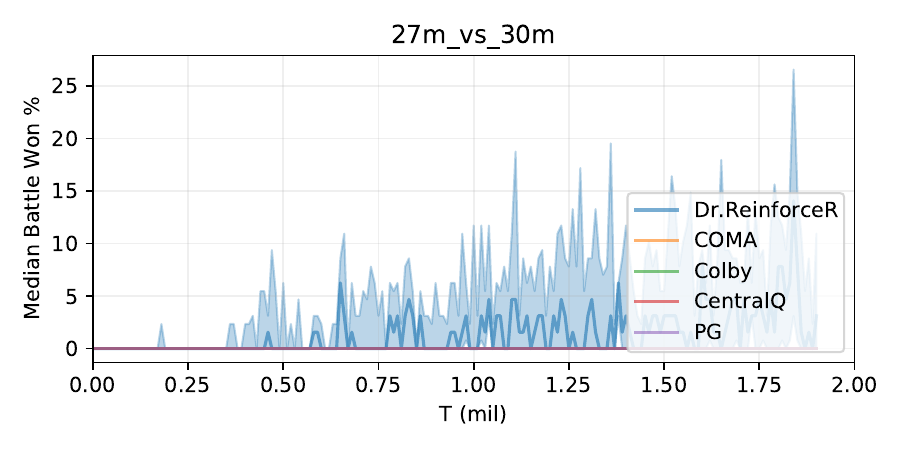}
}

\subfloat{
\includegraphics[width=0.4\textwidth]{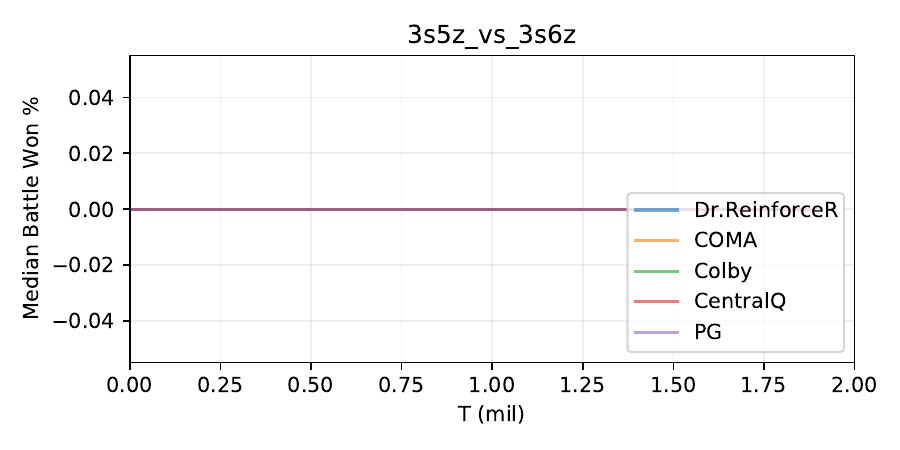}
}
\subfloat{
\includegraphics[width=0.4\textwidth]{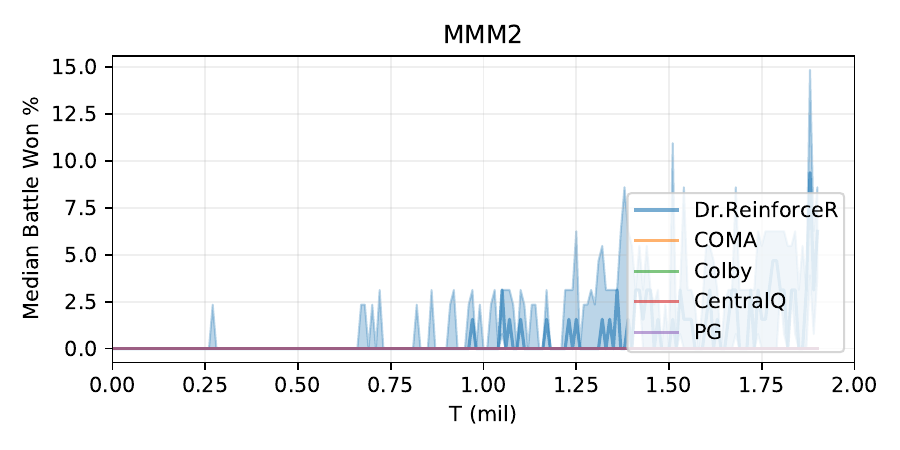}
}
\end{figure}

\begin{figure}[H]
\vspace{-2\baselineskip}
\centering
\subfloat{
\includegraphics[width=0.42\textwidth]{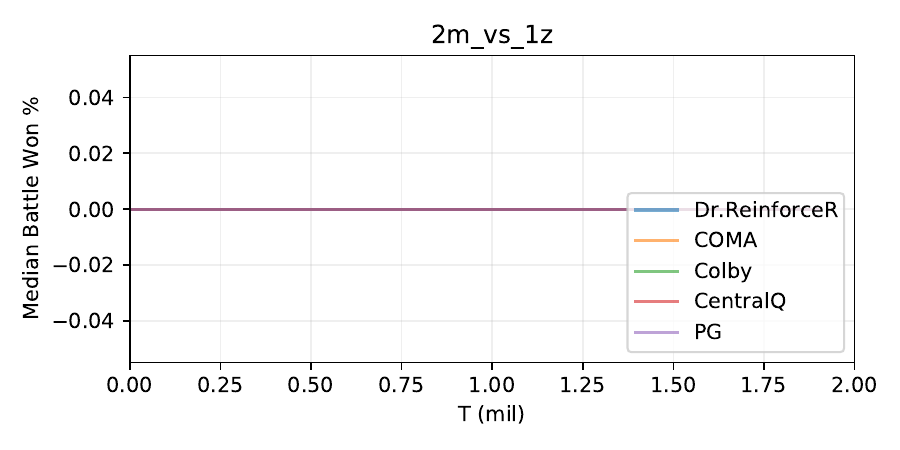}
}
\subfloat{
\includegraphics[width=0.42\textwidth]{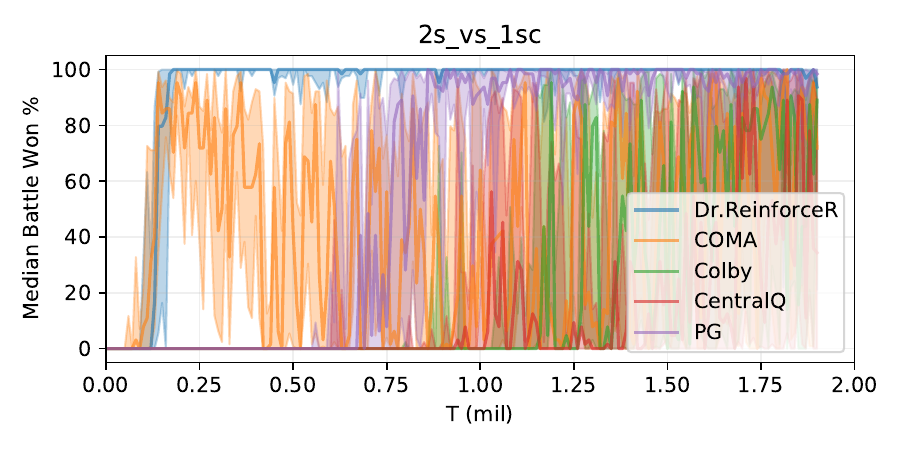}
}

\subfloat{
\includegraphics[width=0.42\textwidth]{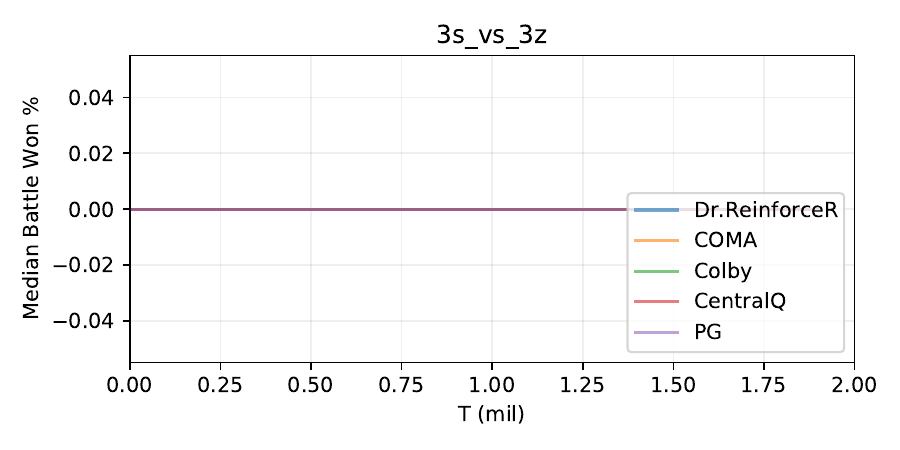}
}
\subfloat{
\includegraphics[width=0.42\textwidth]{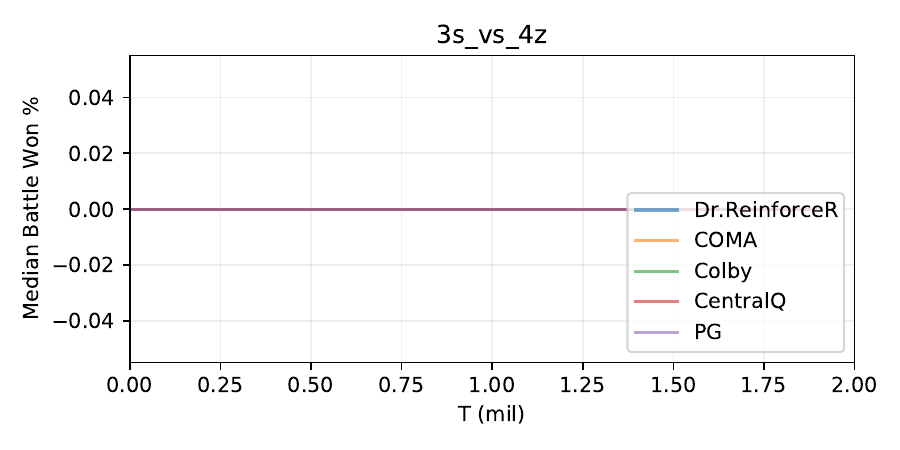}
}

\subfloat{
\includegraphics[width=0.42\textwidth]{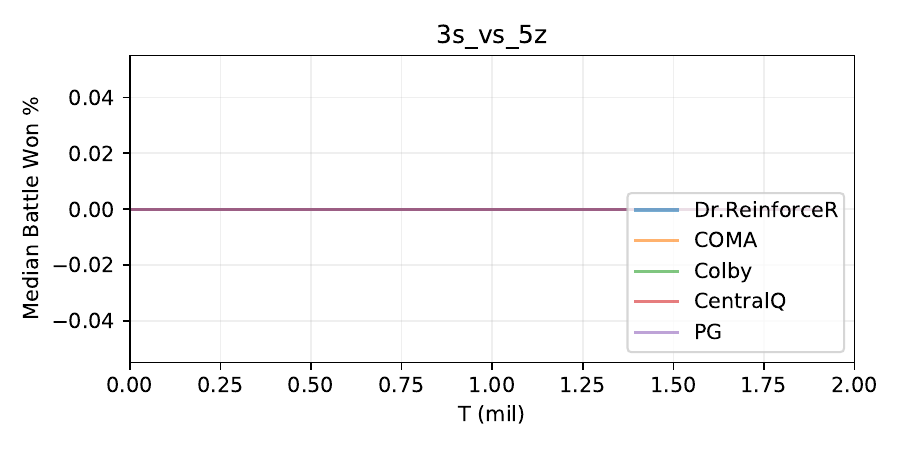}
}
\subfloat{
\includegraphics[width=0.42\textwidth]{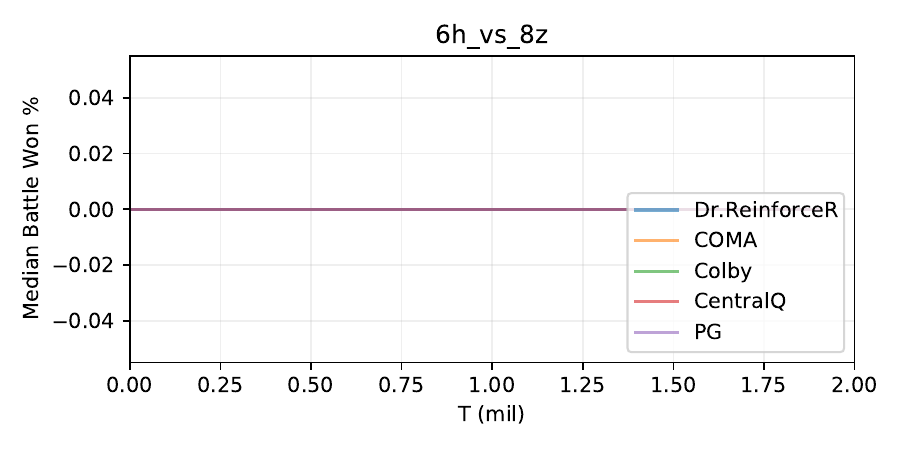}
}

\subfloat{
\includegraphics[width=0.42\textwidth]{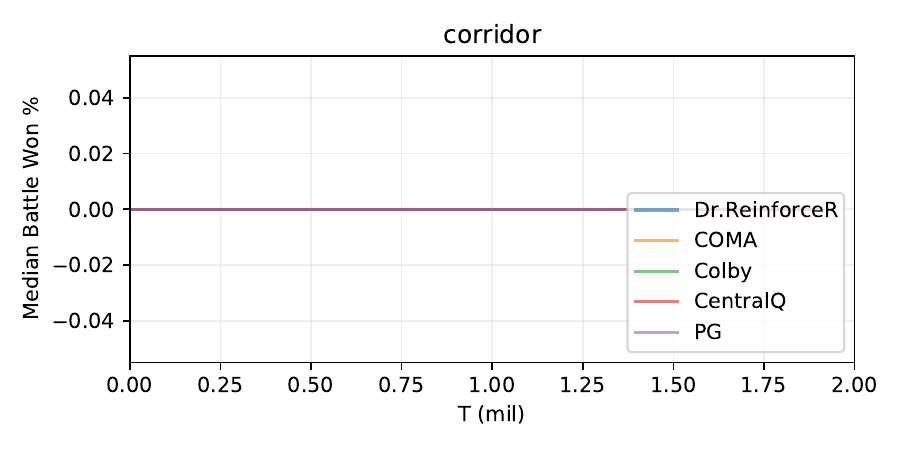}
}
\subfloat{
\includegraphics[width=0.42\textwidth]{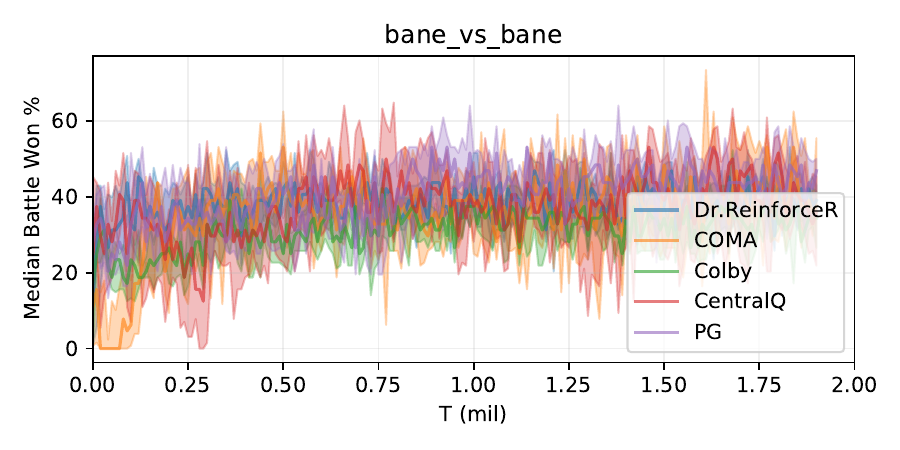}
}

\subfloat{
\includegraphics[width=0.42\textwidth]{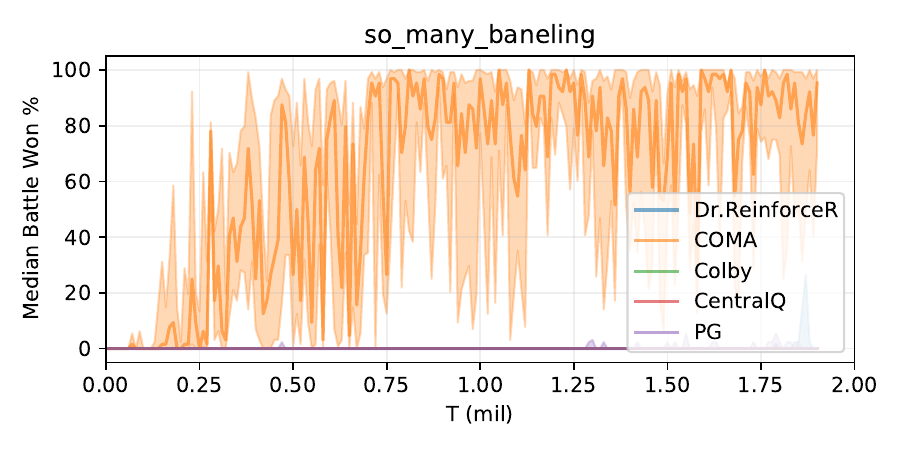}
}
\subfloat{
\includegraphics[width=0.42\textwidth]{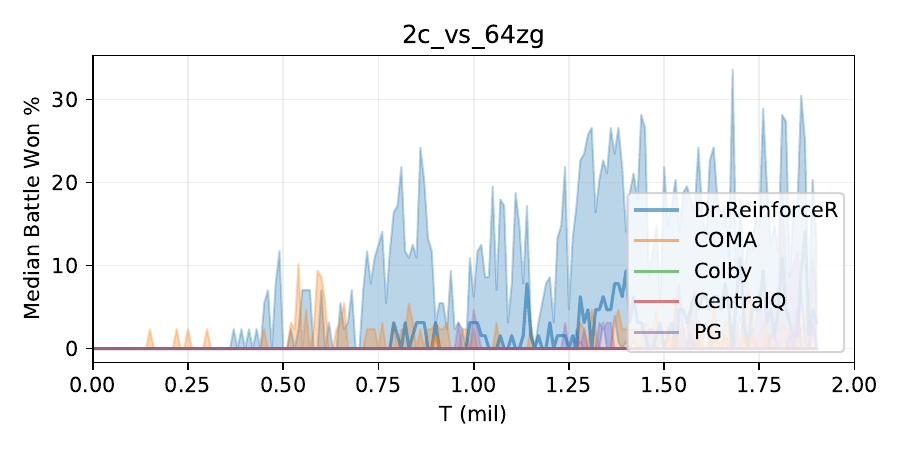}
}

\subfloat{
\includegraphics[width=0.42\textwidth]{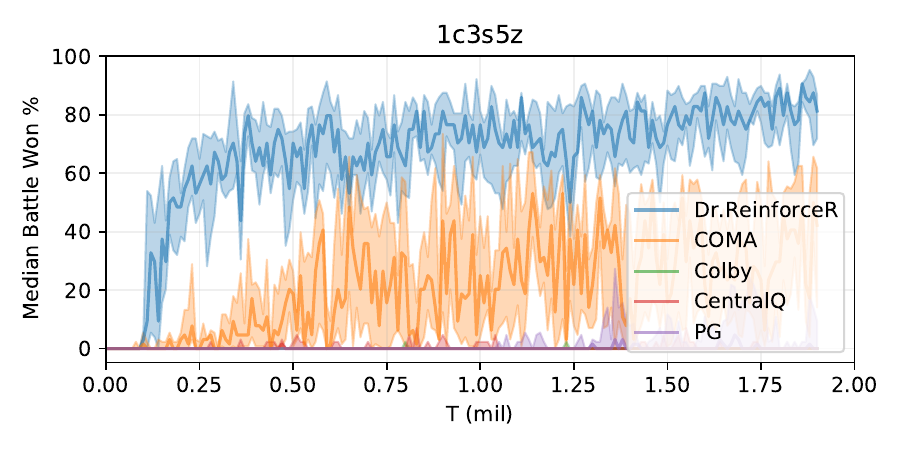}
}
\caption{Training curves on the entire set of SMAC maps, showing the median victory rate and $25-75\%$ percentiles across seeds.}
\end{figure}
\end{appendices}
\end{document}